\newtheorem{theorem}{Theorem}
\newtheorem{proposition}[theorem]{Proposition}
\newtheorem{lemma}[theorem]{Lemma}
\newtheorem{definition}{Definition}
\newtheorem{remark}{Remark}
\DeclareMathOperator{\sign}{sign}
\newcommand{\ts}{&\hspace{-0.08in}}
\newcommand{\nn}{\nonumber}
\newcommand{\beq}{\begin{equation}}
\newcommand{\eeq}{\end{equation}}
\newcommand{\bea}{\begin{eqnarray}}
\newcommand{\eea}{\end{eqnarray}}
\newcommand{\bef}{\begin{flalign}}
\newcommand{\eef}{\end{flalign}}
\begin{document}

\title{Recursive Binary Identification under Data Tampering and Non-Persistent Excitation with Application to Emission Control}

\author{
{Jian Guo\footnote{Department of Applied Mathematics, The Hong Kong Polytechnic University, Kowloon, Hong Kong (\texttt{j.guo@amss.ac.cn}).},
\ Lihong Pei\footnote{State Key Laboratory of Mathematical Sciences, 
Academy of Mathematics and Systems Science, Chinese Academy of Sciences, Beijing 100190, China, 
and School of Mathematical Sciences, University of Chinese Academy of Sciences, Beijing 100049, China 
(\texttt{plh1225@amss.ac.cn}, \texttt{wenchaoxue@amss.ac.cn}, \texttt{ylzhao@amss.ac.cn}).},
\ Wenchao Xue\footnotemark[2],
\ Yanlong Zhao\footnotemark[2],
\ and Ji\mbox{-}Feng Zhang\footnote{School of Automation and Electrical Engineering, 
Zhongyuan University of Technology, Zhengzhou 450007, China; 
State Key Laboratory of Mathematical Sciences, Academy of Mathematics and Systems Science, 
Chinese Academy of Sciences, Beijing 100190, China; 
and School of Mathematical Sciences, University of Chinese Academy of Sciences, Beijing 100049, China 
(\texttt{jif@iss.ac.cn}).}
}}

\maketitle
\begin{abstract} \noindent
This paper studies the problem of online parameter estimation for cyber-physical systems with binary outputs that may be subject to adversarial data tampering. Existing methods are primarily offline and unsuitable for real-time learning. To address this issue, we first develop a first-order gradient-based algorithm that updates parameter estimates recursively using incoming data. Considering that persistent excitation (PE) conditions are difficult to satisfy in feedback control scenarios, a second-order quasi-Newton algorithm is proposed to achieve faster convergence without requiring the PE condition. For both algorithms, corresponding versions are developed to handle known and unknown tampering strategies, and their parameter estimates are proven to converge almost surely over time. In particular, the second-order algorithm ensures convergence under a signal condition that matches the minimal excitation required by classical least-squares estimation in stochastic regression models. The second-order algorithm is also extended to an adaptive control framework, providing an explicit upper bound on the tracking error for binary-output FIR systems under unknown tampering. Three numerical simulations verify the theoretical results and show that the proposed methods are robust against data tampering. Finally, the approach is validated via a vehicle emission control problem, where it effectively improves the detection accuracy of excess-emission events.
\\\\
\noindent {\em Key words\/}: Binary-valued observations, tampering attack, system  identification, convergence rate, adaptive control, vehicle emission control.
\end{abstract}

\section{Introduction}
Cyber–Physical Systems (CPS) connect people, machines, environments, and computational components, enabling close interaction between the physical and digital worlds~\cite{kazemi2021finite,ma2024ctomp}. By combining sensing, communication, computing, and control, CPS can respond to real-time changes, adapt to dynamic environments, and continuously improve system performance. These features make CPS a core technology in many industrial applications, such as automated manufacturing, intelligent transportation, healthcare, energy systems, and smart infrastructure~\cite{lee2015cyber,hahn2013cyber}.

Despite their advantages, CPS often face cybersecurity risks. These vulnerabilities mainly stem from their distributed structures and use of inexpensive sensors and communication devices that frequently lack strong protection~\cite{peng2017optimal,Pasqualetti2013,Kwon2018}. Such weaknesses allow attackers to disrupt operations, making security crucial in CPS design~\cite{fawzi2014secure}. Among various cyber threats, Denial-of-Service (DoS) and Data Tampering attacks are particularly common and harmful~\cite{peng2017optimal}. DoS attacks block or delay data transmission~\cite{pelechrinis2010denial}, while Data Tampering attacks subtly alter transmitted data, making detection difficult~\cite{jovanov2019relaxing}. These attacks can lead to wrong decisions, reduced performance, or system failures, potentially causing production delays, financial losses, and safety incidents~\cite{Oliva2018,Yamada2019}.

Several methods have been proposed to strengthen CPS security. For instance, Bayesian-based defense methods detect deceptive attacks by continuously updating the system's belief about potential threats~\cite{sasahara2023asymptotic,umsonst2023bayesian}. Other studies have developed proactive schemes to protect critical infrastructures, such as power grids, against dynamic attacks~\cite{amini2016dynamic}. Research on data-injection attacks has considered trade-offs between attack detection and control performance~\cite{bai2017data}, and also examined balancing security with privacy in connected systems~\cite{katewa2021security}. Adaptive control techniques have been widely explored to maintain stable system operation even if sensors or actuators are compromised~\cite{jin2017adaptive,yucelen2016adaptive,haddad2025adaptive}. Additionally, secure-by-construction approaches have been developed to design CPS with built-in security measures, reducing vulnerabilities from the start~\cite{liu2022secure,zhong2025secure}.

In addition to security risks, CPS also commonly face challenges related to quantization. Due to cost constraints and limited bandwidth, sensors in many CPS often send quantized measurements rather than exact values. Binary outputs from switching sensors or optical detectors are typical examples~\cite{Wang2010}. While quantization helps reduce data size and bandwidth usage, it also introduces nonlinearities that complicate system identification and control~\cite{Wu2013}. To address this challenge, researchers have developed a variety of identification methods tailored for quantized systems~\cite{Wang2003}, including studies on binary sensor applications~\cite{Guo2015,Bottegal2017,Pouliquen2020}, worst-case estimation techniques~\cite{Casini2011}, and methods aimed at achieving asymptotic efficiency~\cite{wang2024asymptotically}.

Beyond quantization, Data Tampering attacks also further complicate the identification problem in CPS. Researchers have proposed various methods to address such attacks in binary-output systems. For instance, \cite{guo2020system} developed a compensation-based algorithm that ensures strong consistency and asymptotic normality under tampering, later extending it to multi-dimensional systems~\cite{guo2023identification}. To handle packet loss during transmission, \cite{guo2021system} proposed algorithms for both known and unknown loss rates, integrating compensation mechanisms to reduce communication costs. More recently, \cite{guo2023iterative} 
proposed a maximum likelihood estimation (MLE) approach using an iterative Expectation-Maximization algorithm for parameter identification under tampering. Furthermore, \cite{guo2025identification} explored finite impulse response (FIR) systems under random replay attacks, presenting consistent estimation algorithms and asymptotically optimal defense strategies.

Despite these advances, many existing methods rely heavily on offline processes such as empirical measure estimation or MLE, which may not be suitable for real-time CPS that continuously generate data. This gap highlights the need for online identification algorithms that can update estimates dynamically while maintaining robustness against tampering. To address this challenge, this paper extends our previous work~\cite{your_conference_paper} by presenting a unified identification framework. We propose both first-order and second-order recursive algorithms that handle cases with known attack strategies and scenarios requiring online estimation of tampering statistics.

The main contributions of this paper are summarized as follows:

\begin{itemize}
    \item This paper presents a first-order gradient-based algorithm for parameter estimation under data tampering attacks. The convergence properties of the algorithm, including almost sure convergence and mean-square convergence, are established. To handle more realistic scenarios, a periodic extra-insertion defense mechanism is introduced to estimate tampering statistics during the identification process and ensure convergence under unknown attacks.

    \item A projected second-order quasi-Newton algorithm is proposed to handle both known and unknown tampering strategies, achieving faster convergence. Integrated into an adaptive control framework, the algorithm provides explicit asymptotic bounds on the estimation error, tracking error, and cumulative regret. These bounds are established under the weakest excitation condition in~\cite{lai1982least}, without requiring persistent excitation (PE) condition.
    
\item The proposed methods are applied to vehicle emission control, using a dataset of diesel vehicle emissions collected in Hefei, China. Transformer-based features are used as regressors in the set-valued model, where “over-limit” and “non-over-limit” cases are modeled as binary outcomes. Data tampering, such as underreporting excessive emissions, is captured via probabilistic label manipulation. The results show improved accuracy in identifying compliant and non-compliant vehicles, even under adversarial conditions.

\end{itemize}

The paper is organized as follows. Section 2 introduces the system and examines identifiability under tampering attacks. Section 3 presents a first-order gradient-based identification algorithm, establishes its convergence, and proposes a periodic insertion scheme to handle unknown attack strategies. Section 4 develops a second-order Newton-type algorithm and provides theoretical bounds on estimation error, cumulative regret, and tracking performance. Section 5 offers numerical simulations and a real-world case study on vehicle emission control. Section 6 concludes the paper and outlines future directions.


\textbf{Notation:} Let \( \mathbb{Z}_+ \) be the set of positive integers. For a positive integer \( k \), define \( [k] = \{1, 2, \dots, k\} \). For a set \( \mathcal{S} \subset \mathbb{Z}_+ \) and an integer \( l \), define \( l\mathcal{S} = \{i : i = l \times j, j \in \mathcal{S} \} \). For a sequence of sets \( \{\mathcal{S}_k\}_{k\geq 0} \), if \( l > j \), then \( \bigcup_{k=l}^{j} \mathcal{S}_k = \emptyset \).

\section{Model Formulation and Identifiability}
\subsection{Model Description Under Tampering Attack}

This subsection introduces the identification problem of stochastic FIR systems under tampered binary observations. Consider the following system:
\begin{equation} \label{s1}
y_{k+1} = \varphi_k^T \theta + w_{k+1}, \quad k = 0, 1, \ldots,
\end{equation}
where $\theta \in \Theta\subseteq \mathbb{R}^p$ is an unknown but time-invariant parameter vector of known dimension $p$, $\varphi_k \in \mathbb{R}^p$ is the regressor vector composed of current and past input signals, and $w_{k+1}$ is a stochastic noise sequence with zero mean. Let $\mathcal{F}_k$ denote the natural filtration defined as
\begin{equation} \label{filtration}
\mathcal{F}_k \triangleq \sigma\left\{ \varphi_0, \dots, \varphi_k, w_0, \dots, w_k \right\}, \quad k \geq 0.
\end{equation}
In this setting, the system output $y_{k+1}$ is not directly observable. Instead, it is accessed through a binary sensor with a known threshold $C \in \mathbb{R}$, which generates the binary-valued signal:
\begin{equation} \label{s2}
s_{k+1}^0 = I_{[y_{k+1} \leq C]} = 
\begin{cases}
1, & y_{k+1} \leq C, \\
0, & \text{otherwise}.
\end{cases}
\end{equation}
However, this binary signal is transmitted over a potentially compromised communication channel. Due to data tampering attacks, the received signal $s_{k+1}$ may differ from the original $s_{k+1}^0$, and is modeled by the following flipping probabilities:
\begin{equation} \label{s3}
\left\{
\begin{aligned}
\Pr\{ s_k = 0 \mid s_k^0 = 1 \} &= p, \\
\Pr\{ s_k = 1 \mid s_k^0 = 0 \} &= q,
\end{aligned}
\right.
\end{equation}
where $p, q \in [0, 1)$ represent the attacker's flipping strategy. 

\medskip
\noindent

\textbf{Identification Objective.} The objective of this paper is to design a recursive algorithm to estimate the unknown parameter $\theta$ based on the sequence of regressors $\{ \varphi_k \}_{k\geq 0}$ and the possibly tampered binary observations $\{ s_k \}_{k\geq}$. Further, if control actions are required, the estimation result $\hat{\theta}$ from the estimation center is transmitted to the control center. The control center then utilizes this estimated parameter to design the controller input $u_k$, forming a closed-loop feedback system. The flow of information and control within this closed-loop system, incorporating potential data tampering attacks, is illustrated in Fig.~\ref{fig:closed-loop}.


\begin{remark}
In many secure control architectures, the communication network is logically divided into an “downlink” (from the sensor to the estimation center) and a “uplink” (from the control center to the actuator/execution unit). By implementing the uplink over a protected, dedicated channel—such as a private optical‐fiber link, a local bus, or an encrypted virtual tunnel—one can safely assume that the control signal \(u_k\) is not subject to tampering during transmission. As a result, the integrity risk is primarily concentrated on the downlink path that conveys the binary sensor measurements \(s_k^0\) through a potentially untrusted network \cite{zheng2022survey,lee2021downlink}.
\end{remark}

\begin{figure}[t]
    \centering
    \includegraphics[width=0.8\textwidth]{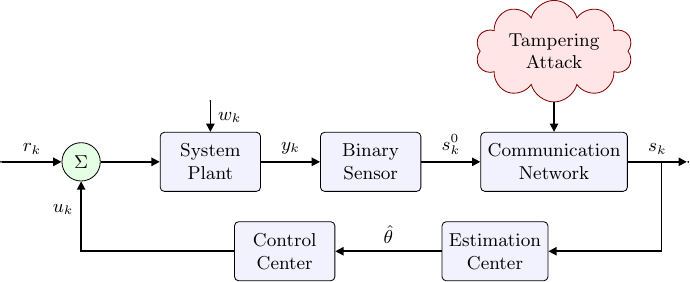}
    \caption{Closed‐loop control system flowchart with tampering attack and estimation/control centers}
    \label{fig:closed-loop}
\end{figure}

\subsection{Identifiability Analysis}
Let $F_k(\cdot)$ denote the conditional distribution function of the noise $w_{k+1}$ given the filtration $\mathcal{F}_k$. Based on the law of total probability and the tampering model in~\eqref{s3}, the conditional probability of observing $s_{k+1} = 0$ can be computed as
\begin{align}
&P(s_{k+1} = 0 \mid \mathcal{F}_k)\notag\\
=& P(s_{k+1}^0 = 1 \mid \mathcal{F}_k) \cdot P(s_{k+1} = 0 \mid s_{k+1}^0 = 1, \mathcal{F}_k) \notag \\
&+ P(s_{k+1}^0 = 0 \mid \mathcal{F}_k) \cdot P(s_{k+1} = 0 \mid s_{k+1}^0 = 0, \mathcal{F}_k) \notag \\
=& p F_k(C - \theta^T \varphi_k) + (1 - q)\left[1 - F_k(C - \theta^T \varphi_k)\right] \notag \\
=& (p + q - 1) F_k(C - \theta^T \varphi_k) + 1 - q. \label{dist1}
\end{align}
Similarly, the conditional probability of $s_{k+1} = 1$ is given by
\begin{align}
P(s_{k+1} = 1 \mid \mathcal{F}_k) 
&= 1 - P(s_{k+1} = 0 \mid \mathcal{F}_k) \notag \\
&= (1 - (p + q)) F_k(C - \theta^T \varphi_k) + q. \label{dist2}
\end{align}

\medskip
\noindent
\textbf{Identifiability.} This subsection discusses the identifiability of the binary-output system defined in \eqref{s1}--\eqref{s3}. From equations~\eqref{dist1}--\eqref{dist2}, it can be observed that when $p + q = 1$, the dependence on the conditional distribution $F_k(\cdot)$ vanishes. In this degenerate case, the distribution of the observations becomes independent of $\theta$, making parameter identification impossible. Therefore, a necessary condition for identifiability is:
$
p + q \neq 1.
$

In addition, to recover the unknown parameter $\theta$ from the sequence of binary observations, the regressor sequence $\{\varphi_k\}$ must be sufficiently informative \cite{ljung1995system}. A formal excitation condition will be specified in subsequent sections, depending on the structure of the identification algorithm.

\section{First-order Gradient Identification Algorithm}
\subsection{Known Attack Strategy}

A first-order recursive identification algorithm for FIR systems under binary observations with data tampering was introduced in earlier work~\cite{your_conference_paper}. This subsection briefly reviews the algorithm, including its motivation, formulation, and main convergence results. Complete proofs are presented in the Appendix.

\subsubsection{Motivation and Algorithm}

Classical identification methods, such as stochastic approximation (SA), yield consistent parameter estimates as the data length increases. However, when only binary observations are available and such data may be tampered, traditional approaches are no longer applicable. To compensate for the limited information, additional assumptions-such as knowledge of the noise distribution-are required.

The key idea is based on the fact that, in system identification, the conditional variance of $w_{k+1}$ is minimized when using the true parameter $\theta$. For the system~\eqref{s1}, we have:
\bea
\mathrm{Var}(w_{k+1}|\mathcal{F}_k)\ts =\ts \mathbb{E}\left[(y_{k+1} - \theta^T \varphi_k)^2 \mid \mathcal{F}_k\right]\nn \\
\ts\leq\ts \mathbb{E}\left[(y_{k+1} - \beta^T \varphi_k)^2 \mid \mathcal{F}_k\right],\ \forall \beta \in \Theta.\nn
\eea
Under appropriate excitation conditions on $\{\varphi_k\}$, the equality can be held if and only if $\beta = \theta$. This observation allows us to reformulate the identification problem as the following stochastic optimization:
\begin{equation}\label{loss1}
\min_{\beta \in \Theta} I(\beta) := \tfrac{1}{2} \mathbb{E} \left[(y_{k+1} - \beta^T \varphi_k)^2 \mid \mathcal{F}_k \right].
\end{equation}

To solve the above problem, stochastic gradient descent (SGD) can be applied by using a single sample at each step. Specifically, the parameter is updated based on the observed pair $(y_{k+1}, \varphi_k)$ as:
\bea
\theta_{k+1} \ts=\ts \theta_k - b_k \nabla_\beta I(\beta, y_{k+1}, \varphi_k) \big|_{\beta = \theta_k} \nn\\
\ts=\ts \theta_k + b_k (y_{k+1} - \theta_k^T \varphi_k)\varphi_k,\nn
\eea
where $\{b_k\}$ is a sequence of diminishing step sizes. This recursion corresponds to the classical first-order stochastic approximation algorithm, which is widely used in system identification~\cite{chen2021stochastic}.

In the presence of tampered binary outputs, as described by the system~\eqref{s1} with binary observations~\eqref{s2} and potential tampering~\eqref{s3}, a similar strategy can be employed to develop a recursive identification algorithm. At each time step $k$, the conditional expectation of the observed binary output $s_{k+1}$ is given by
\begin{equation} \label{mean1}
\mathbb{E}(s_{k+1} \mid \mathcal{F}_k) = (1 - (p + q)) F_k(C - \theta^T \varphi_k) + q.
\end{equation}
As in the case of $y_{k+1}$, the conditional variance of $s_{k+1}$ is minimized when $\beta = \theta$. This also allows the identification problem to be formulated as a stochastic optimization task, using a tampered binary loss function:
\bea 
\ts\ts\min_{\beta \in \Theta} I_{TB}(\beta, s_{k+1}, \varphi_k) \label{loss2}\\
\ts:=\ts \tfrac{1}{2} \mathbb{E} \left( \left( s_{k+1} - \left[(1 - (p + q)) F_k(C - \beta^T \varphi_k) + q \right] \right)^2 \mid \mathcal{F}_k \right).\nn
\eea
The gradient of the loss function with respect to $\beta$ is
\bea
\ts\ts\nabla_\beta I_{TB}(\beta, s_{k+1}, \varphi_k) = - (1 - (p + q))f_k(C - \beta^T \varphi_k)\nn\\
\ts\ts\times\left( (1 - (p + q)) F_k(C - \beta^T \varphi_k) + q - s_{k+1} \right) \varphi_k.\nn
\eea

To ensure that $f_k(C - \beta^T \varphi_k)$ does not approach zero, it is natural to constrain $\beta$ within a compact parameter set $\Theta$. This motivates the use of a projection operator, defined as follows.

\begin{definition}\label{def1}
Let $\Omega \subseteq \mathbb{R}^n$ be a convex compact set. The projection operator $\Pi_{\Omega}(\cdot)$ is defined by
\[
\Pi_{\Omega}(x) = \arg\min_{\omega \in \Omega} \|x - \omega\|, \quad \forall x \in \mathbb{R}^n.
\]
\end{definition}

Based on the above analysis, we propose a first-order recursive projection algorithm, referred to as the \textit{Gradient Recursive Projection Algorithm for Tampered Binary Observations with Known Probabilities (GRP-TB-KP)}.

\begin{remark}
The loss functions in~\eqref{loss1} and~\eqref{loss2} are not limited to variance-type or squared error forms. In general, any differentiable loss function $L(\beta, y_{k+1}, \varphi_k)$ can be used, provided it satisfies
$
L(\beta, y_{k+1}, \varphi_k) \geq L(\theta, y_{k+1}, \varphi_k),
$
which guarantees that the minimum is achieved at the true parameter $\theta$. Under this condition, stochastic gradient descent can still be employed to obtain consistent estimates. An important example is the negative log-likelihood function.
\end{remark}

\begin{algorithm}[htb]\label{algorithm1}
\caption{GRP-TB-KP}
\begin{algorithmic}[1]
\REQUIRE Initial estimate $\hat{\theta}_1 \in \Theta$, step-size sequence $\{b_k\}_{k \geq 1}$, constant $\beta > 0$
\FOR{$k = 1, 2, \dots$}
    \STATE Compute:
    \begin{align}
    &\tilde{s}_{k+1} = \beta(1 - (p+q)) \times \left((1 - (p+q)) F_k(C - \hat{\theta}_k^T \varphi_k) + q - s_{k+1} \right). \label{algo1}
  \end{align}
    \STATE Update parameter estimate:
    \beq\label{algo2}
    \hat{\theta}_{k+1} = \Pi_{\Theta}\left\{\hat{\theta}_{k} + b_k \varphi_k \tilde{s}_{k+1}\right\}.
    \eeq
\ENDFOR
\end{algorithmic}
\end{algorithm}

\subsubsection{Convergence Properties}

The convergence of the proposed algorithm is established under the following assumptions.

\noindent\textbf{Assumption A1.} The unknown parameter satisfies \(\theta \in \Theta \subseteq \mathbb{R}^n\), where \(\Theta\) is a convex compact set. Let \(B := \sup_{\nu \in \Theta} \|\nu\|\), where \(\|\cdot\|\) denotes the Euclidean norm.

\noindent\textbf{Assumption A2.} The conditional distribution and density functions of the observation noise sequence \(\{w_k\}\), given \(\mathcal{F}_{k}\), are known and denoted by \(F_k(\cdot)\) and \(f_k(\cdot)\), respectively.

\noindent\textbf{Assumption A3.} 
(a) The regressor sequence \(\{\varphi_k\}\) satisfies
\begin{equation} \label{Asm31}
\sup_{k \geq 1} \|\varphi_k\| \leq M < \infty.
\end{equation}
(b) There exist an integer \(h \geq p\) and a constant \(\delta > 0\) such that
\begin{equation} \label{Asm32}
\tfrac{1}{h}\mathbb{E}\left[\left.\sum_{l=k}^{k+h-1} \varphi_l \varphi_l^T \right| \mathcal{F}_{k-1}\right] \geq \delta I, \quad \forall k \geq 1,
\end{equation}
where \(I\) is the \(p \times p\) identity matrix.

\noindent\textbf{Assumption A4.} The step-size sequence \(\{b_k\}\) satisfies:
\[
\sum_{k=1}^{\infty} b_k = \infty, \quad \lim_{k \to \infty} b_k = 0, \quad \text{and} \quad b_k = O(b_{k+1}).
\]

\noindent\textbf{Assumption A5.} The conditional density functions \(\{f_k(\cdot)\}\) of the noise \(\{w_k\}\), given \(\mathcal{F}_{k-1}\), satisfy
\[
\underline{f} := \inf_{k \geq 1} \inf_{|x| \leq C + MB} f_k(x) > 0.
\]

\begin{remark}
Assumption A2 ensures that the statistical properties of the observation noise are known in advance, which is a standard assumption in the literature on binary identification (e.g.,~\cite{Wang2003,Guo2013}). Assumption A3 places conditions on the input sequence \(\{\varphi_k\}\). Condition~\eqref{Asm32}, known as the “conditionally expected sufficiently rich condition,” ensures that the regressor sequence contains enough variability to allow parameter identifiability. Compared to the classical PE condition, it is weaker but still sufficient for convergence.
\end{remark}

\begin{remark}
Assumption A4 is standard in stochastic approximation and online optimization. These conditions ensure that the step size remains effective over time, decays gradually, and avoids premature vanishing. They strike a balance between convergence stability and parameter adaptation. Assumption A5 guarantees that the noise density does not vanish within the relevant domain, preventing degenerate cases. The constant \(\underline{f}\) provides a uniform lower bound, ensuring that the noise remains sufficiently dispersed in the estimation region.
\end{remark}

Denote the estimation error by $\tilde{\theta}_k = \hat{\theta}_k - \theta,\  k = 1, 2, \dots$. The following theorems establish the almost sure convergence, mean-square convergence, and convergence rate of the proposed GRP-TB-KP algorithm.

\begin{theorem}[Convergence]\label{thm1}
Consider system~\eqref{s1} with binary-valued observations~\eqref{s2} and tampering attacks~\eqref{s3}. Under Assumptions A1–A5, the parameter estimate generated by the algorithm~\eqref{algo1}–\eqref{algo2} satisfies
\[
\lim_{k \to \infty} \mathbb{E}[\|\tilde{\theta}_k\|^2] = 0.
\]
Moreover, if $\sum_{k=1}^{\infty} b_k^2 < \infty$, then the estimate $\hat{\theta}_k$ also converges almost surely to the true parameter:
\[
\lim_{k \to \infty} \tilde{\theta}_k = 0, \quad \text{a.s.}
\]
\end{theorem}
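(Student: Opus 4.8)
The plan is to analyze the stochastic recursion for the estimation error $\tilde{\theta}_k$ directly and apply a stochastic Lyapunov / supermartingale argument. First I would write, using the nonexpansiveness of the projection $\Pi_\Theta$ (which is crucial here since $\theta \in \Theta$, so $\Pi_\Theta(\theta)=\theta$) and \eqref{algo1}–\eqref{algo2},
\[
\|\tilde{\theta}_{k+1}\|^2 \le \|\tilde{\theta}_k + b_k \varphi_k \tilde{s}_{k+1}\|^2 = \|\tilde{\theta}_k\|^2 + 2 b_k \tilde{\theta}_k^T \varphi_k \tilde{s}_{k+1} + b_k^2 \|\varphi_k\|^2 \tilde{s}_{k+1}^2.
\]
Then I would take conditional expectation given $\mathcal{F}_k$. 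The key observation is that $\mathbb{E}[\tilde{s}_{k+1}\mid\mathcal{F}_k] = \beta(1-(p+q))^2\big(F_k(C-\hat\theta_k^T\varphi_k) - F_k(C-\theta^T\varphi_k)\big)$ by \eqref{mean1}, which by the mean value theorem equals $-\beta(1-(p+q))^2 f_k(\xi_k)\,\varphi_k^T\tilde\theta_k$ for some intermediate point $\xi_k$ with $|\xi_k|\le C+MB$. Hence the cross term becomes
\[
2 b_k \tilde{\theta}_k^T \varphi_k \,\mathbb{E}[\tilde s_{k+1}\mid\mathcal{F}_k] = -2 b_k \beta(1-(p+q))^2 f_k(\xi_k)\, (\varphi_k^T\tilde\theta_k)^2 \le -2 b_k \beta(1-(p+q))^2 \underline f\, (\varphi_k^T\tilde\theta_k)^2,
\]
using Assumption A5 and $p+q\neq 1$; this is the dissipative term. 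The noise term $b_k^2\|\varphi_k\|^2\tilde s_{k+1}^2$ is bounded by $c\, b_k^2$ using Assumption A3(a) and boundedness of $\tilde s_{k+1}$ (it is a bounded function of $F_k\in[0,1]$, $q$, and $s_{k+1}\in\{0,1\}$).

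The one-step cross term only controls $(\varphi_k^T\tilde\theta_k)^2$ for the single regressor $\varphi_k$, not $\|\tilde\theta_k\|^2$; this is where Assumption A3(b), the conditionally-expected richness over a window of length $h$, must be brought in, and I expect this to be the main obstacle. The plan is to iterate the inequality over a block of $h$ steps: summing from $k$ to $k+h-1$ and using $b_k = O(b_{k+1})$ (Assumption A4) to replace the individual $b_l$'s by a comparable $b_k$ up to a constant, I would get
\[
\mathbb{E}\big[\|\tilde\theta_{k+h}\|^2 \mid \mathcal{F}_k\big] \le \|\tilde\theta_k\|^2 - c_1 b_k\, \mathbb{E}\!\left[\sum_{l=k}^{k+h-1}(\varphi_l^T\tilde\theta_l)^2 \,\Big|\,\mathcal{F}_k\right] + c_2 b_k^2.
\]
The subtlety is that $\tilde\theta_l$ for $l>k$ is not $\mathcal{F}_k$-measurable and differs from $\tilde\theta_k$; I would control $\|\tilde\theta_l - \tilde\theta_k\| \le \sum_{j=k}^{l-1} b_j \|\varphi_j\tilde s_{j+1}\| = O(b_k)$ (again via A3(a), boundedness of $\tilde s$, and A4), so that $\sum_l (\varphi_l^T\tilde\theta_l)^2 \ge \tfrac12\sum_l(\varphi_l^T\tilde\theta_k)^2 - O(b_k^2)$, and then $\mathbb{E}[\sum_l \varphi_l\varphi_l^T\mid\mathcal{F}_k]\ge h\delta I$ from \eqref{Asm32} gives $\mathbb{E}[\sum_l(\varphi_l^T\tilde\theta_k)^2\mid\mathcal{F}_k] \ge h\delta\|\tilde\theta_k\|^2$. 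This yields the contractive block inequality
\[
\mathbb{E}\big[\|\tilde\theta_{k+h}\|^2 \mid \mathcal{F}_k\big] \le (1 - c_3 b_k)\|\tilde\theta_k\|^2 + c_4 b_k^2
\]
for all large $k$ (so that $c_3 b_k < 1$).

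From the block inequality the two conclusions follow by standard arguments, which I would only sketch. For mean-square convergence: take full expectations, let $V_k = \mathbb{E}\|\tilde\theta_k\|^2$, so $V_{k+h} \le (1-c_3 b_k)V_k + c_4 b_k^2$ along each of the $h$ residue classes mod $h$; since $\sum b_k = \infty$ forces $\prod(1-c_3 b_k)\to 0$ and $b_k\to 0$, a Chung-type lemma (or Kronecker's lemma applied appropriately) gives $V_k\to 0$. For almost sure convergence under the extra hypothesis $\sum b_k^2 <\infty$: the block inequality shows that $\|\tilde\theta_{k+h}\|^2 + c_4\sum_{j\ge k} b_j^2$ is, along each residue class, a nonnegative almost-supermartingale in the sense of Robbins–Siegmund, so $\|\tilde\theta_k\|^2$ converges a.s. to some limit $V_\infty\ge 0$ and $\sum_k b_k\|\tilde\theta_k\|^2 < \infty$ a.s.; combined with $\sum b_k = \infty$ this forces $\liminf\|\tilde\theta_k\|^2 = 0$, hence $V_\infty = 0$ a.s. The main work and the place where the assumptions interlock most delicately is the derivation of the block contraction — specifically, handling the drift of $\tilde\theta_l$ away from $\tilde\theta_k$ within a window while keeping all error terms of order $b_k^2$ so they remain summable.
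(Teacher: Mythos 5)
Your proposal follows essentially the same route as the paper's proof: non-expansiveness of the projection, the conditional mean of $\tilde{s}_{k+1}$ combined with the mean value theorem and Assumption A5 to produce the dissipative term, a block iteration over $h$ steps with the drift $\|\tilde{\theta}_l-\tilde{\theta}_k\|=O(b_k)$ controlled exactly as in the paper's Lemma~\ref{lem1}, Assumption A3(b) to obtain the block contraction, and a Chung-type lemma for the mean-square limit. The only (immaterial) difference is the closing a.s.\ step: you invoke Robbins--Siegmund on the block inequality to get $\sum_k b_k\|\tilde{\theta}_k\|^2<\infty$ and identify the limit via the $\liminf$, whereas the paper applies a supermartingale convergence lemma to the one-step inequality and identifies the limit using the already-established $L^2$ convergence; both are standard and correct.
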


\begin{theorem}[Convergence Rate]\label{thm2}
Under Assumptions (A1)-(A5), the algorithm~\eqref{algo1}–\eqref{algo2} achieves the following mean-square convergence rates:
\begin{itemize}
    \item If the step size is chosen as \( b_k = \frac{1}{k^\gamma} \) with \( \frac{1}{2} < \gamma < 1 \), then
    \[
    \mathbb{E}[\|\tilde{\theta}_k\|^2] = O\left( \frac{1}{k^\gamma} \right).
    \]
    
    \item If \( b_k = \frac{1}{k} \) and the gain parameter satisfies \( \beta > \frac{1}{2(1 - p - q)^2 \, \underline{f} \delta} \), then
    \[
    \mathbb{E}[\|\tilde{\theta}_k\|^2] = O\left( \frac{1}{k} \right).
    \]
\end{itemize}
\end{theorem}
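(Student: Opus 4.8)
The plan is to derive a scalar recursion for the mean-square error $e_k := \mathbb{E}[\|\tilde{\theta}_k\|^2]$ and then apply a standard deterministic recursion lemma to read off the rate. I would start from the update~\eqref{algo2}. Since $\theta \in \Theta$ and $\Pi_\Theta$ is nonexpansive, $\|\tilde\theta_{k+1}\|^2 \le \|\hat\theta_k + b_k\varphi_k\tilde s_{k+1} - \theta\|^2 = \|\tilde\theta_k\|^2 + 2b_k\tilde s_{k+1}\varphi_k^T\tilde\theta_k + b_k^2\tilde s_{k+1}^2\|\varphi_k\|^2$. Taking $\mathbb{E}[\cdot\mid\mathcal{F}_k]$ and using~\eqref{mean1}, the conditional mean of $\tilde s_{k+1}$ is $\beta(1-(p+q))^2\bigl(F_k(C-\hat\theta_k^T\varphi_k) - F_k(C-\theta^T\varphi_k)\bigr)$. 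By the mean value theorem this equals $-\beta(1-(p+q))^2 f_k(\xi_k)\,\varphi_k^T\tilde\theta_k$ for some intermediate point $\xi_k$ with $|\xi_k|\le C+MB$, so Assumption~A5 gives $f_k(\xi_k)\ge\underline f$. Hence the cross term satisfies $\mathbb{E}[2b_k\tilde s_{k+1}\varphi_k^T\tilde\theta_k\mid\mathcal{F}_k] \le -2b_k\beta(1-p-q)^2\underline f\,(\varphi_k^T\tilde\theta_k)^2$, while $\tilde s_{k+1}$ and $\|\varphi_k\|$ are uniformly bounded (by $\beta$ times a constant, and by $M$) so the second-order term is $O(b_k^2)$.

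The main obstacle is that the contraction term above involves $(\varphi_k^T\tilde\theta_k)^2$ for a single $k$, which can vanish even when $\tilde\theta_k\neq 0$; the excitation is only guaranteed in the averaged-over-$h$-steps sense of~\eqref{Asm32}. To handle this I would iterate the one-step inequality over a window of length $h$: telescoping from $k$ to $k+h$ and using $b_k = O(b_{k+1})$ (Assumption~A4) to treat all step sizes in the block as comparable to $b_k$, one obtains
\[
e_{k+h} \le e_k - 2b_k\beta(1-p-q)^2\underline f\,\mathbb{E}\!\Bigl[\sum_{l=k}^{k+h-1}(\varphi_l^T\tilde\theta_l)^2\Bigr] + O(hb_k^2) + (\text{cross terms}).
\]
The delicate point is that in $\mathbb{E}[(\varphi_l^T\tilde\theta_l)^2]$ the error $\tilde\theta_l$ has drifted from $\tilde\theta_k$; but $\|\tilde\theta_l-\tilde\theta_k\|\le \sum_{j=k}^{l-1} b_j\|\varphi_j\tilde s_{j+1}\| = O(hb_k)$, so replacing $\tilde\theta_l$ by $\tilde\theta_k$ inside the quadratic form costs only $O(b_k)\cdot\mathbb{E}\|\tilde\theta_k\| + O(h^2b_k^2)$ relative error. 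After this substitution, the conditioning in~\eqref{Asm32} (on $\mathcal{F}_{k-1}$, with $\tilde\theta_k$ being $\mathcal{F}_{k-1}$-measurable) yields $\mathbb{E}[\sum_{l=k}^{k+h-1}(\varphi_l^T\tilde\theta_l)^2] \ge h\delta\,e_k - O(b_k)e_k - O(h^2b_k^2)$.

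Combining these estimates gives a clean block recursion of the form
\[
e_{k+h} \le \bigl(1 - c\,b_k + O(b_k^2)\bigr)e_k + O(b_k^2), \qquad c := 2h\beta(1-p-q)^2\underline f\delta,
\]
valid along each residue class modulo $h$. I would then invoke the standard lemma on recursions $x_{n+1}\le(1-c_n)x_n+d_n$: if $b_k=k^{-\gamma}$ with $\tfrac12<\gamma<1$ then $\sum b_k=\infty$, $b_k\to0$, and $d_n=O(b_k^2)=o(b_k)$ with $b_k$ monotone, which forces $e_k=O(b_k)=O(k^{-\gamma})$; if $b_k=1/k$ then the effective contraction rate per step is $1-c/(hk)$ with $c/h = 2\beta(1-p-q)^2\underline f\delta$, and the hypothesis $\beta > \bigl(2(1-p-q)^2\underline f\delta\bigr)^{-1}$ guarantees $c/h>1$, which is exactly the threshold condition under which the recursion $e_{k+1}\le(1-\alpha/k)e_k+O(1/k^2)$ with $\alpha>1$ yields $e_k=O(1/k)$. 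Care is needed to convert the block recursion (indexed by steps of $h$) back to a statement for every $k$: since within any block $e$ changes by at most a factor $1+O(b_k)$, the rate $O(b_k)$ transfers to all indices. The only genuinely technical part is bookkeeping the cross terms and the drift corrections so that every remainder is $O(b_k^2)e_k$ or $O(b_k^2)$ and none is merely $O(b_k)e_k$ with a constant that could overwhelm the contraction — but since $b_k\to0$, such terms are absorbed for $k$ large, which suffices for the asymptotic rate.
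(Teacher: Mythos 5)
Your proposal follows essentially the same route as the paper: nonexpansiveness of the projection to get the one-step mean-square inequality, the mean value theorem plus Assumption A5 to turn the conditional mean of $\tilde s_{k+1}$ into a negative quadratic in $\varphi_k^T\tilde\theta_k$, an $h$-step block iteration with the drift bound $\|\tilde\theta_{k+l}-\tilde\theta_k\|=O(b_k)$ to invoke the averaged excitation condition~\eqref{Asm32}, and then a Chung-type recursion lemma giving $O(k^{-\gamma})$ for $\gamma\in(\tfrac12,1)$ and $O(1/k)$ under the threshold $\alpha=2\beta(1-p-q)^2\underline f\delta>1$. The only difference is presentational (you cite a standard recursion lemma where the paper unrolls the products explicitly via its Lemmas~\ref{lem3} and~\ref{lem4}), so the argument is correct and matches the paper's proof.
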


\begin{proof}
See Appendix I.
\end{proof}

\subsection{The Attack Strategy is Unknown}



In practice, the tampering probabilities \((p, q)\), which characterize the attack strategy, are typically unknown to the estimator, making existing methods that require their values inapplicable. To address this, we adopt a more practical approach where \((p, q)\) are estimated online along with the system parameters. Following Theorem~\ref{thm1} and the idea in~\cite{guo2020system}, we substitute the real-time estimates of \((p, q)\) into the identification algorithm.

\subsubsection{Algorithm Design}

We now describe the design of the algorithm, starting with its core idea.

\textbf{Main idea.} The approach introduces a known binary sequence (consisting of 0s and 1s) into the transmission channel. By comparing the received sequence with the original one, the estimation center can estimate the tampering probabilities $(p, q)$ using the law of large numbers.

To implement this, we propose a periodic extra-insertion scheme. Known binary signals are inserted at fixed locations, and the corresponding received signals are used to estimate the tampering behavior before proceeding with identification. Specifically, given a period \( T \), define two non-empty, disjoint subsets \( \mathcal{T}_0 \) and \( \mathcal{T}_1 \) of the index set \( \mathcal{T} = \{1, 2, \ldots, T\} \).

During the \( l \)-th period, for each time \( k \in \{(l-1)T + 1, \ldots, lT\} \), the binary output \( s_k^0 \) is first transmitted and possibly tampered as described in~\eqref{s3}. In addition, if \( k - lT \in \mathcal{T}_0 \), a known signal \( s_{k + 1/2}^0 = 0 \) is transmitted; if \( k - lT \in \mathcal{T}_1 \), then \( s_{k + 1/2}^0 = 1 \) is sent. The received signal \( s_{k + 1/2} \) then follows the tampering model:
\begin{equation} \label{attack_p}
\left\{
\begin{aligned}
\Pr\{ s_{k+1/2} = 0 \mid s_{k+1/2}^0 = 1 \} &= p, \\
\Pr\{ s_{k+1/2} = 1 \mid s_{k+1/2}^0 = 0 \} &= q.
\end{aligned}
\right.
\end{equation}

At each time \( k \), define the index sets
\bea
\mathcal{S}_k^0 \ts =\ts \left( \bigcup_{i=1}^{l-1} i\mathcal{T}_0 \right) \bigcup \left( [k - (l-1)T] \cap\mathcal{T}_0 \right),\nn\\
\mathcal{S}_k^1 \ts=\ts \left( \bigcup_{i=1}^{l-1} i\mathcal{T}_1 \right) \bigcup \left( [k - (l-1)T] \cap\mathcal{T}_1 \right).\label{record_set}
\eea

The tampering probabilities can then be estimated as
\bea
\hat{q}_k \ts=\ts \frac{1}{|\mathcal{S}_k^0|} \sum_{i \in \mathcal{S}_k^0} s_{i + 1/2}\nn\\
\ts=\ts\frac{1}{l|\mathcal{T}_0|+|[k - (l-1)T] \cap\mathcal{T}_0|} \sum_{i \in \mathcal{S}_k^0} s_{i + 1/2}, \nn\\
\hat{p}_k \ts=\ts \frac{1}{|\mathcal{S}_k^1|} \sum_{i \in \mathcal{S}_k^1} (1 - s_{i + 1/2})\nn\\
\ts=\ts\frac{1}{l|\mathcal{T}_1|+|[k - (l-1)T] \cap\mathcal{T}_1|} \sum_{i \in \mathcal{S}_k^1} (1-s_{i + 1/2}).\label{estimate_pq}
\eea
These estimates \( \hat{p}_k \) and \( \hat{q}_k \) are then used in the identification algorithm to replace the unknown true values at time \( k \).

Based on the above discussion, we propose the \textit{Gradient Recursive Projection Algorithm for Tampered Binary Observations with Unknown Probabilities (GRP-TB-UP)}.

\begin{remark}
In Algorithm~\ref{algorithm2}, the sets \( \mathcal{T}_0 \) and \( \mathcal{T}_1 \) are designed for online estimation of the attack probabilities \( p \) and \( q \), and are predefined and known to both the system and the estimation center. The sets \( \mathcal{S}^0 \) and \( \mathcal{S}^1 \) correspond to \( \mathcal{S}_k^0 \) and \( \mathcal{S}_k^1 \) as defined in~\eqref{record_set}. The estimation center incrementally records the positions of inserted signals and their corresponding received values, which are then used to estimate the tampering probabilities. The entire algorithm is recursive and operates in an online manner.
\end{remark}

\begin{remark}
In practice, the fixed insertion sets \( \mathcal{T}_0 \) and \( \mathcal{T}_1 \) may become known to the attacker. To enhance robustness, one may adopt varying insertion positions over time, denoted by \( \mathcal{T}_{0l} \) and \( \mathcal{T}_{1l} \) for the \( l \)-th period. As long as the condition
\[
\sum_{j=1}^{l} |\mathcal{T}_{0j}| = O(l), \quad \sum_{j=1}^{l} |\mathcal{T}_{1j}| = O(l)
\]
is satisfied, the theoretical guarantees of the algorithm remain valid.
\end{remark}

\begin{algorithm}[htbp]
\caption{GRP-TB-UP}
\label{algorithm2}
\begin{algorithmic}[1]
\REQUIRE Period \( T \); disjoint subsets \( \mathcal{T}_0, \mathcal{T}_1 \subseteq \mathcal{T} = \{1, 2, \dots, T\} \); initialization: \( \mathcal{S}^0 = \mathcal{S}^1 = \emptyset \), parameter estimate \( \hat{\theta}_1^u \in \Theta \).
\FOR{$l = 1, 2, \dots$}
    \FOR{$k = (l-1)T + 1, \dots, lT$}
        \STATE \textbf{Step 1: Data reception}
        \STATE Transmit the observed data \( s_k^0 \)
        \IF{$k - (l-1)T \in \mathcal{T}_0$}
            \STATE Transmit \( s_{k + 1/2}^0 = 0 \); receive \( s_{k + 1/2} \); update \( \mathcal{S}_0 = \mathcal{S}_0 \cup \{k\} \)
        \ELSIF{$k - (l-1)T \in \mathcal{T}_1$}
            \STATE Transmit \( s_{k + 1/2}^0 = 1 \); receive \( s_{k + 1/2} \); update \( \mathcal{S}_1 = \mathcal{S}_1 \cup \{k\} \)
        \ENDIF
        \STATE \textbf{Step 2: Attack probability estimation}
        \STATE Compute the estimates of \( p \) and \( q \):
        \[
        \hat{q}_k = \frac{1}{|\mathcal{S}^0|} \sum_{i \in \mathcal{S}^0} s_{i + 1/2},\quad
        \hat{p}_k = \frac{1}{|\mathcal{S}^1|} \sum_{i \in \mathcal{S}^1} (1 - s_{i + 1/2})
        \]
        \STATE \textbf{Step 3: Recursive projected identification}
        \STATE Compute:
        \bea
        \ts\ts\tilde{s}_{k+1}^u = \beta(1 - (\hat{p}_k + \hat{q}_k))\label{eq:algo1_mod}\times\left( (1 - (\hat{p}_k + \hat{q}_k)) F_k(C - \hat{\theta}_k^{uT} \varphi_k) + \hat{q}_k - s_{k+1} \right)\nn\\
        \ts\ts\ 
        \eea
        \STATE Update the parameter estimate:
        \begin{equation}
        \hat{\theta}_{k+1}^u = \Pi_{\Theta} \left\{ \hat{\theta}_k^u + b_k \varphi_k \tilde{s}_{k+1}^u \right\}
        \label{eq:algo2_mod}
        \end{equation}
    \ENDFOR
\ENDFOR
\end{algorithmic}
\end{algorithm}

\subsubsection{Convergence analysis}

Denote the estimation error by $\tilde{\theta}_k^u = \hat{\theta}_k^u - \theta,\  k = 1, \dots$. Due to the additional transmission of inserted data, the filtration $\mathcal{F}_k$ in~\eqref{filtration} is modified as
\[
\mathcal{F}_k \triangleq \sigma\left\{ \varphi_i, w_i, s_i \right\}_{i<k+1}, \quad k \geq 0,
\]
which ensures that $s_{k+1/2}$ is $\mathcal{F}_k$-measurable.

The following result establishes the convergence of Algorithm~\ref{algorithm2}.

\begin{theorem}\label{thm3}
Consider the system~\eqref{s1} with binary-valued observation~\eqref{s2}, under the defense scheme~\eqref{attack_p}–\eqref{record_set} and the data tampering model~\eqref{s3}. Let the step size be $b_k = \frac{1}{k^\gamma}$ with $1/2 < \gamma \leq 1$. If Assumptions A1–A3 and A5 hold, then Algorithm~\ref{algorithm2} converges almost surely.
\end{theorem}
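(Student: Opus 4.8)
The strategy is to reduce the unknown-probability case to the known-probability case of Theorem~\ref{thm1} by controlling the estimation error of $(\hat p_k,\hat q_k)$. First I would analyze the auxiliary estimators in~\eqref{estimate_pq}. Since the inserted signals $s_{i+1/2}^0$ are deterministic and known, and the tampering in~\eqref{attack_p} flips them independently with probabilities $p$ (for the $\mathcal T_1$-slots) and $q$ (for the $\mathcal T_0$-slots), the summands $\{s_{i+1/2}\}_{i\in\mathcal S_k^0}$ and $\{1-s_{i+1/2}\}_{i\in\mathcal S_k^1}$ are i.i.d.\ Bernoulli random variables with means $q$ and $p$ respectively. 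Because one new index is adjoined to $\mathcal S_k^0$ or $\mathcal S_k^1$ every period and $|\mathcal T_0|,|\mathcal T_1|\geq 1$, we have $|\mathcal S_k^0|,|\mathcal S_k^1|\to\infty$ at rate $\Theta(k)$. Hence by the strong law of large numbers (or a Borel--Cantelli argument via Hoeffding's inequality, which additionally gives a rate), $\hat p_k\to p$ and $\hat q_k\to q$ almost surely; in fact one can get $|\hat p_k-p|+|\hat q_k-q| = O(k^{-1/2}\sqrt{\log k})$ a.s.

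Next I would rewrite the recursion~\eqref{eq:algo1_mod}--\eqref{eq:algo2_mod} as a perturbation of the known-probability recursion~\eqref{algo1}--\eqref{algo2}. Writing $\varepsilon_k := (\hat p_k+\hat q_k)-(p+q)$, the quantity $\tilde s_{k+1}^u$ differs from the ideal $\tilde s_{k+1}$ (with true $p,q$) by a term that is, uniformly over $\beta\in\Theta$ and using Assumption A3(a) and boundedness of $F_k$, bounded by $c\,|\varepsilon_k|$ for a deterministic constant $c$ (the map $(p,q)\mapsto \beta(1-(p+q))[(1-(p+q))F_k(\cdot)+q-s_{k+1}]$ is Lipschitz on $[0,1]^2$ with a bound independent of $k$). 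So~\eqref{eq:algo2_mod} becomes
\[
\hat\theta_{k+1}^u = \Pi_\Theta\!\left\{\hat\theta_k^u + b_k\varphi_k\bigl(\tilde s_{k+1}(\hat\theta_k^u) + r_{k+1}\bigr)\right\},\qquad \|r_{k+1}\|\leq c\,|\varepsilon_k|.
\]
The plan is then to run the Lyapunov/martingale argument underlying Theorem~\ref{thm1} on $V_k=\|\tilde\theta_k^u\|^2$: the projection is non-expansive (Definition~\ref{def1}), the "clean" drift term supplies the usual negative contribution $-c'b_k\|\tilde\theta_k^u\|^2$ coming from monotonicity of $\beta\mapsto (1-(p+q))F_k(C-\beta^T\varphi_k)$ together with Assumptions A3(b) and A5, the clean martingale-difference noise is square-summable after multiplication by $b_k^2$ (since $b_k=k^{-\gamma}$, $\gamma>1/2$), and the new perturbation contributes an extra term bounded by $C b_k\|\tilde\theta_k^u\|\,|\varepsilon_k| + Cb_k^2|\varepsilon_k|^2$. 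Using $b_k\|\tilde\theta_k^u\||\varepsilon_k|\leq \tfrac{\eta}{2}b_k\|\tilde\theta_k^u\|^2 + \tfrac{1}{2\eta}b_k|\varepsilon_k|^2$ and choosing $\eta$ small, this extra term is absorbed into the negative drift up to the summable residual $\sum_k b_k|\varepsilon_k|^2 <\infty$ a.s.\ (which holds since $|\varepsilon_k|^2=O(k^{-1}\log k)$ and $b_k=O(1)$—more carefully $\sum b_k|\varepsilon_k|^2$ converges because $b_k|\varepsilon_k|^2 = O(k^{-\gamma-1}\log k)$). Robbins--Siegmund then yields $\tilde\theta_k^u\to 0$ a.s.

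The main obstacle is handling the coupling between the two estimation processes: $\varepsilon_k$ is $\mathcal F_k$-measurable (by the redefinition of the filtration ensuring $s_{k+1/2}$ is $\mathcal F_k$-measurable), so $r_{k+1}$ is \emph{not} a martingale difference and must be treated as a genuine perturbation rather than averaged away. This is why the argument needs the almost-sure summability $\sum_k b_k|\varepsilon_k|^2<\infty$ and, for the cross term, a pathwise Robbins--Siegmund-type supermartingale inequality rather than the cleaner estimates available in the known-probability proof. A secondary point requiring care is that the conditional density evaluation $f_k(C-\beta^T\varphi_k)$ and the lower bound $\underline f$ in Assumption A5 must be controlled on the region $|x|\le C+MB$ regardless of the (now perturbed) effective slope $1-(\hat p_k+\hat q_k)$; since $\hat p_k+\hat q_k\to p+q\neq 1$ a.s., the effective slope is eventually bounded away from zero, but one must note this only holds for $k$ large, so the Lyapunov recursion is set up from a (random) finite time onward, which is harmless for almost-sure convergence.
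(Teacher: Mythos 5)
Your proposal is correct and follows essentially the same route as the paper: establish almost-sure consistency of $(\hat p_k,\hat q_k)$ with a $O(k^{-1/2}\sqrt{\log k})$-type rate, decompose the update as the known-probability recursion plus a perturbation $\epsilon_{k+1}$ of that order, verify that the resulting extra terms in the Lyapunov recursion are summable because $\gamma>1/2$, and then rerun the block-excitation argument of Theorem~\ref{thm1} together with a Robbins--Siegmund/supermartingale step to conclude. The only cosmetic differences are that the paper invokes the law of the iterated logarithm rather than Hoeffding for the rate, bounds the cross term directly via compactness of $\Theta$ rather than by Young's inequality, and makes explicit the final combination of mean-square convergence (via the $h$-block recursion and Lemma~\ref{lem2}) with almost-sure convergence of $\|\tilde\theta_k^u\|$ to identify the limit as zero.
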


\begin{proof}
From~\eqref{attack_p}, we have
\beq\label{prop_pq}
P\{s_{k+1/2} = 1\} =
\begin{cases}
q, & k\mod T \in \mathcal{T}_0, \\
1 - p, & k\mod T \in \mathcal{T}_1.
\end{cases}
\eeq
Thus, by the law of large numbers and~\eqref{record_set}–\eqref{estimate_pq}, it follows that
\beq\label{3thm1}
\hat{q}_k \to q,\quad \hat{p}_k \to p, \quad \text{a.s. as } k \to \infty.
\eeq
In addition, by the law of the iterated logarithm,
\beq\label{3thm2}
\begin{aligned}
|q_k - q| &= O\left( \sqrt{ \frac{\log\log \lfloor \frac{k}{T} \rfloor}{\lfloor \frac{k}{T} \rfloor} } \right) = O\left( \sqrt{ \frac{\log\log k}{k} } \right), \\
|p_k - p| &= O\left( \sqrt{ \frac{\log\log k}{k} } \right).
\end{aligned}
\eeq
Define the error term \( \epsilon_{k+1} = \tilde{s}_{k+1}^u - \tilde{s}_{k+1} \). Then from~\eqref{algo1}, \eqref{eq:algo1_mod}, and the above convergence rates,
\beq\label{3thm3}
\epsilon_{k+1} \to 0, \quad |\epsilon_{k+1}| = O\left( \sqrt{ \frac{\log\log k}{k} } \right).
\eeq
Now consider the update rule in~\eqref{eq:algo2_mod}. Using the non-expansiveness of projection operators, and Proposition~\ref{prop1} in Appendix I, we get
\bea
\|\tilde{\theta}_{k+1}^u\|^2 \ts=\ts \left\| \Pi_{\Theta}\left\{ \hat{\theta}_{k}^u + b_k \varphi_k \tilde{s}_{k+1} + b_k \varphi_k \epsilon_{k+1} \right\} - \theta \right\|^2 \nn \\
\ts\leq\ts \left\| \tilde{\theta}_{k}^u + b_k \varphi_k \tilde{s}_{k+1} + b_k \varphi_k \epsilon_{k+1} \right\|^2 \label{3thm4}\nn \\
\ts=\ts \|\tilde{\theta}_{k}^u\|^2 + 2b_k \tilde{s}_{k+1} \varphi_k^T \tilde{\theta}_{k}^u + O(b_k^2) + O(b_k \epsilon_{k+1}).\nn\\
\ts\ts\ 
\eea
Substituting~\eqref{3thm3} and noting that \( b_k = \frac{1}{k^\gamma} \) with \( \gamma > \frac{1}{2} \), we obtain
\bea
\ts\ts\|\tilde{\theta}_{k+1}^u\|^2\nn\\
\ts\leq\ts \|\tilde{\theta}_{k}^u\|^2 + 2b_k \tilde{s}_{k+1} \varphi_k^T \tilde{\theta}_{k}^u + O\left( \frac{1}{k^{2\gamma}} \right) + O\left( \frac{\sqrt{\log\log k}}{k^{\gamma + 1/2}} \right) \nn \\
\ts=\ts \|\tilde{\theta}_{k}^u\|^2 + 2b_k \tilde{s}_{k+1} \varphi_k^T \tilde{\theta}_{k}^u + O\left( \frac{\sqrt{\log\log k}}{k^{\gamma + 1/2}} \right). \label{3thm5}
\eea
Let \( B_k := \frac{\sqrt{\log\log k}}{k^{\gamma + 1/2}} \). Clearly, \( B_k = O(B_{k+1}) \) and \( b_k^2 = O(B_k) \). From~\eqref{eq:algo1_mod}, we have
\beq\label{3thm6}
\mathbb{E}[\tilde{s}_{k+1} \mid \mathcal{F}_{k}] = \beta(1 - p - q)^2 \left( F(C - \varphi_k^T \hat{\theta}_k^u) - F(C - \varphi_k^T \theta) \right).
\eeq
Proceeding as in Theorem~\ref{thm1}, we derive
\bea
\ts\ts\mathbb{E}[\|\tilde{\theta}_{k+h}^u\|^2] \label{3thm7}\\
\ts=\ts \left(1 - 2\beta(1 - p - q)^2 \underline{f} \delta \sum_{l=k}^{k+h-1} b_l \right)\mathbb{E}[\|\tilde{\theta}_k^u\|^2] + O(B_{k+h}).\nn
\eea
Now observe that $\sum_{k=1}^{\infty} b_k = \infty$ and
\[
\lim_{k \to \infty} \frac{B_k}{\sum_{l=k-h}^{k-1} b_{l+1}} \leq \lim_{k \to \infty} \frac{B_k}{h b_k} = \lim_{k \to \infty} \frac{\sqrt{\log\log k}}{h k^{1/2}} = 0.
\]
Then, by Lemma~\ref{lem2}, we conclude
$
\lim_{k \to \infty} \mathbb{E}[\|\tilde{\theta}_k^u\|^2] = 0.
$
Furthermore, using~\eqref{3thm5} and~\cite[Lemma 1.2.2]{chen2021stochastic}, since
\[
\mathbb{E}[\|\tilde{\theta}_{k+1}^u\|^2 \mid \mathcal{F}_k] \leq \|\tilde{\theta}_k^u\|^2 + O(B_k),\ \sum_{k=1}^\infty B_k < \infty,
\]
we obtain that \( \|\tilde{\theta}_k^u\| \) converges almost surely to a finite limit. Combined with the fact that \( \mathbb{E}[\|\tilde{\theta}_k^u\|^2] \to 0 \), we conclude that \( \tilde{\theta}_k^u \to 0 \) almost surely.
\end{proof}

\section{Second-order Newton Identification Algorithm}

In the previous sections, we proposed gradient-based recursive projection algorithms for both known and unknown attack strategies. These methods adopt scalar step sizes and rely on first-order information, which facilitates implementation. However, as pointed out in~\cite{ljung1983theory}, such first-order methods often suffer from slow convergence, especially in high-dimensional problems or when parameter sensitivities vary significantly. This limitation arises from their inability to utilize curvature information in the cost function.

To address this, we introduce second-order Newton-type algorithms that employ matrix-based gains to adaptively scale the update direction. These algorithms incorporate curvature information and typically achieve faster convergence and improved estimation accuracy.

\subsection{The Attack Strategy is Known}

We first consider the case where the tampering probabilities \((p, q)\) are known. To estimate the unknown parameter \(\theta\) in the binary-valued system~\eqref{s1}–\eqref{s2} under the attack model~\eqref{s3}, we develop a quasi-Newton recursive identification algorithm. A projection step is included to ensure the boundedness of the estimates and to guarantee convergence. Before presenting the algorithm, we introduce the following definitions.

\begin{definition}
For the Euclidean space $\mathbb{R}^p$ ($p \geq 1$), the weighted norm $\|\cdot\|_Q$ associated with a positive definite matrix $Q$ is defined as
\begin{equation}\label{proj_Q}
\|x\|_Q^2 = x^T Q x, \quad \forall x \in \mathbb{R}^p.
\end{equation}
\end{definition}

\begin{definition}
For a convex compact set $\mathcal{D} \subseteq \mathbb{R}^p$ and a positive definite matrix $Q$, the projection operator $\Pi_Q(\cdot)$ is defined as
\begin{equation}
\Pi_Q(x) = \arg\min_{\omega \in \mathcal{D}} \|x - \omega\|_Q, \quad \forall x \in \mathbb{R}^p.
\end{equation}
\end{definition}

\begin{remark}
The projection operator $\Pi_Q(\cdot)$ is non-expansive in the Euclidean norm:
\begin{equation}\label{non_expansive}
\|\Pi_Q(x) - \Pi_Q(y)\| \leq \|x - y\|, \quad \forall x, y \in \mathbb{R}^p.
\end{equation}
\end{remark}

We now present the proposed \textit{Newton-type Recursive Projection Algorithm for Tampered Binary observations with Known Probabilities (NRP-TB-KP)}, as summarized in Algorithm~\ref{algorithm3}.

\begin{algorithm}
\caption{NRP-TB-KP}
\begin{algorithmic}[1]\label{algorithm3}
\STATE Initialize: $\hat{\theta}_1 \in \Theta$, $P_1 > 0$, and $\scriptsize \beta_0\! =\! \sign(1\!-\!(p+q))$ $\min\left\{1, \inf_{|x| \leq LM+C} |1\!-\!(p+q)|f_1(x) \right\}$, 
\FOR{$k = 1, 2, \dots$}
    \STATE $ \beta_{k} = \sign(1-(p+q))\times$\\  $\min\left\{ |\beta_{k-1}|, \inf_{|x| \leq LM+C}|1-(p+q)| f_{k+1}(x) \right\}$
    \STATE $a_k = \frac{1}{1 + \beta_k^2 \varphi_k^T P_k \varphi_k}$
    \STATE $\tilde{s}_{k+1} = (1-(p+q))F_k(C - \hat{\theta}_k^T \varphi_k) + q - s_{k+1}$
    \STATE $P_{k+1} = P_k - \beta_k^2 a_k P_k \varphi_k \varphi_k^T P_k$
    \STATE $\hat{\theta}_{k+1} = \Pi_{P_{k+1}^{-1}}\left( \hat{\theta}_k + a_k \beta_k P_k \varphi_k \tilde{s}_{k+1} \right)$
\ENDFOR
\end{algorithmic}
\end{algorithm}

\begin{remark}
By the matrix inversion formula~\cite{chen2021stochastic}, the inverse of $P_k$ can be updated recursively as
\begin{equation}
P_{k+1}^{-1} = P_k^{-1} + \beta_k^2 \varphi_k \varphi_k^T.
\end{equation}
Since $P_1$ is positive definite, $P_k^{-1}$ remains positive definite for all $k$, which ensures that the projection operator $\Pi_{P_k^{-1}}$ in Algorithm~\ref{algorithm3} is well-defined.
\end{remark}

Since \( \varphi_k \) is \( \mathcal{F}_k \)-measurable, the conditional expectation of \( y_{k+1} \) given \( \mathcal{F}_k \) is
\begin{equation}
    \mathbb{E}(y_{k+1} \mid \mathcal{F}_k) = \theta^T \varphi_k + \mathbb{E}(w_{k+1} \mid \mathcal{F}_k),
\end{equation}
which serves as the optimal predictor in the mean-square sense. Replacing the true parameter with the estimate \( \hat{\theta}_k \), the adaptive predictor becomes
\begin{equation}
    \hat{y}_{k+1} = \hat{\theta}_k^T \varphi_k + \mathbb{E}(w_{k+1} \mid \mathcal{F}_k).
\end{equation}
Letting \( \tilde{\theta}_k = \theta - \hat{\theta}_k \), the instantaneous regret, defined as the squared deviation between the optimal and adaptive predictors, is
\begin{equation}
    R_k = \left(\mathbb{E}(y_{k+1} \mid \mathcal{F}_k) - \hat{y}_{k+1} \right)^2 = \left( \tilde{\theta}_k^T \varphi_k \right)^2.
\end{equation}
A small value of \( R_k \) is desirable and plays a crucial role in evaluating the performance of adaptive control algorithms.

The following three theorems summarize the main theoretical results of this subsection. Without requiring PE condition, we establish asymptotic bounds for the estimation error, cumulative regret, and tracking performance under Algorithm~\ref{algorithm3}.

\begin{theorem}[Estimation Error Bound]\label{thm:estimation}
Under Assumptions A1, A2, A3(a), and A5, the estimation error generated by Algorithm~\ref{algorithm3} satisfies
\begin{equation}
    \|\tilde{\theta}_{n+1}\|^2 = O\left( \frac{\log \lambda_{\max}(P_{n+1}^{-1})}{\lambda_{\min}(P_{n+1}^{-1})} \right), \quad \text{a.s.}
\end{equation}
where \( \tilde{\theta}_k = \theta - \hat{\theta}_k \).
\end{theorem}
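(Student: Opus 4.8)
The plan is to run a stochastic‑Lyapunov argument in the weighted metric induced by $P_k^{-1}$, mirroring the Lai--Wei analysis of recursive least squares \cite{lai1982least}, but with the tampered binary innovation $\tilde s_{k+1}$ playing the role of the noise. Set $V_k := \tilde\theta_k^\top P_k^{-1}\tilde\theta_k$ and write $\bar\theta_{k+1} := \hat\theta_k + a_k\beta_k P_k\varphi_k\tilde s_{k+1}$ for the pre‑projection iterate, so $\hat\theta_{k+1}=\Pi_{P_{k+1}^{-1}}(\bar\theta_{k+1})$. Since $\theta\in\Theta$ and $\Pi_{P_{k+1}^{-1}}$ is non‑expansive in $\|\cdot\|_{P_{k+1}^{-1}}$, we get $V_{k+1}\le\|\bar\theta_{k+1}-\theta\|_{P_{k+1}^{-1}}^2$. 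Expanding with the rank‑one update $P_{k+1}^{-1}=P_k^{-1}+\beta_k^2\varphi_k\varphi_k^\top$ and the gain identities $a_k(1+\beta_k^2\varphi_k^\top P_k\varphi_k)=1$, $\,1-a_k=a_k\beta_k^2\varphi_k^\top P_k\varphi_k$, all cross terms collapse and I expect the pivotal per‑step inequality
\[
V_{k+1}\ \le\ V_k\ -\ 2\beta_k(\tilde\theta_k^\top\varphi_k)\tilde s_{k+1}\ +\ \beta_k^2(\tilde\theta_k^\top\varphi_k)^2\ +\ (1-a_k)\tilde s_{k+1}^2 .
\]

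Next I would fix the sign of the middle term in conditional mean. By \eqref{dist2}, $\mathbb{E}[\tilde s_{k+1}\mid\mathcal F_k]=(1-(p+q))\big(F_k(C-\hat\theta_k^\top\varphi_k)-F_k(C-\theta^\top\varphi_k)\big)$, and the mean value theorem turns this into $(1-(p+q))f_k(\xi_k)\,\tilde\theta_k^\top\varphi_k$ for some $\xi_k$ with $|\xi_k|\le C+MB$ (using A1 and A3(a)). The recursive construction of $\beta_k$ together with A5 and the identifiability condition $p+q\ne1$ ensures $0<\beta_k^2\le\beta_k(1-(p+q))f_k(\xi_k)$, hence $-2\beta_k(\tilde\theta_k^\top\varphi_k)\,\mathbb{E}[\tilde s_{k+1}\mid\mathcal F_k]\le-2\beta_k^2(\tilde\theta_k^\top\varphi_k)^2$. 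Decomposing $\tilde s_{k+1}=\mathbb{E}[\tilde s_{k+1}\mid\mathcal F_k]+\nu_{k+1}$ with $\{\nu_{k+1}\}$ a bounded martingale‑difference sequence ($|\tilde s_{k+1}|\le2$) and summing the pivotal inequality over $k=1,\dots,n$ gives
\[
V_{n+1}+\sum_{k=1}^n\beta_k^2(\tilde\theta_k^\top\varphi_k)^2\ \le\ V_1\ +\ 2\Big|\sum_{k=1}^n\beta_k(\tilde\theta_k^\top\varphi_k)\nu_{k+1}\Big|\ +\ \sum_{k=1}^n(1-a_k)\tilde s_{k+1}^2 .
\]

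Finally I would bound the two sums on the right. From $\det P_{k+1}^{-1}=\det P_k^{-1}(1+\beta_k^2\varphi_k^\top P_k\varphi_k)$ and $1-a_k\le-\log a_k$, the gain sum telescopes: $\sum_{k=1}^n(1-a_k)\le\log\det P_{n+1}^{-1}-\log\det P_1^{-1}$; combined with $|\tilde s_{k+1}|\le2$ and a martingale convergence / strong‑law estimate \cite{chen2021stochastic} for the fluctuation $\sum(1-a_k)(\tilde s_{k+1}^2-\mathbb{E}[\tilde s_{k+1}^2\mid\mathcal F_k])$ (whose quadratic variation is $O(\sum(1-a_k))$), this yields $\sum_{k=1}^n(1-a_k)\tilde s_{k+1}^2=O(\log\det P_{n+1}^{-1})$ a.s. The same martingale lemma applied to $\sum_{k=1}^n\beta_k(\tilde\theta_k^\top\varphi_k)\nu_{k+1}$, whose quadratic variation is $O\big(\sum_{k=1}^n\beta_k^2(\tilde\theta_k^\top\varphi_k)^2\big)$, shows this martingale either converges or is $o\big(\sum_{k=1}^n\beta_k^2(\tilde\theta_k^\top\varphi_k)^2\big)$, so $2|\cdot|\le\tfrac12\sum_{k=1}^n\beta_k^2(\tilde\theta_k^\top\varphi_k)^2+K$ eventually for a finite random $K$. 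Absorbing that term into the left side leaves $V_{n+1}=O(\log\det P_{n+1}^{-1})$ a.s., whence $\|\tilde\theta_{n+1}\|^2\le V_{n+1}/\lambda_{\min}(P_{n+1}^{-1})=O\big(\log\det P_{n+1}^{-1}/\lambda_{\min}(P_{n+1}^{-1})\big)$; the elementary bound $\log\det P_{n+1}^{-1}\le p\log\lambda_{\max}(P_{n+1}^{-1})$ (the regime where $\lambda_{\max}(P_{n+1}^{-1})$ stays bounded being trivial, as $\|\tilde\theta_{n+1}\|$ is bounded) gives the claimed rate. The main obstacle is the martingale bookkeeping in this last step — rigorously justifying the dichotomy that lets $\sum\beta_k(\tilde\theta_k^\top\varphi_k)\nu_{k+1}$ be absorbed into $\sum\beta_k^2(\tilde\theta_k^\top\varphi_k)^2$ when the latter diverges — together with verifying the uniform lower estimate $0<\beta_k^2\le\beta_k(1-(p+q))f_k(\xi_k)$ directly from the recursive definition of $\beta_k$.
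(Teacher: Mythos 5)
Your proposal is correct and follows essentially the same route as the paper's own proof (Lemma~\ref{lem} in Appendix~II): the Lyapunov function $V_k=\tilde\theta_k^T P_k^{-1}\tilde\theta_k$ with projection non-expansiveness, the rank-one identities collapsing the cross terms, the mean value theorem plus the recursive definition of $\beta_k$ to dominate the drift by $-2\beta_k^2(\tilde\theta_k^T\varphi_k)^2$, the determinant telescoping for the gain sum, and the martingale absorption of the noise term into $\sum_k\beta_k^2(\tilde\theta_k^T\varphi_k)^2$. The only cosmetic difference is that you split $\tilde s_{k+1}$ into conditional mean plus martingale difference at the summation stage rather than from the outset (the paper's $\psi_k+\varepsilon_{k+1}$), and you spell out the final passage from $O(\log|P_{n+1}^{-1}|)$ to the eigenvalue form, which the paper leaves implicit.
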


\begin{theorem}[Regret Bound]
\label{thm:regret}
Let Assumptions A1, A2, A3(a), and A5 hold. Then the cumulative regret, defined by \( R_k = (\tilde{\theta}_k^T \varphi_k)^2 \), satisfies
\begin{equation*}
    \sum_{k=0}^{n} R_k = O\left( \frac{\log |P_{n+1}^{-1}| }{\beta_n^2} \right), \quad \text{a.s.}
\end{equation*}
\end{theorem}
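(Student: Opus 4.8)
The plan is to bound the cumulative regret $\sum_{k=0}^{n} R_k = \sum_{k=0}^{n}(\tilde\theta_k^T\varphi_k)^2$ by tracking the evolution of the weighted error $V_k := \tilde\theta_k^T P_k^{-1}\tilde\theta_k$ along the recursion, in the classical style of least-squares regret analysis (as in Lai–Wei and the references cited). First I would write down the one-step update for the un-projected estimate $\bar\theta_{k+1} = \hat\theta_k + a_k\beta_k P_k\varphi_k\tilde s_{k+1}$, and use the non-expansiveness of $\Pi_{P_{k+1}^{-1}}$ in the $P_{k+1}^{-1}$-weighted norm (which holds since $\theta\in\Theta$) to get $V_{k+1} = \tilde\theta_{k+1}^T P_{k+1}^{-1}\tilde\theta_{k+1} \le (\bar\theta_{k+1}-\theta)^T P_{k+1}^{-1}(\bar\theta_{k+1}-\theta)$. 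Expanding the right-hand side using $P_{k+1}^{-1} = P_k^{-1} + \beta_k^2\varphi_k\varphi_k^T$ and the matrix identities ($a_k = (1+\beta_k^2\varphi_k^TP_k\varphi_k)^{-1}$, $P_{k+1}\varphi_k = a_k P_k\varphi_k$) yields, after the routine cancellations, an inequality of the form
\begin{equation}
V_{k+1} \le V_k - a_k\beta_k^2(\tilde\theta_k^T\varphi_k)^2 + 2\beta_k \tilde s_{k+1}\varphi_k^T\tilde\theta_k + a_k\beta_k^2\varphi_k^TP_k\varphi_k\,\tilde s_{k+1}^2. \nonumber
\end{equation}

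Next I would take conditional expectations given $\mathcal F_k$ and exploit the sign structure of $\tilde s_{k+1}$: from the algorithm, $\mathbb E[\tilde s_{k+1}\mid\mathcal F_k] = (1-(p+q))\big(F_k(C-\hat\theta_k^T\varphi_k) - F_k(C-\theta^T\varphi_k)\big)$, and by the mean value theorem together with Assumption~A5 this has the same sign as $-(1-(p+q))\varphi_k^T\tilde\theta_k$ — wait, more precisely one gets $\mathbb E[\tilde s_{k+1}\mid\mathcal F_k]\cdot\varphi_k^T\tilde\theta_k = -(1-(p+q))f_k(\xi_k)(\varphi_k^T\tilde\theta_k)^2$ for some intermediate point $\xi_k$, so the cross term $2\beta_k\varphi_k^T\tilde\theta_k\,\mathbb E[\tilde s_{k+1}\mid\mathcal F_k]$ is $\le -2\beta_k^2(\varphi_k^T\tilde\theta_k)^2$ by the definition of $\beta_k$ (here $\beta_k$ is chosen precisely so that $|\beta_k|\le |1-(p+q)|f_k(\xi_k)$ and $\mathrm{sign}(\beta_k)=\mathrm{sign}(1-(p+q))$). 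Since $a_k\le 1$, $|\tilde s_{k+1}|\le 1$, and $a_k\beta_k^2\varphi_k^TP_k\varphi_k = 1-a_k$, the surviving negative term controls the regret while the last term sums telescopically. I would then split $2\beta_k\tilde s_{k+1}\varphi_k^T\tilde\theta_k = 2\beta_k\varphi_k^T\tilde\theta_k\,\mathbb E[\tilde s_{k+1}\mid\mathcal F_k] + 2\beta_k\varphi_k^T\tilde\theta_k\big(\tilde s_{k+1}-\mathbb E[\tilde s_{k+1}\mid\mathcal F_k]\big)$, recognizing the second piece as a martingale difference.

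To finish, I would sum the resulting inequality from $k=0$ to $n$. The telescoping of $V_k$ gives a bounded contribution (using $V_0<\infty$ and $V_{k}\ge 0$); the term $\sum(1-a_k)\tilde s_{k+1}^2 \le \sum(1-a_k)$ is $O(\log|P_{n+1}^{-1}|/\log|P_1^{-1}|)$ by the standard determinant identity $\sum_k \beta_k^2\varphi_k^TP_{k+1}\varphi_k = \sum_k(1-a_k) = O(\log|P_{n+1}^{-1}|)$ (since $\log|P_{k+1}^{-1}| - \log|P_k^{-1}| = -\log a_k \ge 1-a_k$); and the martingale-difference sum is handled by a convergence theorem for martingale transforms — e.g. its quadratic variation is $O(\sum\beta_k^2(\varphi_k^T\tilde\theta_k)^2)$, which is dominated by the cumulative regret itself, so that by the martingale stability lemma (cf. \cite[Lemma~1.2.2]{chen2021stochastic} / Lai–Wei) it is $o(\sum R_k) + O(1)$ almost surely. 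Rearranging, $(\text{const})\sum_{k=0}^n R_k \le O(\log|P_{n+1}^{-1}|) + o\big(\sum_{k=0}^n R_k\big)$, which absorbs into the left side and gives $\sum_{k=0}^n R_k = O(\log|P_{n+1}^{-1}|/\beta_n^2)$, the factor $\beta_n^2$ entering because $\beta_k$ is non-increasing in magnitude so $a_k\beta_k^2 \ge \beta_n^2 a_k$ and the coefficient of $R_k$ on the left is at least a constant times $\beta_n^2 a_k \ge \beta_n^2\cdot$(something), giving the stated normalization. The main obstacle I anticipate is the bookkeeping in the martingale step: one must be careful that the noise term $\tilde s_{k+1}-\mathbb E[\tilde s_{k+1}\mid\mathcal F_k]$ is bounded but its transform weight $\beta_k\varphi_k^T\tilde\theta_k$ involves $\tilde\theta_k$, so self-referentially the martingale bound feeds back into the regret sum — closing that loop cleanly (ensuring the $o(\sum R_k)$ is genuinely sub-dominant rather than circular) is the delicate part, and it is exactly where the boundedness of $\Theta$ (Assumption~A1) and the uniform density lower bound (Assumption~A5) are essential.
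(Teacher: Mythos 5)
Your proposal is correct and follows essentially the same route as the paper's proof (via Lemma~\ref{lem}): the Lyapunov function $V_k=\tilde\theta_k^TP_k^{-1}\tilde\theta_k$, non-expansiveness of $\Pi_{P_{k+1}^{-1}}$, the split of $\tilde s_{k+1}$ into its conditional mean $\psi_k$ plus the martingale difference $\varepsilon_{k+1}$, the mean-value-theorem step that converts the cross term into $-2\beta_k^2(\tilde\theta_k^T\varphi_k)^2$ via the definition of $\beta_k$, the determinant identity for $\sum_k a_k\beta_k^2\varphi_k^TP_k\varphi_k$, the martingale-stability absorption of the noise term into $o\bigl(\sum_k\beta_k^2(\tilde\theta_k^T\varphi_k)^2\bigr)+O(1)$, and finally the monotonicity of $|\beta_k|$ to extract the $1/\beta_n^2$ factor. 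The only quibble is a bookkeeping slip in your displayed one-step inequality: the direct expansion yields $+\beta_k^2(\tilde\theta_k^T\varphi_k)^2$ rather than $-a_k\beta_k^2(\tilde\theta_k^T\varphi_k)^2$, and the negative drift comes entirely from the conditional-mean cross term, exactly as you correctly note immediately afterwards.
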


\begin{remark}
According to the update rule of \( \beta_n \) in Algorithm~\ref{algorithm3}, the regret bound becomes unbounded as \( 1 - (p + q) \to 0 \). This is expected, since when \( p + q \to 1 \), it becomes increasingly difficult to distinguish between tampered and genuine signals, which severely undermines the identifiability of the system.
\end{remark}

\begin{theorem}[Tracking Error in Adaptive Control]\label{thm:tracking}
Under the assumptions of Theorem~\ref{thm:regret}, suppose that the noise densities \( \{f_k(x)\} \) satisfy
\begin{equation}
    \sup_k \mathbb{E}\left[ |w_k|^\alpha \mid \mathcal{F}_{k-1} \right] < \infty, \quad \text{a.s., for some } \alpha > 4,
\end{equation}
and the regressor \( \varphi_k \) is designed such that
\begin{equation}
    \hat{\theta}_k^T \varphi_k + \int_{-\infty}^{\infty} x f_k(x) \, dx = y_{k+1}^*,
\end{equation}
for any bounded reference signal \( \{y_{k+1}^*\} \). Then the average tracking error
\begin{equation}
    J_n = \frac{1}{n} \sum_{k=0}^{n-1} (y_{k+1} - y_{k+1}^*)^2
\end{equation}
satisfies
\begin{equation}
    \left| J_n - \frac{1}{n} \sum_{k=1}^{n} \sigma_k^2 \right| = O\left( \sqrt{\frac{\log \log n}{n}} \right), \quad \text{a.s.},
\end{equation}
where \( \sigma_k^2 = \mathbb{E} \left[ \left( w_k - \mathbb{E}(w_k \mid \mathcal{F}_{k-1}) \right)^2 \mid \mathcal{F}_{k-1} \right] \).
\end{theorem}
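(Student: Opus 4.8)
The plan is to start from the prescribed control law and expand the tracking error term by term. Write $\mu_k := \mathbb{E}(w_{k+1}\mid\mathcal{F}_k) = \int_{-\infty}^{\infty} x f_k(x)\,dx$ and $\eta_{k+1} := w_{k+1}-\mu_k$, so that $\mathbb{E}[\eta_{k+1}\mid\mathcal{F}_k]=0$ and $\mathbb{E}[\eta_{k+1}^2\mid\mathcal{F}_k]=\sigma_{k+1}^2$. The design constraint $\hat\theta_k^T\varphi_k+\mu_k=y_{k+1}^*$ gives
\[
y_{k+1}-y_{k+1}^* = \varphi_k^T\theta + w_{k+1} - \hat\theta_k^T\varphi_k - \mu_k = \tilde\theta_k^T\varphi_k + \eta_{k+1},
\]
with $\tilde\theta_k=\theta-\hat\theta_k$. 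Squaring and summing over $k=0,\dots,n-1$ yields $nJ_n = A_n + B_n + D_n$, where $A_n=\sum_{k=0}^{n-1}(\tilde\theta_k^T\varphi_k)^2=\sum_{k=0}^{n-1}R_k$, $B_n=2\sum_{k=0}^{n-1}(\tilde\theta_k^T\varphi_k)\eta_{k+1}$, and $D_n=\sum_{k=0}^{n-1}\eta_{k+1}^2$. Since $\sum_{k=1}^{n}\sigma_k^2 = \sum_{k=0}^{n-1}\mathbb{E}[\eta_{k+1}^2\mid\mathcal{F}_k]$, it suffices to show $A_n$, $B_n$, and $D_n-\sum_{k=1}^{n}\sigma_k^2$ are all $O(\sqrt{n\log\log n})$ a.s.

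For $A_n$, Theorem~\ref{thm:regret} gives $A_n=O(\log|P_{n+1}^{-1}|/\beta_n^2)$. The recursive construction of $\beta_k$ together with Assumption A5 shows $|\beta_k|$ is non-increasing and bounded below by a positive constant, so $\beta_n^{-2}=O(1)$; and since $\|\varphi_k\|\le M$ by Assumption A3(a), $\lambda_{\max}(P_{n+1}^{-1})\le \lambda_{\max}(P_1^{-1})+M^2\sum_{k=1}^{n}\beta_k^2=O(n)$, whence $\log|P_{n+1}^{-1}|=O(\log n)$ and $A_n=O(\log n)$ a.s. For $B_n$, note $\tilde\theta_k^T\varphi_k$ is $\mathcal{F}_k$-measurable, so $\{B_n\}$ is a martingale with predictable quadratic variation $\langle B\rangle_n = 4\sum_{k=0}^{n-1}(\tilde\theta_k^T\varphi_k)^2\sigma_{k+1}^2 \le 4(\sup_k\sigma_{k+1}^2)A_n=O(\log n)$ a.s.\ (the conditional second moments of $w_{k+1}$ being bounded because $\alpha>4$). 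By the strong law / law of the iterated logarithm for martingales, $B_n=O\!\big(\langle B\rangle_n^{1/2}(\log\langle B\rangle_n)^{1/2+\varepsilon}\big)=O\!\big((\log n)^{1/2}(\log\log n)^{1/2+\varepsilon}\big)$ a.s. Both $A_n$ and $B_n$ are thus $o(\sqrt{n\log\log n})$ a.s.

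The main term is $D_n$. Set $\zeta_{k+1}:=\eta_{k+1}^2-\sigma_{k+1}^2$, a martingale difference sequence with respect to $\{\mathcal{F}_k\}$, so that $D_n-\sum_{k=1}^{n}\sigma_k^2=\sum_{k=0}^{n-1}\zeta_{k+1}$. The moment hypothesis $\sup_k\mathbb{E}[|w_k|^\alpha\mid\mathcal{F}_{k-1}]<\infty$ with $\alpha>4$, together with the bound $|\mu_k|\le\mathbb{E}[|w_{k+1}|\mid\mathcal{F}_k]$ it implies, yields $\sup_k\mathbb{E}[|\eta_{k+1}|^\alpha\mid\mathcal{F}_k]<\infty$ and hence $\sup_k\mathbb{E}[|\zeta_{k+1}|^{\alpha/2}\mid\mathcal{F}_k]<\infty$ with $\alpha/2>2$; in particular $s_n^2:=\sum_{k=0}^{n-1}\mathbb{E}[\zeta_{k+1}^2\mid\mathcal{F}_k]=O(n)$. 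Applying the law of the iterated logarithm for martingale differences then gives $\sum_{k=0}^{n-1}\zeta_{k+1}=O\!\big(\sqrt{s_n^2\log\log s_n^2}\big)=O(\sqrt{n\log\log n})$ a.s.\ (if $s_n^2$ stays bounded the martingale converges and the term is $O(1)$, an even sharper bound). Combining, $|nJ_n-\sum_{k=1}^{n}\sigma_k^2|\le A_n+|B_n|+|\sum_{k=0}^{n-1}\zeta_{k+1}|=O(\sqrt{n\log\log n})$ a.s., and dividing by $n$ gives the stated rate.

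I expect the main obstacle to be the treatment of $D_n$: one must invoke a clean, citable martingale-difference law of the iterated logarithm whose hypotheses are exactly satisfied by $\zeta_{k+1}$, and this is precisely why the statement assumes $\alpha>4$ rather than $\alpha=2$ or $\alpha=4$ — a conditional moment of order strictly above $2$ for $\zeta_{k+1}=\eta_{k+1}^2-\sigma_{k+1}^2$ is needed. A secondary point requiring care is justifying, from the stated assumptions alone, that $\beta_n$ is bounded away from zero and that $\lambda_{\max}(P_{n+1}^{-1})$ grows at most linearly, which is what makes the estimation-error contributions $A_n$ and $B_n$ asymptotically negligible compared with the intrinsic noise fluctuation $\sqrt{n\log\log n}$.
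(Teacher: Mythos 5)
Your proof is correct and takes essentially the same route as the paper: the paper's own ``proof'' of this theorem is a one-line appeal to Lemma~\ref{lem} (the regret bound) plus an external citation, and your explicit decomposition $y_{k+1}-y_{k+1}^{*}=\tilde{\theta}_k^{T}\varphi_k+\eta_{k+1}$, with the regret bound controlling $\sum_k(\tilde{\theta}_k^{T}\varphi_k)^2$ and the martingale law of the iterated logarithm applied to $\eta_{k+1}^2-\sigma_{k+1}^2$, is precisely the standard argument that citation encapsulates. You also correctly pin down the two points the paper leaves implicit, namely that Assumption~A5 keeps $\beta_n$ bounded away from zero (so the regret contribution is $O(\log n)$ and asymptotically negligible) and that the hypothesis $\alpha>4$ is exactly what supplies a conditional moment of order strictly greater than $2$ for the squared-noise martingale differences.
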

\begin{proof}
The proofs of Theorems~\ref{thm:estimation}--\ref{thm:tracking} are provided in Appendix~II.
\end{proof}

\subsection{The Attack Strategy is Unknown}

We now turn to the case where the tampering parameters 
$(p,q)$ are unknown. Following a similar framework to the first-order case, we estimate \( p \) and \( q \) online in parallel with the system identification process. The proposed \textit{Newton-type Recursive Projection Algorithm for Tampered Binary observations with Unknown Probabilities (NRP-TB-UP)} is presented in Algorithm~\ref{algorithm4}. This algorithm offers similar theoretical guarantees to the case with known tampering parameters.

\begin{algorithm}
\caption{NRP-TB-UP}
\begin{algorithmic}[1]\label{algorithm4}
\REQUIRE Period \( T \), disjoint subsets \( \mathcal{T}_0, \mathcal{T}_1 \subseteq \mathcal{T} = \{1, 2, \dots, T\} \), $\mathcal{S}^0 = \mathcal{S}^1 = \emptyset$, initial estimate \( \hat{\theta}_1 \in \Theta \), \( P_1 > 0 \).
\FOR{$l = 1, 2, \dots$}
    \FOR{$k = (l-1)T + 1, \dots, lT$}
        \STATE \textbf{Step 1: Data reception}
        \STATE Transmit the observed data \( s_k^0 \)
        \IF{$k - (l-1)T \in \mathcal{T}_0$}
            \STATE Transmit \( s_{k + 1/2}^0 = 0 \), receive \( s_{k + 1/2} \), and update $\mathcal{S}^0 = \mathcal{S}^0 \cup \{k\}$
        \ELSIF{$k - (l-1)T \in \mathcal{T}_1$}
            \STATE Transmit \( s_{k + 1/2}^0 = 1 \), receive \( s_{k + 1/2} \), and update $\mathcal{S}^1 = \mathcal{S}^1 \cup \{k\}$
        \ENDIF

        \STATE \textbf{Step 2: Online estimation of \( p, q \)}
        \[
        \hat{q}_k = \frac{1}{|\mathcal{S}^0|} \sum_{i \in \mathcal{S}^0} s_{i + 1/2},\quad
        \hat{p}_k = \frac{1}{|\mathcal{S}^1|} \sum_{i \in \mathcal{S}^1} (1 - s_{i + 1/2})
        \]

        \STATE \textbf{Step 3: Recursive parameter update}
        \bea
        \ts\ts\beta_{k} = \sign(1 - (\hat{p}_k + \hat{q}_k)) \cdot \min\left\{ |\beta_{k-1}|, \inf_{|x| \leq LM + C} f_{k+1}(x)(1\! -\! (\hat{p}_k \!+\! \hat{q}_k)) \right\} \nn \\
        \ts\ts a_k = \frac{1}{1 + \beta_k^2 \varphi_k^T P_k \varphi_k} \nn \\
        \ts\ts\tilde{s}_{k+1}^u\! =\! (1 \!-\! (\hat{p}_k + \hat{q}_k)) F_k(C\! -\! \hat{\theta}_k^{uT} \varphi_k)\! +\! \hat{q}_k \!-\! s_{k+1} \nn \\
        \ts\ts P_{k+1} = P_k - \beta_k^2 a_k P_k \varphi_k \varphi_k^T P_k \nn \\
        \ts\ts\hat{\theta}_{k+1}^u = \Pi_{P_{k+1}^{-1}}\left( \hat{\theta}_k^u + a_k \beta_k P_k \varphi_k \tilde{s}_{k+1}^u \right) \nn
        \eea
    \ENDFOR
\ENDFOR
\end{algorithmic}
\end{algorithm}

\begin{theorem}\label{thm4}
Consider system~\eqref{s1} with binary-valued observations~\eqref{s2}, operating under the defense scheme~\eqref{attack_p}–\eqref{record_set} and subjected to the data tampering attack~\eqref{s3}. Suppose that Assumptions~A1, A2, A3(a), and~A5 hold. Then, the estimation error produced by Algorithm~\ref{algorithm4} satisfies
\begin{equation}\label{convergence}
    \|\tilde{\theta}_{n+1}^u\|^2 = O\left( \frac{\log \lambda_{\max}(P_{n+1}^{-1})}{\lambda_{\min}(P_{n+1}^{-1})} \right), \ \text{a.s.,}
\end{equation}
where \( \tilde{\theta}_k^u = \theta - \hat{\theta}_k^u \). Moreover, the cumulative regret, defined as \( R_k = (\tilde{\theta}_k^{uT} \varphi_k)^2 \), admits the bound
\begin{equation}\label{regret_e1}
    \sum_{k=0}^{n} R_k = O\left( \frac{\log |P_{n+1}^{-1}| }{\beta_n^2} \right), \quad \text{a.s.}
\end{equation}
\end{theorem}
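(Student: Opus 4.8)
The plan is to prove Theorem~\ref{thm4} by combining the second-order Lyapunov analysis behind Theorems~\ref{thm:estimation}--\ref{thm:regret} (Appendix~II) with the treatment of online $(p,q)$-estimation used for Theorem~\ref{thm3}. \textbf{Step 1 (auxiliary estimates).} By \eqref{attack_p}, under the modified filtration the inserted labels $\{s_{i+1/2}\}_{i\in\mathcal S_k^0}$ and $\{s_{i+1/2}\}_{i\in\mathcal S_k^1}$ are i.i.d.\ Bernoulli, so the strong law of large numbers gives $\hat p_k\to p$, $\hat q_k\to q$ a.s., and the law of the iterated logarithm gives $\rho_k:=|\hat p_k-p|+|\hat q_k-q|=O(\sqrt{\log\log k/k})$ a.s. Since $p+q\neq1$ and $\underline f>0$, it follows that $\sign(1-(\hat p_k+\hat q_k))=\sign(1-(p+q))$ for all large $k$, that $\{|\beta_k|\}$ is nonincreasing and bounded away from $0$, hence $\beta_k\to\beta_\infty$ with $|\beta_\infty|>0$, and that $\beta_k$ differs from the gain that Algorithm~\ref{algorithm3} would use with the true $(p,q)$ by $O(\rho_k)$.

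\textbf{Step 2 (Lyapunov recursion).} Put $V_k:=(\hat\theta_k^u-\theta)^\top P_k^{-1}(\hat\theta_k^u-\theta)$. Using $\theta\in\Theta$, the non-expansiveness \eqref{non_expansive} of $\Pi_{P_{k+1}^{-1}}$, the recursion $P_{k+1}^{-1}=P_k^{-1}+\beta_k^2\varphi_k\varphi_k^\top$, and the identities $\varphi_k^\top P_kP_{k+1}^{-1}=a_k^{-1}\varphi_k^\top$ and $a_k\beta_k^2\varphi_k^\top P_k\varphi_k=1-a_k$, exactly as in Appendix~II one obtains
\[
V_{k+1}\le V_k+\beta_k^2 R_k+2\beta_k\,\tilde s_{k+1}^u\,\varphi_k^\top(\hat\theta_k^u-\theta)+(1-a_k)(\tilde s_{k+1}^u)^2 .
\]
Decompose $\tilde s_{k+1}^u=\mathbb E[\tilde s_{k+1}^u\mid\mathcal F_k]+\nu_{k+1}$ with $\nu_{k+1}$ a bounded martingale difference. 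Expanding $\mathbb E[\tilde s_{k+1}^u\mid\mathcal F_k]$ by the mean value theorem applied to $F_k$ together with $1-(\hat p_k+\hat q_k)=(1-(p+q))-(\hat p_k-p)-(\hat q_k-q)$ yields
\[
\mathbb E[\tilde s_{k+1}^u\mid\mathcal F_k]=(1-(p+q))f_k(\xi_k)\,\tilde\theta_k^{u\top}\varphi_k+g_k,\qquad |g_k|=O(\rho_k),
\]
where $|\xi_k|\le MB+C$ so that $f_k(\xi_k)\ge\underline f$ by Assumption~A5. Because $\beta_k(1-(p+q))=|\beta_k|\,|1-(p+q)|>0$ for large $k$ and the construction of $\beta_k$ forces $\beta_k^2$ to be dominated by $\beta_k(1-(p+q))f_k(\xi_k)$, the "clean" part of the cross term contributes a strictly negative drift $-c\,\beta_k^2 R_k$ with $c>0$.

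\textbf{Step 3 (controlling the residual terms and concluding).} The $g_k$-perturbation is handled by AM--GM, $2|\beta_k g_k\,\varphi_k^\top(\hat\theta_k^u-\theta)|\le\eta\,\beta_k^2R_k+C\eta^{-1}g_k^2$ with $\eta<c$, so the first part is absorbed into the negative drift while $\sum_k g_k^2=O(\sum_k\log\log k/k)$ is of logarithmic order; the curvature term obeys $\sum_{k=0}^n(1-a_k)(\tilde s_{k+1}^u)^2=O(\sum_k(1-a_k))=O(\log|P_{n+1}^{-1}|)$ via the determinant telescoping $1-a_k\le\log(|P_{k+1}^{-1}|/|P_k^{-1}|)$; and the martingale-difference terms, whose conditional second moments are $O(\beta_k^2R_k+(1-a_k))$, are handled by the martingale convergence lemma of \cite{chen2021stochastic}, so their partial sums are $o(\sum\beta_k^2R_k)+O(\log|P_{n+1}^{-1}|)$. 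Summing the one-step inequality from $0$ to $n$ and using $V_{n+1}\ge0$ gives $\sum_{k=0}^n\beta_k^2R_k=O(\log|P_{n+1}^{-1}|)$ and $V_{n+1}=O(\log|P_{n+1}^{-1}|)$. Since $V_{n+1}\ge\lambda_{\min}(P_{n+1}^{-1})\|\tilde\theta_{n+1}^u\|^2$ and $\log|P_{n+1}^{-1}|\le p\log\lambda_{\max}(P_{n+1}^{-1})$, dividing yields \eqref{convergence}; and since $|\beta_n|$ is bounded below, $\sum_{k=0}^nR_k=O(\beta_n^{-2}\sum_k\beta_k^2R_k)=O(\log|P_{n+1}^{-1}|/\beta_n^2)$, which is \eqref{regret_e1}.

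\textbf{Main obstacle.} The delicate point is showing that the online identification of $(p,q)$ does not degrade the Lai--Wei rate. In contrast with the first-order Theorem~\ref{thm3}, where the diminishing step size $b_k$ directly suppresses the $O(\rho_k)$ error, here $(\hat p_k,\hat q_k)$ enters simultaneously the residual $\tilde s_{k+1}^u$, the adaptive scalar gain $\beta_k$, and hence the matrix $P_k$ and the very quantities $\lambda_{\max}(P_{n+1}^{-1})$, $\lambda_{\min}(P_{n+1}^{-1})$ appearing in the bound. One must therefore use the \emph{a.s.} LIL rate rather than mere consistency, verify $\inf_k|\beta_k|>0$ so that $P_k^{-1}$ stays uniformly positive definite and the weighted projection is well posed, and check that the accumulated perturbation ($\sum_k g_k^2$ and the $\beta_k$-induced perturbation of the matrix recursion) is of order $O(\log|P_{n+1}^{-1}|)$ so the stated rate persists. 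A minor additional subtlety is that the favorable drift sign holds only beyond a finite random index; the finite prefix contributes an a.s.\ finite additive constant and is harmless.
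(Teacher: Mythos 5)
Your proposal follows essentially the same route as the paper's proof: the same Lyapunov function $V_k^u=\tilde{\theta}_k^{uT}P_k^{-1}\tilde{\theta}_k^u$, the LIL rate $|\hat p_k-p|+|\hat q_k-q|=O(\sqrt{\log\log k/k})$, the decomposition of $\tilde s_{k+1}^u$ into a clean drift term (evaluated with the true $(p,q)$), an $O(\rho_k)$ perturbation, and a bounded martingale difference, the sign-matching of $\beta_k$ beyond a finite random index to extract a strict negative drift $-c\,\beta_k^2R_k$, the determinant telescoping for term II, and the monotonicity of $|\beta_k|$ to pass from $\sum\beta_k^2R_k$ to $\beta_n^{-2}\sum R_k$. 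You also correctly identify the same delicate point the paper must address.

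One step in your Step 3 is weaker than what the theorem requires: bounding the perturbation cross term by AM--GM with a \emph{fixed} $\eta$ leaves an additive $\eta^{-1}\sum_{k\le n}g_k^2\asymp\sum_{k\le n}\log\log k/k\asymp\log n\,\log\log n$, which is not $O(\log|P_{n+1}^{-1}|)$ in general (the latter can be as small as $O(1)$ under weak excitation, and is at most $O(\log n)$). The paper instead keeps the product structure $\beta_k\varphi_k^T\tilde{\theta}_k^u\epsilon_{k+1}$ and asserts, following the arguments of \cite{guo2020time}, that this sum is $o\bigl(\sum_k\beta_k^2(\tilde{\theta}_k^{uT}\varphi_k)^2\bigr)+O(1)$, so it is absorbed into the negative drift rather than into the $\log|P_{n+1}^{-1}|$ budget. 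If you want your argument to close, you should treat this term the same way (e.g., a Schwarz inequality splitting it as $o(\sum\beta_k^2R_k)+O(\sum\epsilon_{k+1}^2\cdot(\text{slowly growing weights}))$ with the second factor shown to be $O(1)$, or a direct appeal to the cited lemma), rather than conceding an additive $\log n\log\log n$ remainder. Aside from this, your outline matches the paper's proof.
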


\begin{proof}
Define the stochastic Lyapunov function:
\[
V_k^u = \tilde{\theta}_k^{uT} P_k^{-1} \tilde{\theta}_k^u.
\] 
Let \( \epsilon_{k+1}=\tilde{s}_{k+1}-\tilde{s}_{k+1}^u \), where $\tilde{s}_{k+1}$ is given in Algorithm \ref{algorithm3}. Define
\bea
    \varepsilon_{k+1} \ts=\ts (1-(p+q))F_k(C - {\theta}^T \varphi_k) + q - s_{k+1},\label{vareps_2} \\
    \psi_k \ts=\ts (1-(p+q))\nn\\
    \ts\ts\cdot \left(F_{k+1}(C - \hat{\theta}_k^T \varphi_k) - F_{k+1}(C - \theta^T \varphi_k)\right).\label{psi_2}
\eea 
Following the argument in (\ref{5thm1}), we obtain
\bea
    V_{k+1}^u \ts\leq\ts  \tilde{\theta}_k^{uT} P_k^{-1} \tilde{\theta}_k^u {- 2  \beta_k \tilde{\theta}_k^{uT}\varphi_k \psi_k + \beta_k^2 (\tilde{\theta}_k^{uT} \varphi_k)^2 } \nn \\
    \ts\ts +  2 a_k \beta_k^2 (\psi_k+\epsilon_{k+1}) \varphi_k^T P_k \varphi_k \varepsilon_{k+1}   \nn \\
    \ts\ts - 2  \beta_k \varphi_k^T \tilde{\theta}_k^u (\varepsilon_{k+1}-\epsilon_{k+1}) { + a_k \beta_k^2 \varphi_k^T P_k  \varphi_k \varepsilon_{k+1}^2 }\nn\\
    \ts\ts + a_k \beta_k^2 \varphi_k^T P_k \varphi_k\epsilon_{k+1}^2-2a_k \beta_k^2 \varphi_k^T P_k \varphi_k \psi_k\epsilon_{k+1}\nn\\
    \ts\ts +a_k \beta_k^2 \varphi_k^T P_k \varphi_k .\label{6thm1}
\eea

According to~\eqref{3thm2}, there exists \( K \in \mathbb{N}_+ \) such that for all \( k > K \), we have
$
\mathrm{sign}(1 - (p_k + q_k)) = \mathrm{sign}(1 - (p + q)),
$ and $|(1 - (p_k + q_k))|<\frac{4}{3}|1 - (p + q)|$. Based on the definition of  $\beta_k$  in Algorithm \ref{algorithm4}, and using (\ref{psi}) together with the mean value theorem, it follows that
\beq\label{6thm2}
\begin{aligned}
&2 \beta_k \tilde{\theta}_k^{uT} \varphi_k \psi_k = 2 \beta_k (\tilde{\theta}_k^{uT} \varphi_k)^2 f_k(\xi_k)(1 - (p + q)) \\
\geq& \frac{2|p+q-1|}{|(1 - (p_k + q_k))|}\beta_k^2 (\tilde{\theta}_k^{uT} \varphi_k)^2\geq \frac{3}{2}\beta_k^2 (\tilde{\theta}_k^{uT} \varphi_k)^2,
\end{aligned}
\eeq
where \(\xi_k\) lies between \(C - \theta^T \varphi_k\) and \(C - \hat{\theta}_k^T \varphi_k\). 

Furthermore, by (\ref{3thm3}) and the fact that \( |\psi_k| < 1 \), for sufficiently large \( n \), we have
\bea
\ts\ts a_k \beta_k^2 \varphi_k^T P_k \varphi_k\epsilon_{k+1}^2-2a_k \beta_k^2 \varphi_k^T P_k \varphi_k \psi_k\epsilon_{k+1}+a_k \beta_k^2 \varphi_k^T P_k \varphi_k\nn\\
\ts\ts=O(a_k \beta_k^2 \varphi_k^T P_k \varphi_k),\label{6thm3}\\
\ts \ts a_k \beta_k^2 (\psi_k+\epsilon_{k+1}) \varphi_k^T P_k \varphi_k \varepsilon_{k+1}=O( a_k \beta_k^2 \psi_k\varphi_k^T P_k \varphi_k \varepsilon_{k+1}).\nn
\eea

Summing both sides of (\ref{6thm1}) and applying (\ref{6thm2})–(\ref{6thm3}), we obtain
\bea
    \ts\ts V_{n+1}^u \leq  V_0^u- \frac{1}{2}\sum_{k=0}^{n}\beta_k^2 (\tilde{\theta}_k^{uT} \varphi_k)^2+2\sum_{k=0}^{n}  \beta_k \varphi_k^T \tilde{\theta}_k^u \epsilon_{k+1}\nn\\
    \ts\ts +  \underbrace{ 2O\left( \sum_{k=0}^{n} a_k \beta_k^2\psi_k \varphi_k^T P_k \varphi_k \varepsilon_{k+1}\right) - 2\sum_{k=0}^{n}  \beta_k \varphi_k^T \tilde{\theta}_k^u \varepsilon_{k+1} }_{\text{I}} \nn \\
    \ts\ts + \underbrace{ O\left(\sum_{k=0}^{n} a_k \beta_k^2 \varphi_k^T P_k\varphi_k\right) +\sum_{k=0}^{n}  a_k \beta_k^2 \varphi_k^T P_k \varphi_k \varepsilon_{k+1}^2 }_{\text{II}}.\label{6thm}
\eea

From (\ref{vareps_2}) and (\ref{mean1}), we know that
\[
\mathbb{E}(\varepsilon_{k+1} \mid \mathcal{F}_k) = 0,\ \sup_k \, \mathbb{E}\left[ |\omega_{k+1}|^2 \mid \mathcal{F}_k \right] < \infty, \quad \text{a.s.},
\]
which implies that $\{\varepsilon_k, \mathcal{F}_k\}$ forms a martingale difference sequence with finite second moment.

Following the arguments in \cite{guo2020time}, and using Assumption A3(a) \( \sup_{k \geq 1} \|\varphi_k\| \leq M < \infty \) along with (\ref{3thm3}), we have
\bea
\ts\ts 2\sum_{k=0}^{n}  \beta_k \varphi_k^T \tilde{\theta}_k^u \epsilon_{k+1}=o\left(\sum_{k=0}^{n}\beta_k^2 (\tilde{\theta}_k^{uT} \varphi_k)^2\right)+O(1),\nn\\
\ts\ts{\rm I}= o\left(\sum_{k=0}^{n}\beta_k^2 (\tilde{\theta}_k^{uT} \varphi_k)^2\right)+O(1),\nn\\
\ts\ts {\rm II}= O \left( \log |P_{n+1}^{-1}| \right).\nn
\eea

Combining the above estimates yields
\[
    \tilde{\theta}_{n+1}^{uT} P_{n+1}^{-1} \tilde{\theta}_{n+1}^u + \sum_{k=0}^{n} \beta_k^2 (\tilde{\theta}_k^{uT} \varphi_k)^2 = O(\log |P_{n+1}^{-1}|), \quad \text{a.s.}
\]
Finally, since \( \{\beta_k\} \) is non-increasing, the results in (\ref{convergence}) and (\ref{regret_e1}) follow.
\end{proof}

Similar to Theorem~\ref{thm:tracking}, Theorem~\ref{thm4} also implies the following result on the tracking error in adaptive control of binary FIR systems under unknown tampering attacks.

\begin{theorem}\label{thm_tracking1}
Consider system~\eqref{s1} with binary-valued observations~\eqref{s2}, operating under the defense scheme~\eqref{attack_p}–\eqref{record_set} and subjected to the data tampering attack~\eqref{s3}. 
Suppose the conditions of Theorem~\ref{thm:tracking} hold, and that the regressor \( \varphi_k \) is constructed as in Theorem~\ref{thm:tracking} for a bounded reference signal \( \{y_{k+1}^*\} \). Then, the average tracking error satisfies
\[
\left| J_n - \frac{1}{n} \sum_{k=1}^{n} \sigma_k^2 \right| = O\left( \sqrt{\frac{\log \log n}{n}} \right), \quad \text{a.s.}
\]
where \( \sigma_k^2 = \mathbb{E} \left[ \left( w_k - \mathbb{E}(w_k \mid \mathcal{F}_{k-1}) \right)^2 \mid \mathcal{F}_{k-1} \right] \).
\end{theorem}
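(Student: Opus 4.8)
The plan is to follow the same route as the proof of Theorem~\ref{thm:tracking}, using the regret bound of Theorem~\ref{thm4} in place of that of Theorem~\ref{thm:regret}, and then checking that the online tampering estimates $(\hat p_k,\hat q_k)$ do not spoil the rate.

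\emph{Step 1 (error decomposition).} Since $y_{k+1}=\varphi_k^T\theta+w_{k+1}$ and, by construction of the regressor, $\hat\theta_k^{uT}\varphi_k+\int_{-\infty}^{\infty}x f_k(x)\,dx=y_{k+1}^*$ with $\int_{-\infty}^{\infty}x f_k(x)\,dx=\mathbb{E}(w_{k+1}\mid\mathcal{F}_k)$, I would write $y_{k+1}-y_{k+1}^*=\varphi_k^T\tilde\theta_k^u+v_{k+1}$, where $v_{k+1}:=w_{k+1}-\mathbb{E}(w_{k+1}\mid\mathcal{F}_k)$ is a martingale difference with $\mathbb{E}(v_{k+1}^2\mid\mathcal{F}_k)=\sigma_{k+1}^2$. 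Squaring, summing and dividing by $n$ gives
\[
J_n=\frac1n\sum_{k=0}^{n-1}(\varphi_k^T\tilde\theta_k^u)^2+\frac2n\sum_{k=0}^{n-1}\varphi_k^T\tilde\theta_k^u\,v_{k+1}+\frac1n\sum_{k=0}^{n-1}v_{k+1}^2 ,
\]
and since $\frac1n\sum_{k=0}^{n-1}v_{k+1}^2-\frac1n\sum_{k=1}^{n}\sigma_k^2=\frac1n\sum_{k=0}^{n-1}(v_{k+1}^2-\sigma_{k+1}^2)$ is just a reindexing, it suffices to bound the first term, the cross term, and this last martingale average.

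\emph{Step 2 (regret term).} By Theorem~\ref{thm4}, $\sum_{k=0}^{n-1}(\varphi_k^T\tilde\theta_k^u)^2=\sum_{k=0}^{n-1}R_k=O(\log|P_{n+1}^{-1}|/\beta_n^2)$. From $\sup_k\|\varphi_k\|\le M$ and the monotone update $P_{k+1}^{-1}=P_k^{-1}+\beta_k^2\varphi_k\varphi_k^T$ one gets $\lambda_{\max}(P_{n+1}^{-1})\le\lambda_{\max}(P_1^{-1})+|\beta_0|^2M^2 n$, hence $|P_{n+1}^{-1}|\le(\lambda_{\max}(P_{n+1}^{-1}))^{p}$ and $\log|P_{n+1}^{-1}|=O(\log n)$. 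Because $\hat p_k+\hat q_k\to p+q\neq1$ a.s.\ by~\eqref{3thm1}, the definition of $\beta_k$ in Algorithm~\ref{algorithm4} together with Assumption A5 forces $\beta_n^2\ge c>0$ for all large $n$; therefore $\frac1n\sum_{k=0}^{n-1}R_k=O(\log n/n)=o(\sqrt{\log\log n/n})$, which also disposes of the first term in Step 1.

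\emph{Step 3 (martingale terms) and conclusion.} For the variance term, $\{v_{k+1}^2-\sigma_{k+1}^2\}$ is a martingale difference sequence, and $\sup_k\mathbb{E}[|w_k|^\alpha\mid\mathcal{F}_{k-1}]<\infty$ with $\alpha>4$ yields $\sup_k\mathbb{E}[(v_{k+1}^2)^{\alpha/2}\mid\mathcal{F}_k]<\infty$ with $\alpha/2>2$, so the martingale law of the iterated logarithm gives $\sum_{k=0}^{n-1}(v_{k+1}^2-\sigma_{k+1}^2)=O(\sqrt{n\log\log n})$ a.s., i.e.\ this average is $O(\sqrt{\log\log n/n})$. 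For the cross term, $M_n:=\sum_{k=0}^{n-1}\varphi_k^T\tilde\theta_k^u\,v_{k+1}$ is a martingale whose predictable quadratic variation is $\sum_{k=0}^{n-1}(\varphi_k^T\tilde\theta_k^u)^2\sigma_{k+1}^2=O(\sum_{k=0}^{n-1}R_k)=O(\log n)$, so a Lai--Wei type martingale estimate gives $M_n=O(\sqrt{\log n}\,(\log\log n)^{1/2+\varepsilon})+O(1)$ a.s., whence $\frac2n M_n=o(\sqrt{\log\log n/n})$. Adding the three contributions yields $|J_n-\frac1n\sum_{k=1}^n\sigma_k^2|=O(\sqrt{\log\log n/n})$.

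\emph{Main obstacle.} The substantive points are (i) confirming that $\beta_n$ stays bounded away from zero in the unknown-tampering case, so that the regret bound of Theorem~\ref{thm4} has an asymptotically constant denominator, and (ii) the moment accounting that lets the martingale LIL apply to $v_{k+1}^2-\sigma_{k+1}^2$ and controls the cross term against $\sum R_k=O(\log n)$; once these are in place, the argument is the same martingale computation as in the known-tampering proof of Theorem~\ref{thm:tracking}.
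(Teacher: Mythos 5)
Your proposal is correct and follows essentially the route the paper intends: the paper's own "proof" merely defers to Theorem~\ref{thm:tracking} (itself proved by citation to an external reference), namely the standard decomposition $y_{k+1}-y_{k+1}^*=\varphi_k^T\tilde\theta_k^u+v_{k+1}$ combined with the regret bound of Theorem~\ref{thm4} and martingale LIL/Lai--Wei estimates, which is exactly what you carry out. Your two flagged obstacles are indeed the only substantive points — in particular the fact that $|\beta_n|$ stays bounded away from zero for large $n$ because $\hat p_k+\hat q_k\to p+q\neq 1$ a.s. — and your treatment of them matches what the paper implicitly relies on in the proof of Theorem~\ref{thm4}.
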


\section{Numerical simulations and practical case study}

\subsection{Numerical simulations}
This section presents three numerical simulations to validate the convergence of the first-order and second-order algorithms, as well as the tracking error bounds in adaptive control.

\begin{figure}[!tb]
  \centering
  \includegraphics[width=0.8\linewidth]{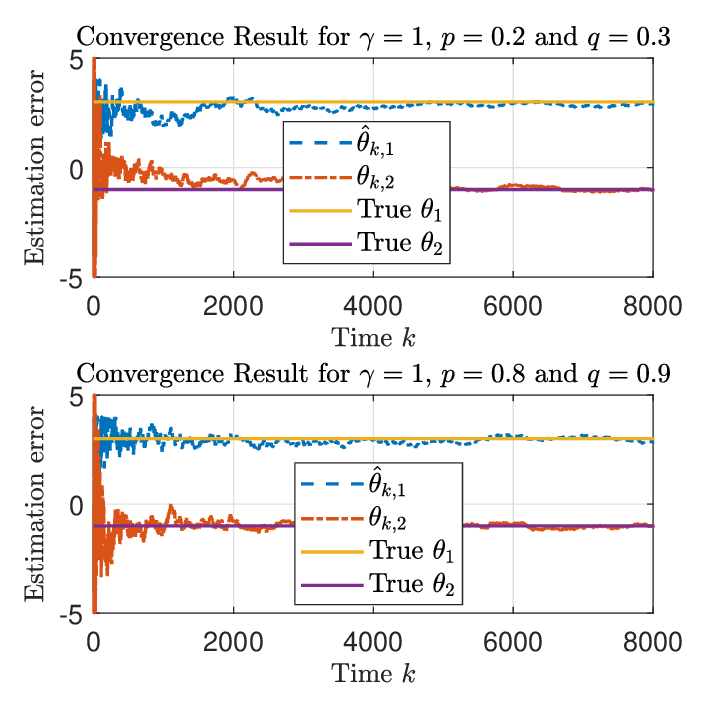}
  \caption{Convergence of the estimation shown by a trajectory of $\hat{\theta}_{n+1}$.}
  \label{fig1a}
\end{figure}

\begin{figure}[!tb]
  \centering
  \includegraphics[width=0.8\linewidth]{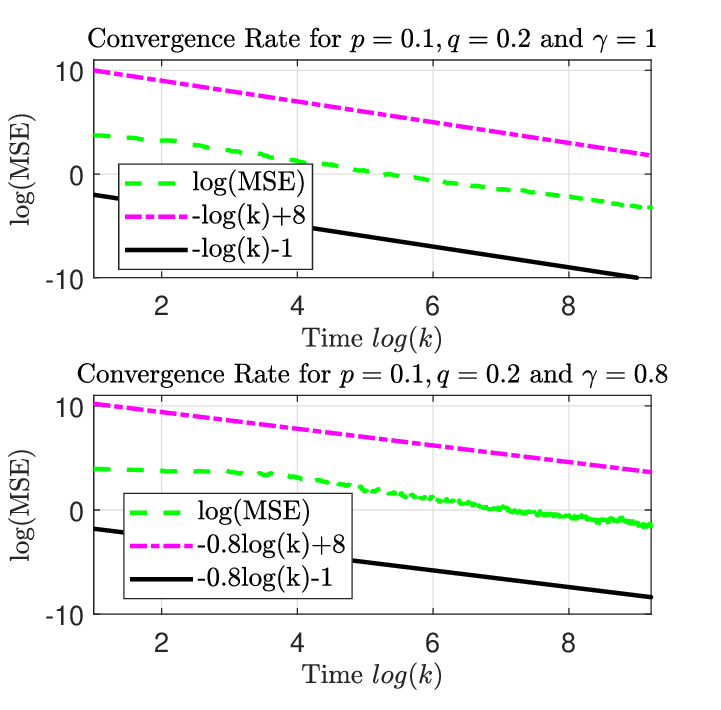}
  \caption{Convergence rate of the estimation shown by a trajectory of $k \tilde{\theta}_k^T \tilde{\theta}_k / \ln k$.}
  \label{fig1b}
\end{figure}

\noindent{\bf Example 1.} Consider the system 
$y_{k+1}=\varphi_{k}^{T} \theta+w_{k+1}$
with the binary observation
$$s_{k}=I_{\left[y_{k} \leq C\right]}= \begin{cases}1, & y_{k} \leq C \\ 0, & \text { otherwise },\end{cases}$$
where $\theta=[3,-1]^T$ is unknown but known as in $\Theta=\{(x,y):|x|<6,|y|<6\}$. The threshold $C=1$, and the system noise $w_{k+1}$ obeys the standard normal distribution. The inputs $\varphi_k=\{u_k,u_{k-1}\}$ with $u_{k}$ obeying the uniform distribution of $N(0,2)$. Algorithm (\ref{algo1})-(\ref{algo2}) has a step size of $\beta=80$, $b_k=1/k^\gamma$ with $\gamma=1,\ 0.8$, and an initial value of $\theta_0=[1,1]^T$. All the simulations are looped 50 times. Figure \ref{fig1a} and \ref{fig1b} present the estimation results of the algorithm for attack strategy shown in (\ref{s3}) as \( p = 0.2, q = 0.3 \) and \( p = 0.8, q = 0.9 \), respectively. From Figure \ref{fig1a}, it can be seen that even when the tampering probability is close to 1, the proposed recursive defense algorithm still converges to the true value (Theorem \ref{thm1}). Figure \ref{fig1b} shows the convergence rate for \( \gamma = 1 \) and \( \gamma = 0.8 \), validating the results derived in Theorem \ref{thm2}. We also consider the case where the attack strategy is unknown, and the attack probabilities $p$ and $q$ are estimated online. The estimator periodically updates $p$ and $q$ using auxiliary data collected at the time indices $\mathcal{T}_0 = \{1,3,5,7,9\}$ and $\mathcal{T}_1 = \{2,4,6,8,10\}$ within each period of length $T = 20$.  In the simulation, two attack scenarios are tested: $(p, q) = (0.2, 0.3)$ and $(p, q) = (0.8, 0.9)$. The simulation results are shown in Figure~\ref{fig:theta_estimation}, where the parameter estimates $\hat{\theta}_k$ converge to the true values, and the online estimates of $p$ and $q$ gradually approach their true values as the sample size increases. These results validate the Theorem \ref{thm3}.

\begin{figure*}[tbp]
    \centering
    \includegraphics[width=0.8\linewidth]{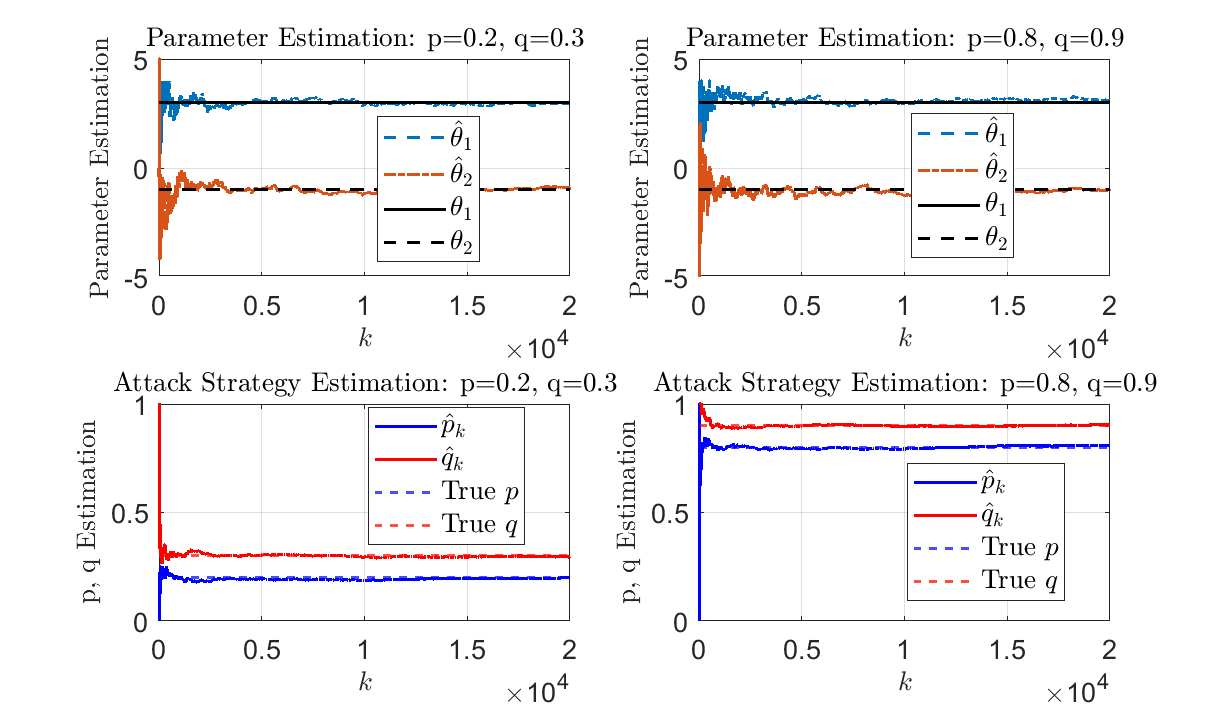}
    \caption{Convergence of parameter estimates $\hat{\theta}_k$ and attack probability estimates $(\hat{p}_k, \hat{q}_k)$ under unknown attack strategies with $(p, q) = (0.2, 0.3)$ and $(p, q) = (0.8, 0.9)$.}
    \label{fig:theta_estimation}
\end{figure*}

\noindent{\bf Example 2.} This example is to validate the theoretical analysis for Algorithm \ref{algorithm4}. We conduct numerical simulations with the same setting as Example 1. The regression vector is defined as $\varphi_k = [u_k, u_{k-1}]^\top$, with $u_k \sim \mathcal{N}(0, \sigma_k^2)$ and $\sigma_k = k^{-1/8}$. At each time step, the binary observation $s_k$ is possibly flipped by an adversary with known probabilities $p=0.1$ and $q=0.2$.
The online estimates of $p$ and $q$ are computed based on partial feedback under a periodic schedule. Figure~\ref{fig:simu_all} presents the simulation results over $N=20000$ time steps. Subplot (a) shows the parameter estimates $\hat{\theta}_1$ and $\hat{\theta}_2$ converging to their true values. Subplot (b) presents the squared estimation error $\|\tilde{\theta}_k\|^2$ along with its theoretical bound $\mathcal{O}(\log k / k{3/4})$, validating Theorem~\ref{thm4}. Subplot (c) depicts the cumulative regret $\sum_{k=0}^n R_k$ and confirms the bound $\mathcal{O}(\log k / \beta_n^2)$ as stated in Theorem~\ref{thm4}. Subplot (d) demonstrates the convergence of the online estimates of the attack probabilities $p$ and $q$, verifying the effectiveness of the periodic extra-insertion scheme.

\begin{figure*}[tbp]
    \centering
    \includegraphics[width=0.8\linewidth]{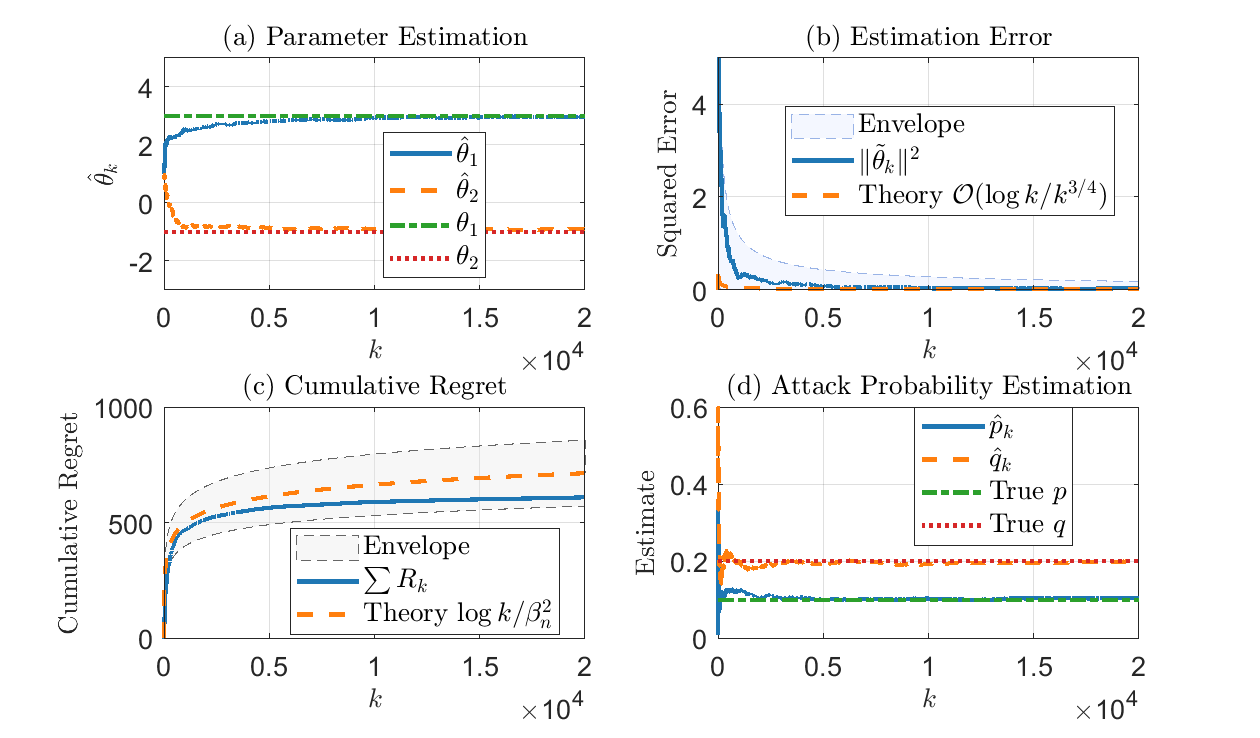}
    \caption{Simulation results of Algorithm \ref{algorithm4}. (a) Parameter estimates; (b) Estimation error and its theoretical bound; (c) Cumulative regret vs theoretical rate; (d) Online estimation of attack probabilities.}
    \label{fig:simu_all}
\end{figure*}

\begin{figure*}[htbp]
  \centering
  \includegraphics[width=0.65\linewidth]{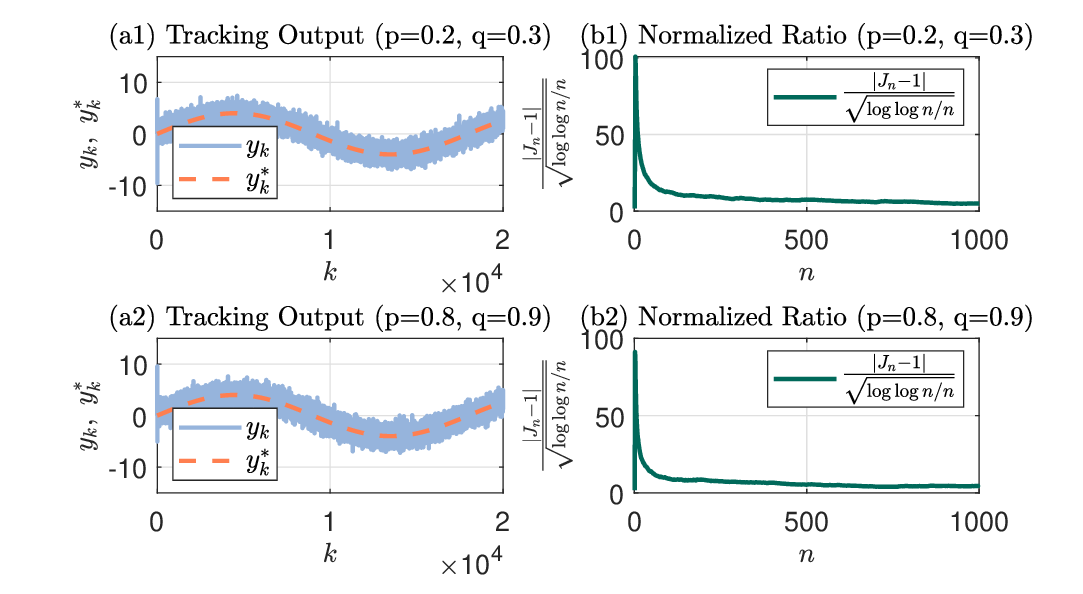}
  \caption{Adaptive tracking with binary feedback under two attack levels. 
  (a1,a2) Output $y_k$ tracks reference $y_k^*$ under $(p,q)=(0.2,0.3)$ and $(0.8,0.9)$. 
  (b1,b2) Normalized error $\frac{|J_n - \sigma^2|}{\sqrt{\log\log n / n}}$.}
  \label{fig:tracking_subplots}
\end{figure*}

\noindent{\bf Example 3.} We also consider the binary observation model with unknown parameter $\theta = [3, -1]^\top$, Gaussian noise $w_k \sim \mathcal{N}(0,1)$, and regressor $\varphi_k = [u_k,\ u_{k-1}]^\top$. We implement the NRP-TB-UP algorithm to estimate $\theta$ online using only binary feedback, while simultaneously designing a control input $u_k$ such that the system output $y_k$ tracks a reference signal $y_k^* = 4\sin(2\pi k/18000)$. The control law solves
\[
\hat{\theta}_k^\top \varphi_k + \int x f_k(x)\, dx = y_{k+1}^*,
\]
where $f_k$ is the density of $w_k$. According to Theorem~\ref{thm_tracking1}, the average tracking error
$
J_n = \frac{1}{n} \sum_{k=1}^{n} (y_k - y_k^*)^2
$
is expected to satisfy
$
|J_n - \sigma^2| = \mathcal{O}\left( \sqrt{ \frac{\log\log n}{n} } \right), \quad \text{a.s.}
$ To verify this, we test two attack settings: $(p,q) = (0.2, 0.3)$ and $(p,q) = (0.8, 0.9)$. Each scenario uses $N=20000$ steps, with initialization $\hat{\theta}_1 = [1,1]^\top$ and random $u_0$. Results are shown in Figure~\ref{fig:tracking_subplots}.

\subsection{Practical Case: Vehicle Emissions Inspection Fraud}

Vehicle emission control, especially for heavy-duty diesel vehicles, plays a critical role in reducing air pollution. However, fraudulent practices in emissions testing—such as using cheat devices to manipulate onboard diagnostic (OBD) systems and falsify excessive emission data—pose serious challenges to environmental monitoring and regulatory enforcement \cite{Zhang2023,Wang2025}. To illustrate the practical relevance of our proposed tampering detection algorithm, we examine a real-world case involving in-use vehicle emissions data. This problem can be naturally formulated within our tampering framework, as falsified emission reports are essentially misclassified compliance labels. We use OBD monitoring data collected in \textit{Hefei, China}, from \textit{August 8 to November 24, 2020}. The dataset includes more than \textit{140,000 records} from over \textit{100 heavy-duty diesel vehicles}, containing both emission measurements and GPS information. Each vehicle was observed for approximately \textit{3 hours}, with data recorded every \textit{5 seconds}, covering a geographic range between \textit{latitudes 31.754772--31.785486} and \textit{longitudes 117.192519--117.245141}.

Given the dynamic driving conditions and the complex temporal structure of emission signals, traditional static models often fail to detect anomalies effectively. To address this, we first process the raw OBD signals using a Transformer-based neural network \cite{vaswani2017attention}, which captures long-range dependencies and nonlinear patterns. This yields a sequence of feature vectors $\varphi_k$, each representing the system state at time $k$.

These features are then used in the model introduced in Section~2.1. We apply the regression model~\eqref{s1} to estimate the predicted emission $y_{k+1}$, followed by a thresholding operation~\eqref{s2} to assign binary compliance labels $s_k^0$. A label of $s_k^0 = 1$ indicates an excessive emission event, i.e., predicted emissions exceeding the regulatory threshold $C$.

However, in real-world inspection scenarios, such binary labels may be deliberately falsified. To model this, we adopt the tampering mechanism~\eqref{s3}, where $p$ denotes the probability of misreporting excessive emissions as compliant, and $q$ the reverse. This completes our pipeline—from raw OBD signals to tampering-aware binary classification—as shown in Figure~\ref{fig:flowchart}.

\begin{figure}[tb]
    \centering
    \includegraphics[width=0.8\textwidth]{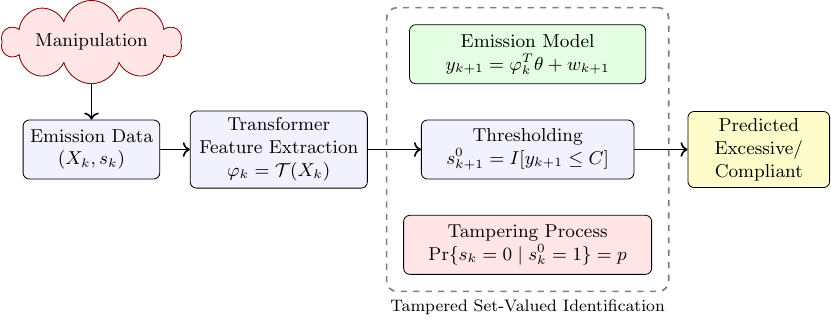}
    \caption{Modeling pipeline for tampered emission data and binary classification.}
    \label{fig:flowchart}
\end{figure}

\begin{figure*}[htbp]
  \centering
  \includegraphics[width=0.9\textwidth]{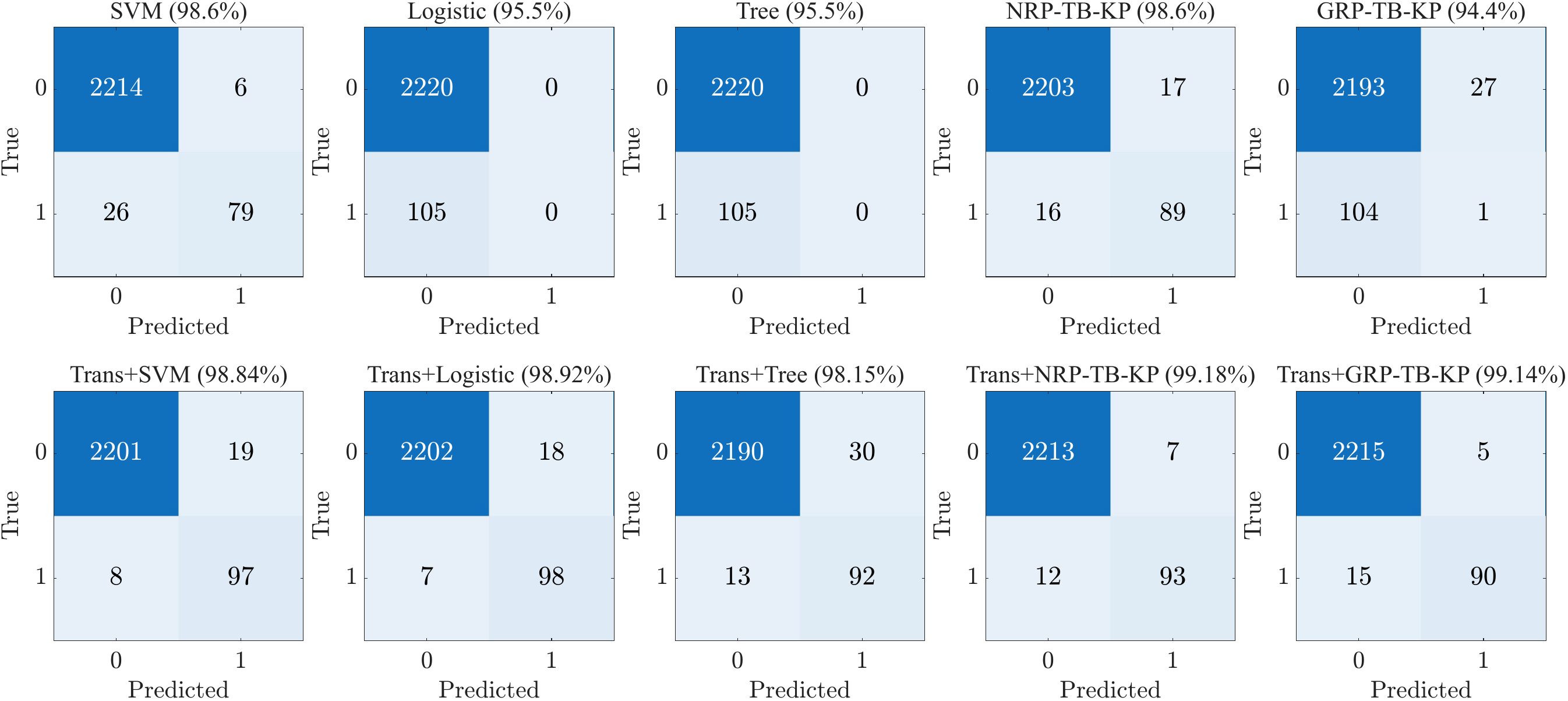}
  \caption{Confusion matrices of five classifiers using Original (top) and Transformer-based features (bottom).}
  \label{fig:confusion_matrices}
\end{figure*}

To assess the effectiveness of the proposed tampering-aware algorithms, we present two sets of simulation results. Figure~\ref{fig:confusion_matrices} shows confusion matrices for five classification algorithms: support vector machine (SVM), logistic regression (Logistic), decision tree (Tree), NRP-TB-KP, and GRP-TB-KP at tampering probabilities $p = 0.3$. The top row results are based on original raw data, while the bottom row represents results after extracting features with a Transformer. Clearly, classification using raw data alone yields relatively low accuracy, emphasizing the need for effective feature extraction methods. After Transformer-based feature extraction, classification accuracy improves across all algorithms. Specifically, the proposed methods based on Transformer-extracted features, NRP-TB-KP and GRP-TB-KP, outperform the baseline classifiers. Among these, the NRP-TB-KP algorithm achieves the highest accuracy overall, highlighting its effectiveness in addressing label tampering and enhancing classification robustness.

\begin{figure*}[tb]
  \centering
  \subfigure[Grad (Orig)]{
    \includegraphics[width=0.19\linewidth]{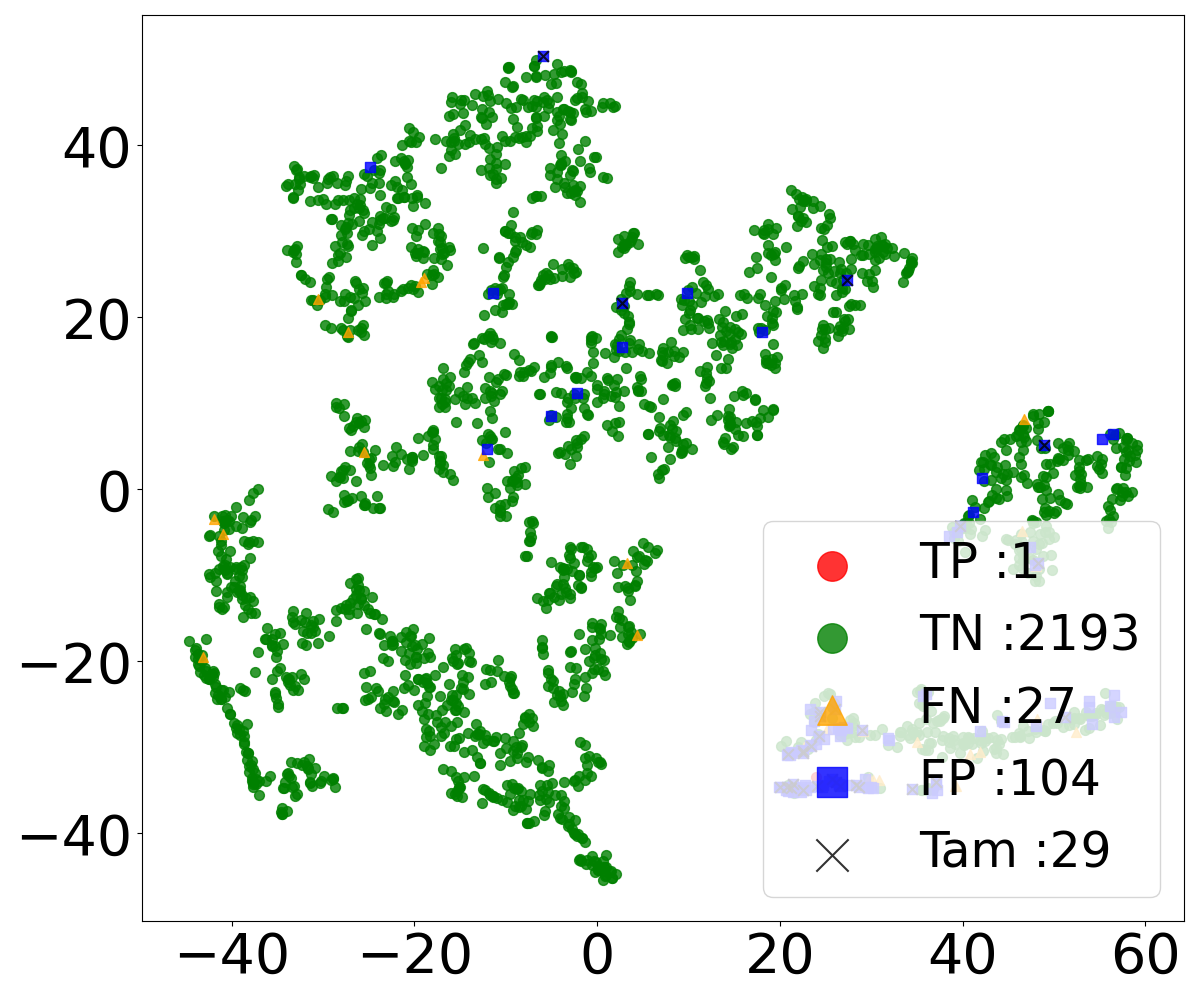}
  }\hspace{-0.6em}
  \subfigure[Logistic (Orig)]{
    \includegraphics[width=0.19\linewidth]{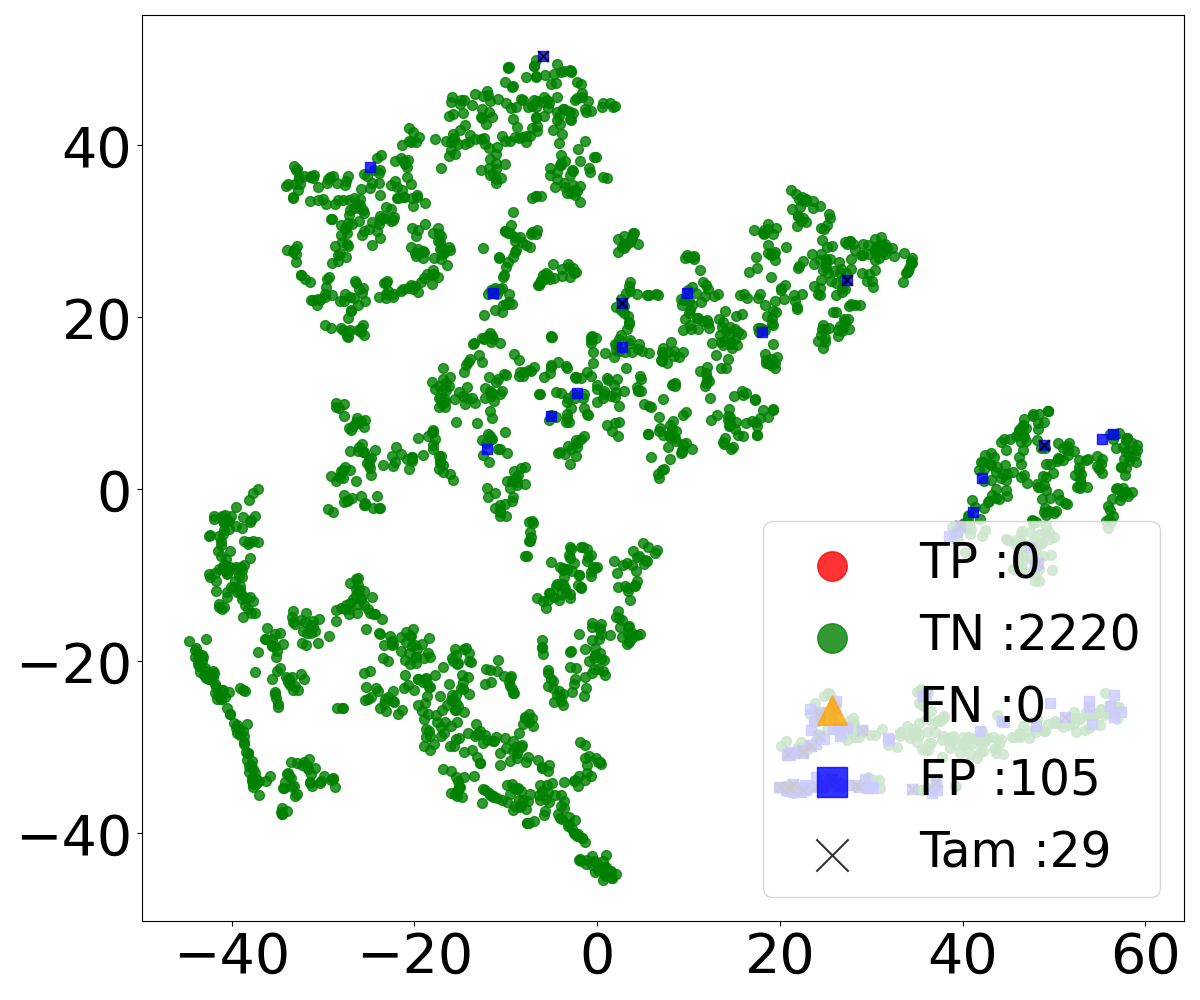}
  }\hspace{-0.6em}
  \subfigure[Newton (Orig)]{
    \includegraphics[width=0.19\linewidth]{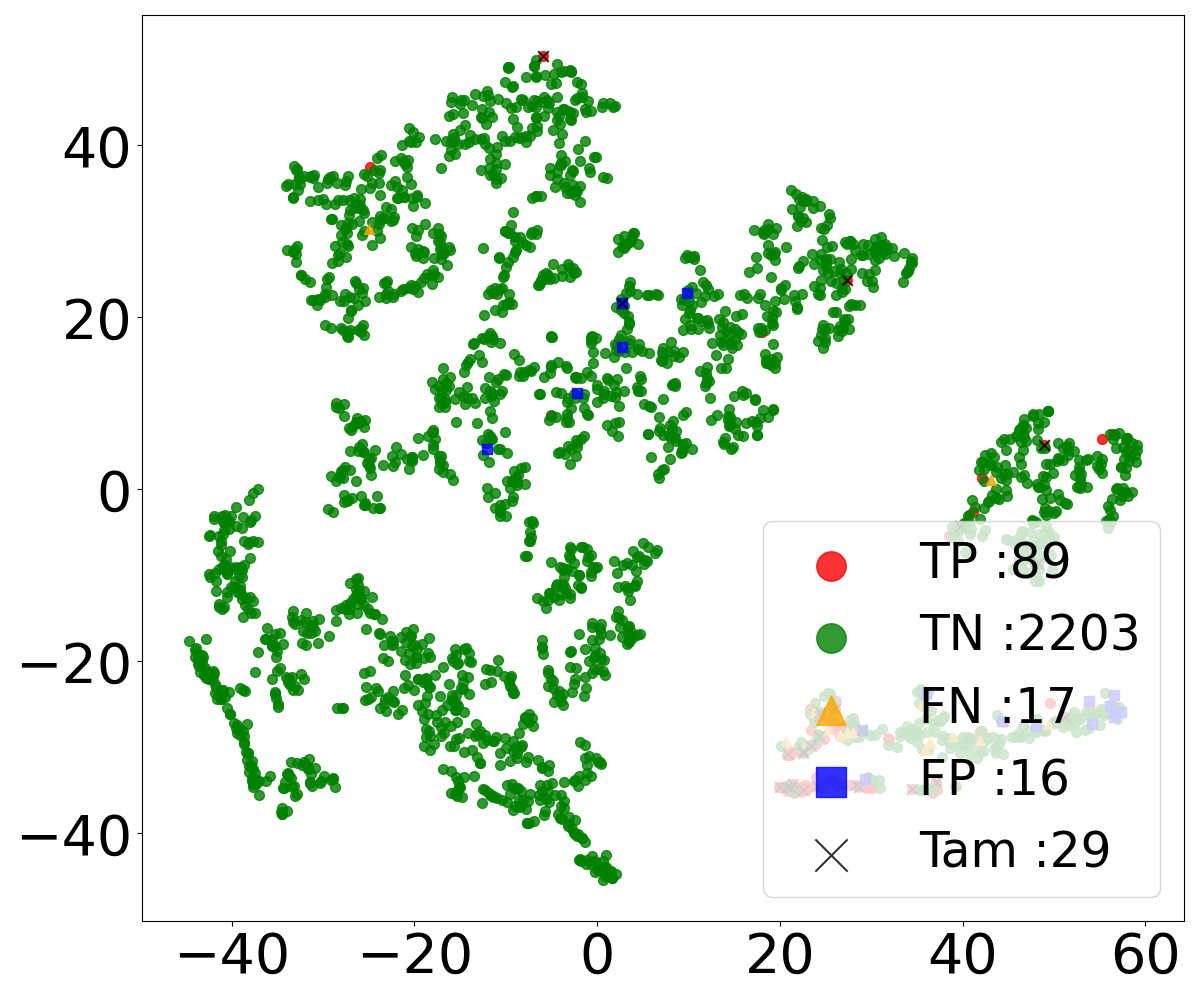}
  }\hspace{-0.6em}
  \subfigure[SVM (Orig)]{
    \includegraphics[width=0.19\linewidth]{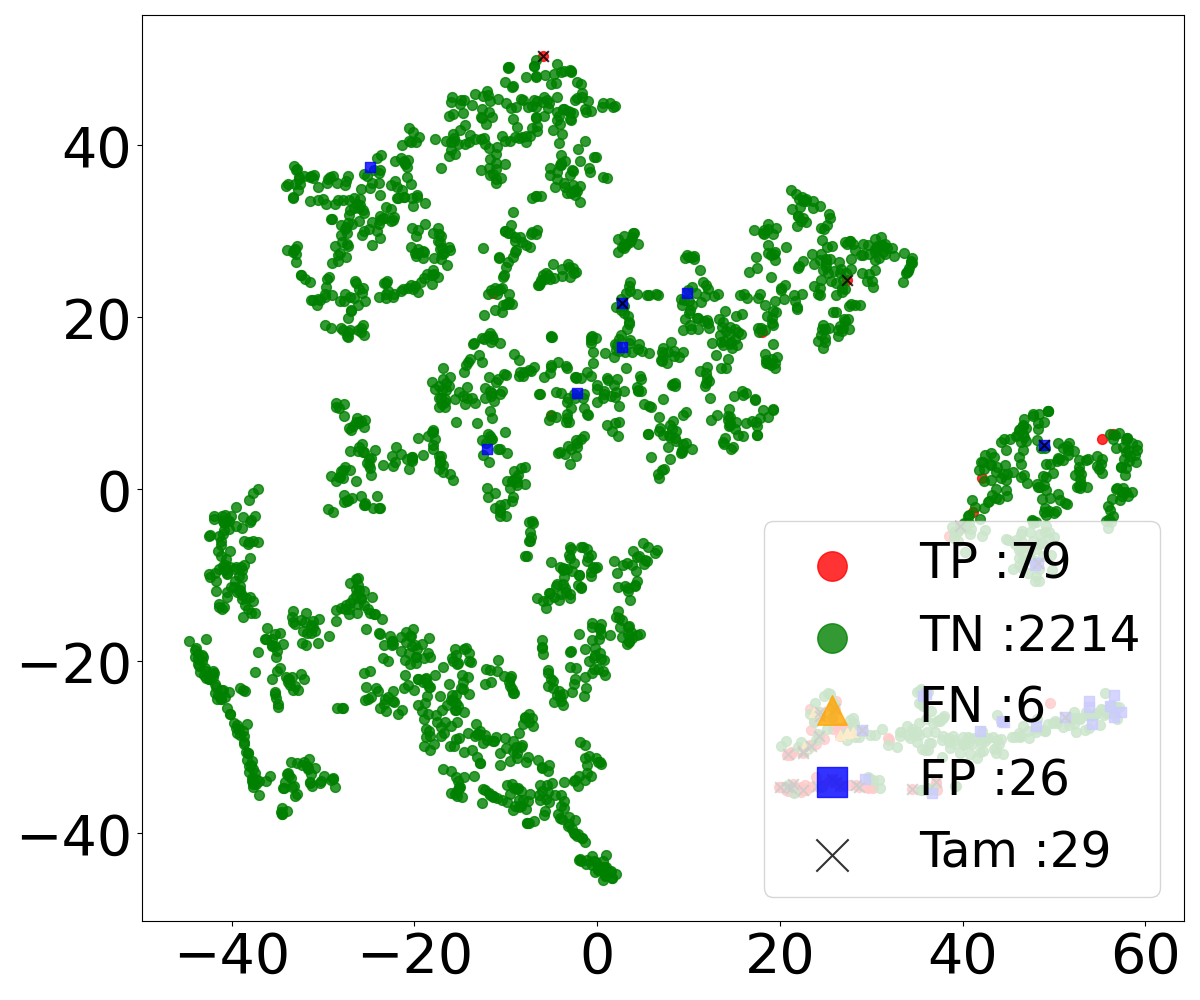}
  }\hspace{-0.6em}
  \subfigure[Tree (Orig)]{
    \includegraphics[width=0.19\linewidth]{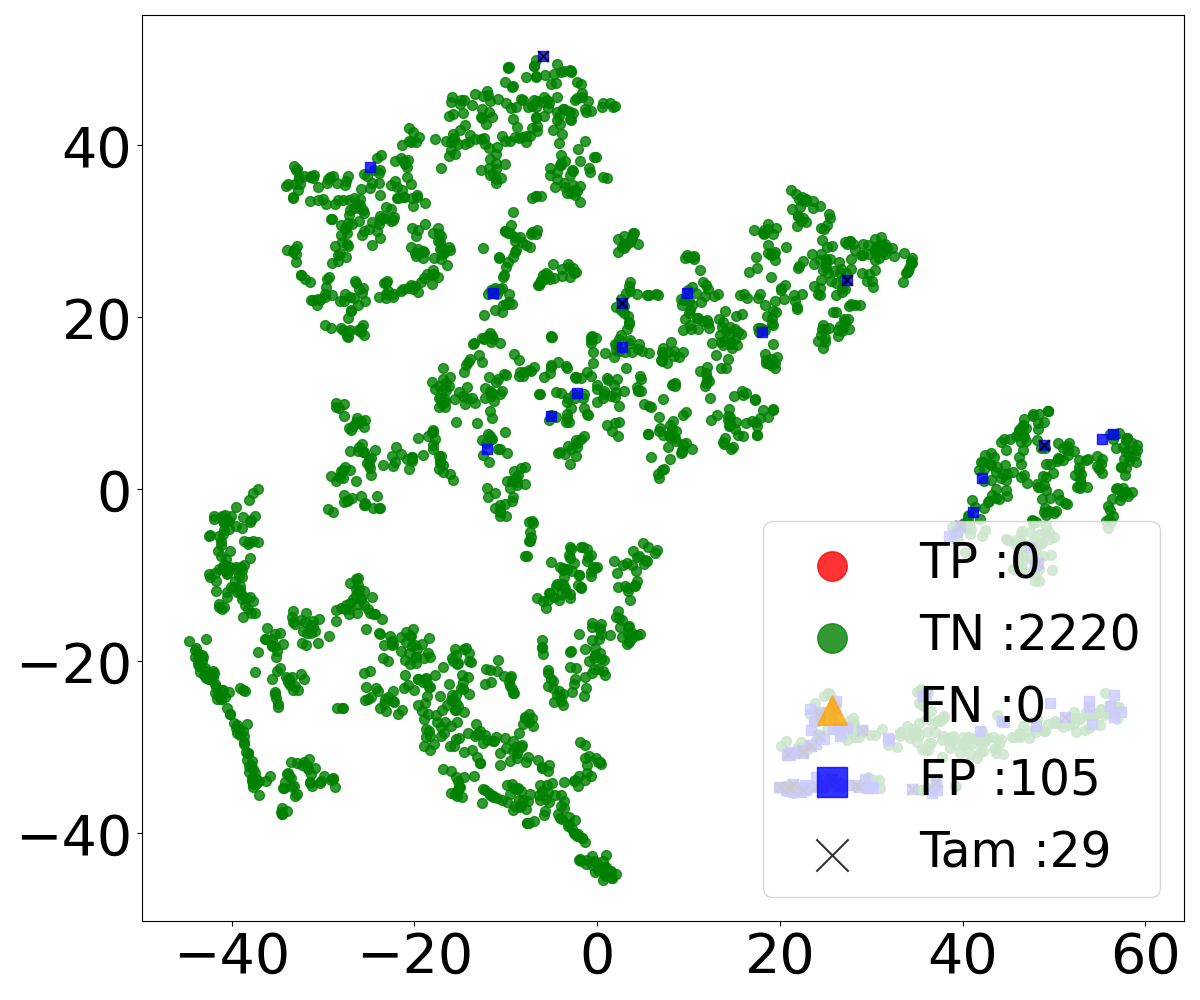}
  }
  \subfigure[Grad (Trans)]{
    \includegraphics[width=0.19\linewidth]{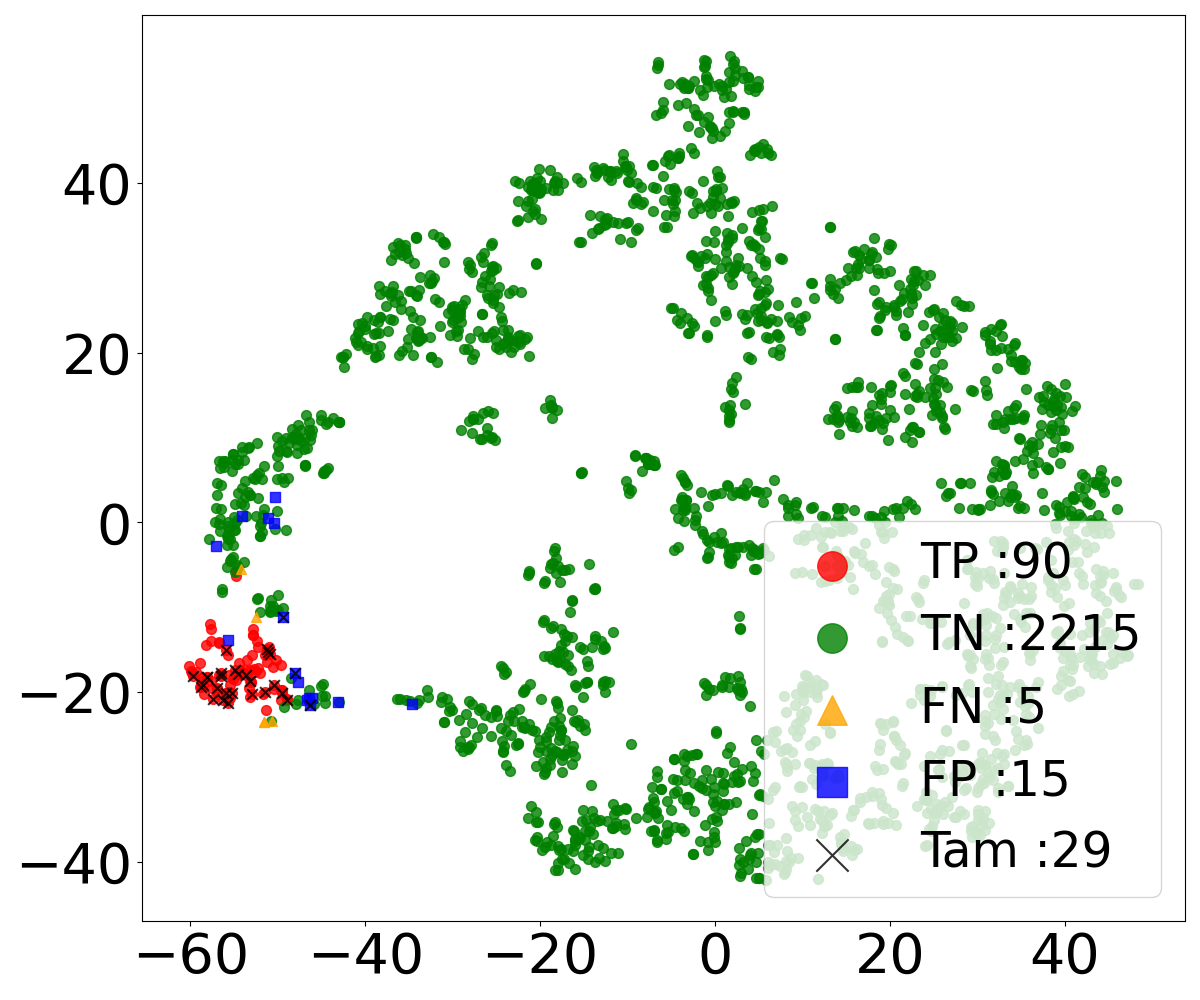}
  }\hspace{-0.6em}
  \subfigure[Logistic (Trans)]{
    \includegraphics[width=0.19\linewidth]{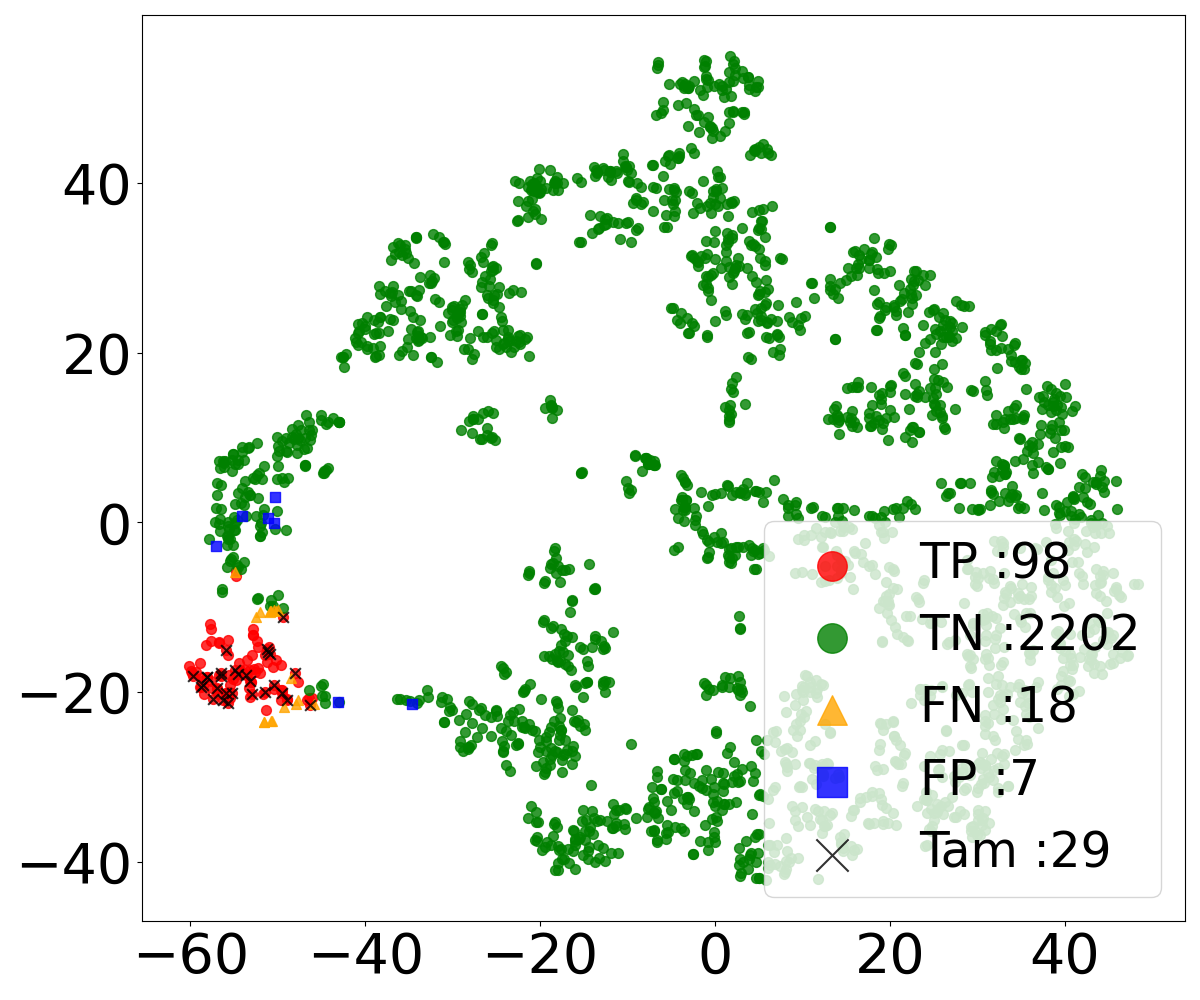}
  }\hspace{-0.6em}
  \subfigure[Newton (Trans)]{
    \includegraphics[width=0.19\linewidth]{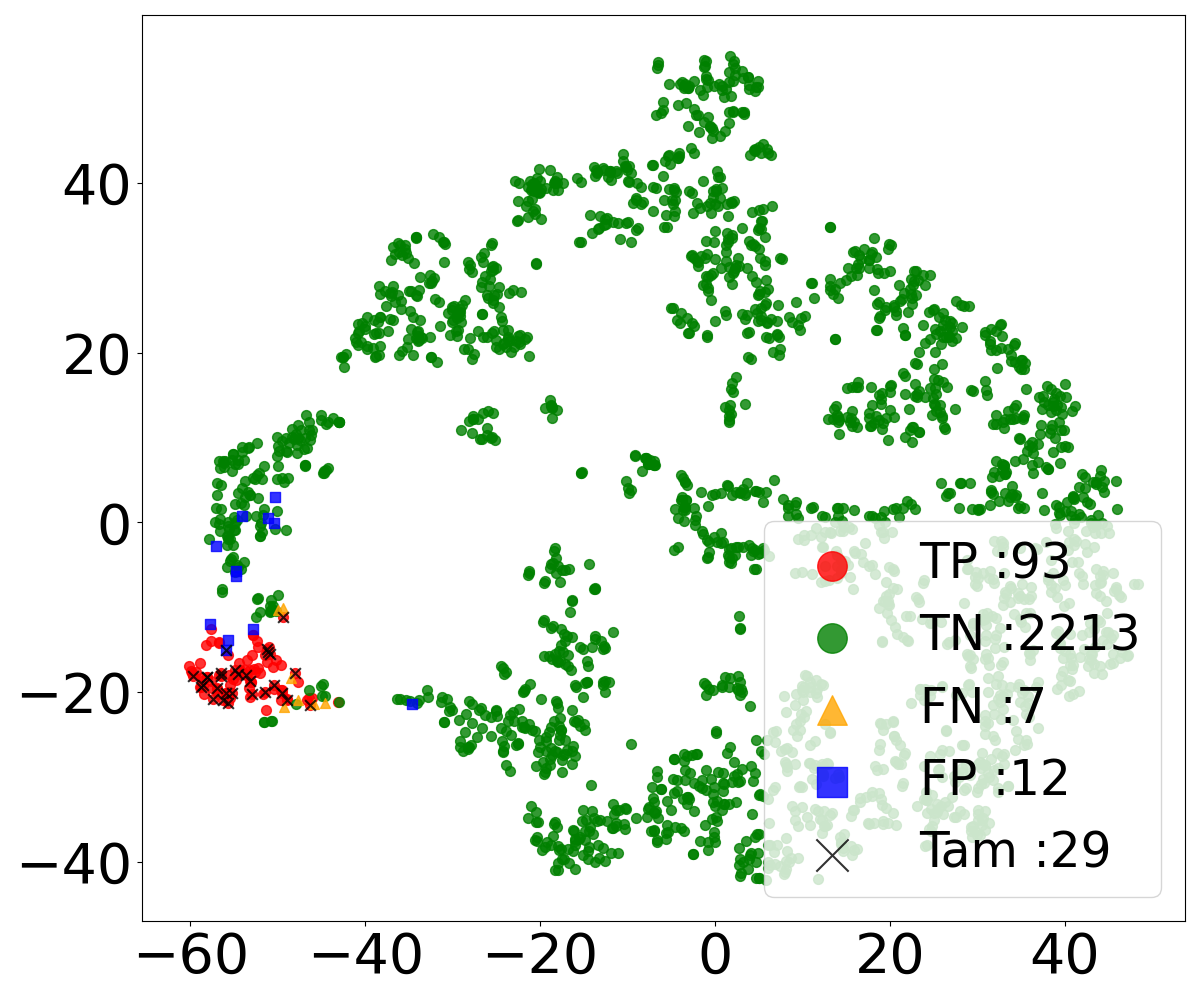}
  }\hspace{-0.6em}
  \subfigure[SVM (Trans)]{
    \includegraphics[width=0.19\linewidth]{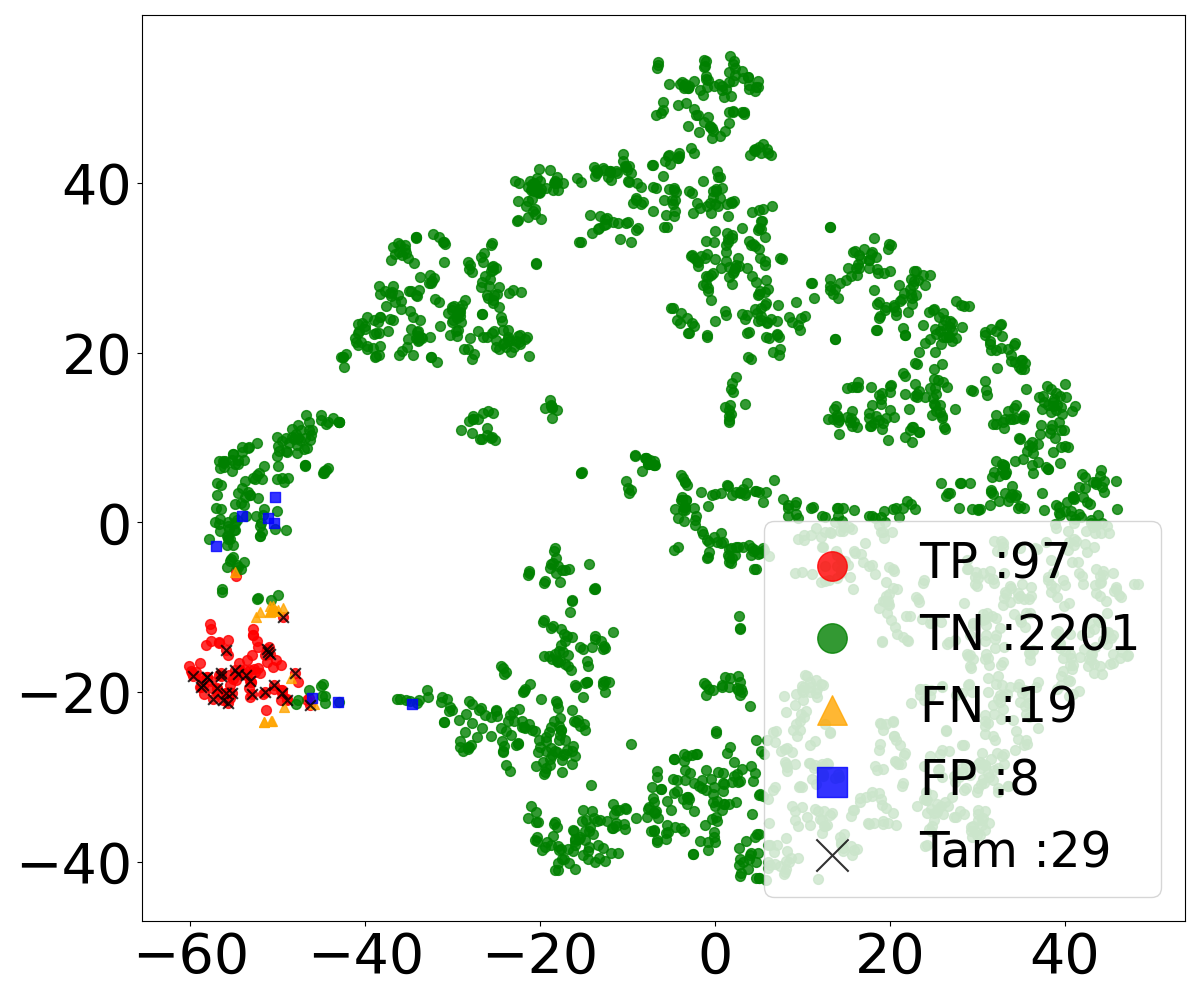}
  }\hspace{-0.6em}
  \subfigure[Tree (Trans)]{
    \includegraphics[width=0.19\linewidth]{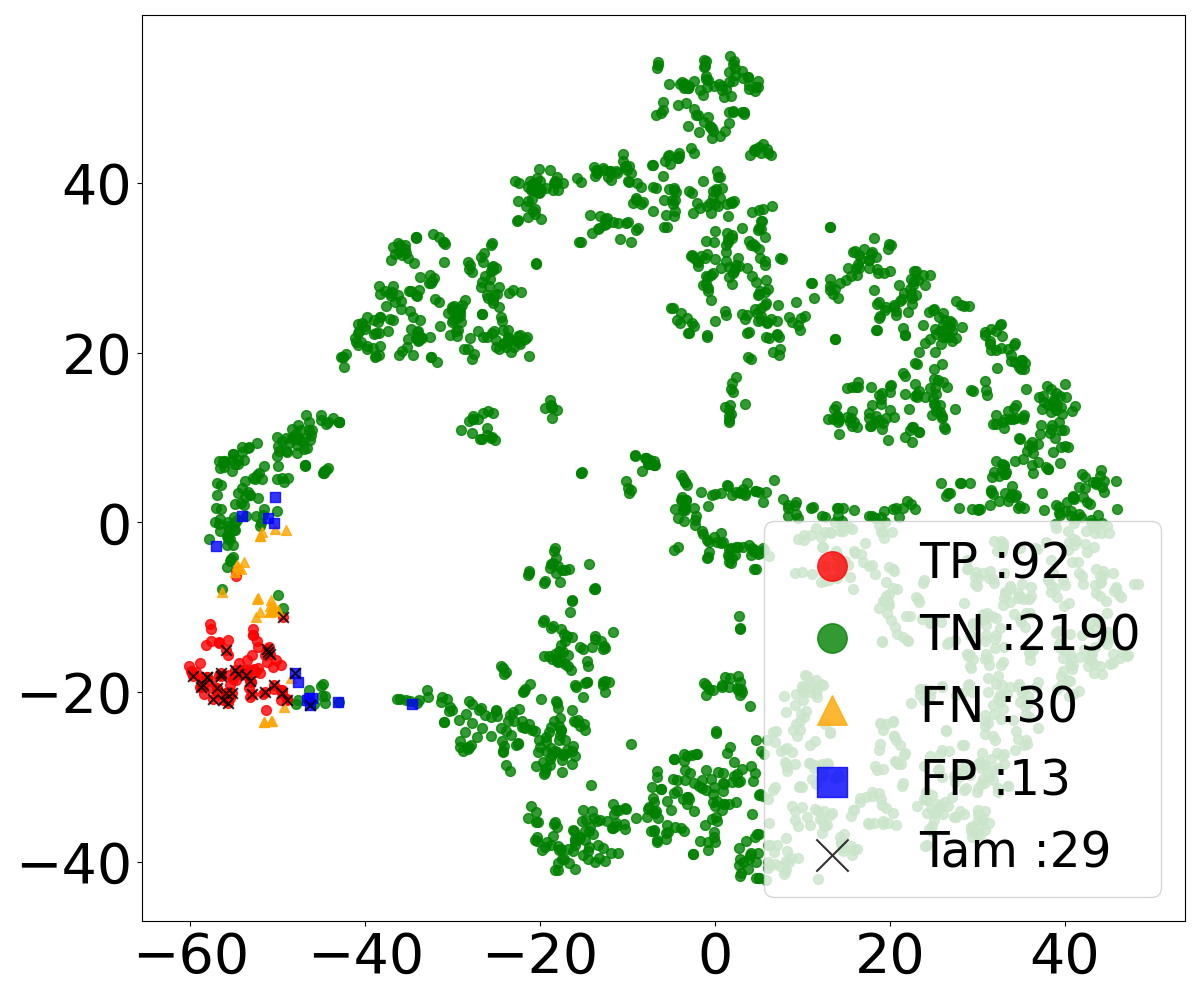}
  }
  \caption{Detection results under different methods. Top: original data; bottom: Transformer-enhanced data. Labels: TP (True Positive), FP (False Positive), TN (True Negative), FN (False Negative), Tam (Tampered).}
  \label{fig:scatter}
\end{figure*}

Figure~\ref{fig:scatter} presents scatter plots corresponding to the confusion matrices in Figure~\ref{fig:confusion_matrices}. Clearly, the proposed NRP-TB-KP algorithm achieves more accurate classification than the compared methods. The points labeled ``Tam'' represent cases with falsified excessive emissions, which are effectively identified and corrected by our proposed approach.

\begin{figure}[!tbp]
    \centering
    \includegraphics[width=0.6\linewidth]{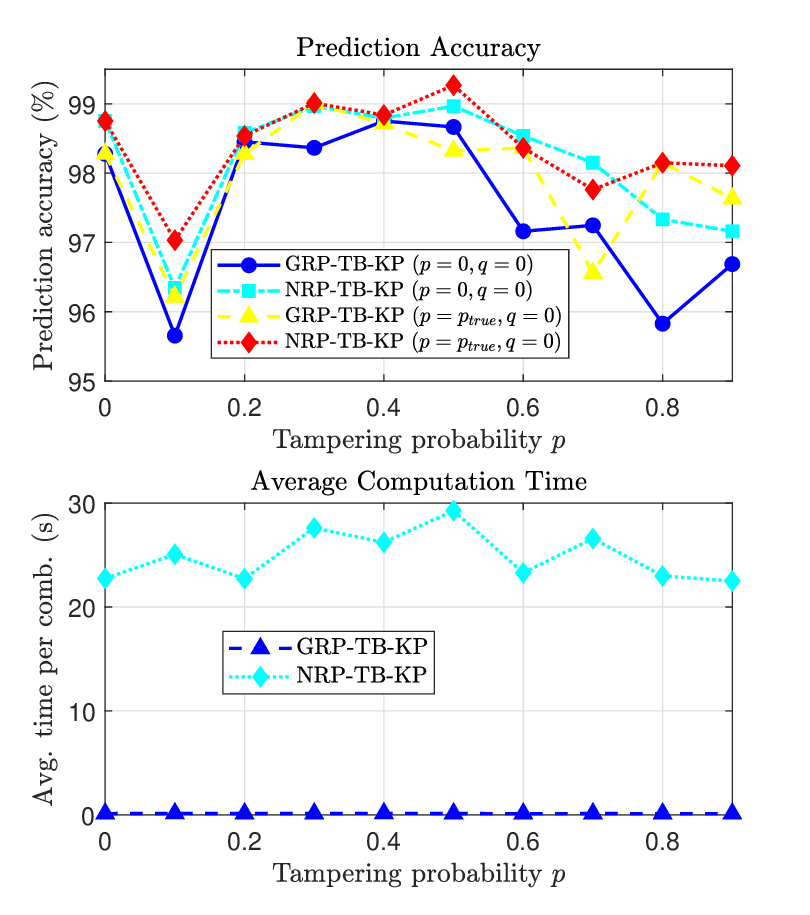}
    \caption{Prediction accuracy and computation time under varying tampering probabilities.}
    \label{fig:tampering_comparison}
\end{figure}

Figure~\ref{fig:tampering_comparison} shows the average accuracy and computation time of different methods under varying tampering probabilities ($p = 0$ to $0.9$). Observations include: (i) Newton-based methods generally achieve higher accuracy, whereas gradient-based methods have shorter computation times; (ii) ignoring tampering consistently leads to decreased performance; and (iii) accuracy typically declines as the tampering probability increases, except the case $p=0.1$. Nonetheless, methods explicitly incorporating tampering probabilities consistently outperform those that do not. These results highlight the importance of considering tampering in emission detection tasks and illustrate the trade-off between accuracy and computational efficiency in practical scenarios.

\section{Concluding remarks}

This paper investigated the problem of parameter estimation and adaptive control for systems with binary observations under data tampering attacks. We developed both gradient-based and second-order Quasi-Newton identification algorithms that are applicable when the attack strategy is either known or unknown. The proposed methods ensure asymptotic convergence of parameter estimates and do not rely on PE condition. In addition, the second-order algorithm was integrated into an adaptive control framework, allowing for explicit tracking error bounds in binary FIR systems even under unknown attacks. Simulation results show the robustness and efficiency of the algorithms, and a vehicle emission control task is used to test their ability.

Future work may extend the framework to multi-agent systems, consider time-varying attack models, or apply it to networked control systems with quantized or event-triggered communication.

\bibliographystyle{IEEEtran}
\bibliography{ifacconf}

@article{lee2015cyber,
  title={A cyber-physical systems architecture for industry 4.0-based manufacturing systems},
  author={Lee, Jay and Bagheri, Behrad and Kao, Hung-An},
  journal={Manufacturing letters},
  volume={3},
  pages={18--23},
  year={2015},
  publisher={Elsevier}
}

@article{hahn2013cyber,
  title={Cyber-physical security testbeds: Architecture, application, and evaluation for smart grid},
  author={Hahn, Adam and Ashok, Aditya and Sridhar, Siddharth and Govindarasu, Manimaran},
  journal={IEEE Transactions on Smart Grid},
  volume={4},
  number={2},
  pages={847--855},
  year={2013},
  publisher={IEEE}
}

@article{peng2017optimal,
  title={Optimal attack energy allocation against remote state estimation},
  author={Peng, Lianghong and Shi, Ling and Cao, Xianghui and Sun, Changyin},
  journal={IEEE Transactions on Automatic Control},
  volume={63},
  number={7},
  pages={2199--2205},
  year={2017},
  publisher={IEEE}
}

@article{Pasqualetti2013,
  author    = {F. Pasqualetti and F. Dörfler and F. Bullo},
  title     = {Attack detection and identification in cyber-physical systems},
  journal   = {IEEE Transactions on Automatic Control},
  volume    = {58},
  number    = {11},
  pages     = {2715--2729},
  year      = {2013},
  month     = {Nov.}
}

@article{Kwon2018,
  author    = {C. Kwon and I. Hwang},
  title     = {Reachability analysis for safety assurance of cyber-physical systems against cyber attacks},
  journal   = {IEEE Transactions on Automatic Control},
  volume    = {63},
  number    = {7},
  pages     = {2272--2279},
  year      = {2018},
}

@article{Oliva2018,
  author    = {G. Oliva and S. Cioabă and C. N. Hadjicostis},
  title     = {Distributed calculation of edge-disjoint spanning trees for robustifying distributed algorithms against man-in-the-middle attacks},
  journal   = {IEEE Transactions on Control of Networked Systems},
  volume    = {5},
  number    = {4},
  pages     = {1646--1656},
  year      = {2018},
  month     = {Dec.}
}

@article{Yamada2019,
  author    = {K. Yamada and J. Hoshino and R. Kubo},
  title     = {Detection of data tampering attacks using redundant network paths with different delays for networked control systems},
  journal   = {Nonlinear Theory and Its Applications},
  volume    = {10},
  number    = {2},
  pages     = {140--156},
  year      = {2019}
}

@book{Wang2010,
  author    = {L. Y. Wang and G. G. Yin and J. F. Zhang and Y. L. Zhao},
  title     = {System Identification with Quantized Observations},
  publisher = {Birkhäuser},
  address   = {Boston, MA, USA},
  year      = {2010}
}

@article{Wu2013,
  author    = {J. Wu and Q. Jia and K. Johansson and L. Shi},
  title     = {Event-based sensor data scheduling: Trade-off between sensor communication rate and estimation quality},
  journal   = {IEEE Transactions on Automatic Control},
  volume    = {58},
  number    = {4},
  pages     = {1041--1046},
  year      = {2013},
  month     = {Apr.}
}

@article{Guo2013,
  author    = {J. Guo and Y. Zhao},
  title     = {Recursive projection algorithm on FIR system identification with binary-valued observations},
  journal   = {Automatica},
  volume    = {49},
  pages     = {3396--3401},
  year      = {2013}
}

@article{Wang2003,
  author    = {L. Y. Wang and J. F. Zhang and G. Yin},
  title     = {System identification using binary sensors},
  journal   = {IEEE Transactions on Automatic Control},
  volume    = {48},
  number    = {11},
  pages     = {1892--1907},
  year      = {2003},
  month     = {Nov.}
}

@article{Casini2011,
  author    = {M. Casini and A. Garulli and A. Vicino},
  title     = {Input design in worst-case system identification using binary sensors},
  journal   = {IEEE Transactions on Automatic Control},
  volume    = {56},
  number    = {5},
  pages     = {1186--1191},
  year      = {2011},
  month     = {May.}
}

@article{Guo2015,
  author    = {J. Guo and L. Y. Wang and G. Yin and Y. L. Zhao and J. F. Zhang},
  title     = {Asymptotically efficient identification of FIR systems with quantized observations and general quantized inputs},
  journal   = {Automatica},
  volume    = {57},
  pages     = {113--122},
  year      = {2015}
}

@article{Bottegal2017,
  author    = {G. Bottegal and H. Hjalmarsson and G. Pillonetto},
  title     = {A new kernel-based approach to system identification with quantized output data},
  journal   = {Automatica},
  volume    = {85},
  pages     = {145--152},
  year      = {2017}
}

@article{Pouliquen2020,
  author    = {M. Pouliquen and E. Pigeon and O. Gehan and A. Goudjil},
  title     = {Identification using binary measurements for IIR systems},
  journal   = {IEEE Transactions on Automatic Control},
  volume    = {65},
  number    = {2},
  pages     = {786--793},
  year      = {2020},
  month     = {Feb.}
}

@article{guo2021system,
  title={System identification with binary-valued output observations under either-or communication and data packet dropout},
  author={Guo, Jin and Cheng, Jing and Diao, Jing-Dong},
  journal={Systems {\rm \&} Control Letters},
  volume={156},
  pages={105010},
  year={2021},
  publisher={Elsevier}
}

@article{guo2020system,
  title={System identification with binary-valued observations under data tampering attacks},
  author={Guo, Jin and Wang, Xuebin and Xue, Wenchao and Zhao, Yanlong},
  journal={IEEE Transactions on Automatic Control},
  volume={66},
  number={8},
  pages={3825--3832},
  year={2020},
  publisher={IEEE}
}

@article{guo2023identification,
  title={Identification of FIR systems with binary-valued observations against data tampering attacks},
  author={Guo, Jin and Jia, Ruizhe and Su, Ruinan and Zhao, Yanlong},
  journal={IEEE Transactions on Systems, Man, and Cybernetics: Systems},
  volume={53},
  number={9},
  pages={5861--5873},
  year={2023},
  publisher={IEEE}
}

@article{guo2025identification,
  title={Identification of FIR Systems with binary-valued observations under replay attacks},
  author={Guo, Jin and Zhang, Qingxiang and Zhao, Yanlong},
  journal={Automatica},
  volume={172},
  pages={112001},
  year={2025},
  publisher={Elsevier}
}

@article{wang2024asymptotically,
  title={Asymptotically efficient quasi-newton type identification with quantized observations under bounded persistent excitations},
  author={Wang, Ying and Zhao, Yanlong and Zhang, Ji-Feng},
  journal={Automatica},
  volume={166},
  pages={111722},
  year={2024},
  publisher={Elsevier}
}

@incollection{chen2021stochastic,
  title={Stochastic Approximation with Applications},
  author={Chen, Han-Fu},
  booktitle={Encyclopedia of Systems and Control},
  pages={2154--2160},
  year={2021},
  publisher={Springer}
}

@book{pedregal2004introduction,
  title={Introduction to optimization},
  author={Pedregal, Pablo},
  volume={46},
  year={2004},
  publisher={Springer}
}

@article{wang2024differentially,
  title={Differentially private bipartite consensus over signed networks with time-varying noises},
  author={Wang, Jimin and Ke, Jieming and Zhang, Ji-Feng},
  journal={IEEE Transactions on Automatic Control},
  year={2024},
  publisher={IEEE}
}

@article{zhang2019asymptotically,
  title={Asymptotically efficient recursive identification of FIR systems with binary-valued observations},
  author={Zhang, Hang and Wang, Ting and Zhao, Yanlong},
  journal={IEEE Transactions on Systems, Man, and Cybernetics: Systems},
  volume={51},
  number={5},
  pages={2687--2700},
  year={2019},
  publisher={IEEE}
}

@book{ljung1983theory,
  title={Theory and practice of recursive identification},
  author={Ljung, Lennart and S{\"o}derstr{\"o}m, Torsten},
  year={1983},
  publisher={MIT press}
}

@article{zhang2022identification,
  title={Identification and adaptation with binary-valued observations under non-persistent excitation condition},
  author={Zhang, Lantian and Zhao, Yanlong and Guo, Lei},
  journal={Automatica},
  volume={138},
  pages={110158},
  year={2022},
  publisher={Elsevier}
}

@incollection{guo2020time,
  title={Time-varying stochastic systems, stability and adaptive theory},
  author={Guo, Lei},
  booktitle={2nd ed.},
  year={2020},
  publisher={Science Press}
}

@inproceedings{guo2023iterative,
  title={On Iterative Parameter Identification of FIR Systems with Batched Possibly Incorrect Binary-Valued Observations},
  author={Guo, Jian and Xue, Wenchao and Wang, Ting and Zhang, Ji-Feng and Zhang, Yanjun},
  booktitle={2023 62nd IEEE Conference on Decision and Control (CDC)},
  pages={4936--4941},
  year={2023},
  organization={IEEE}
}

@article{lai1982least,
  title={Least squares estimates in stochastic regression models with applications to identification and control of dynamic systems},
  author={Lai, Tze Leung and Wei, Ching Zong},
  journal={The Annals of Statistics},
  pages={154--166},
  year={1982},
  publisher={JSTOR}
}

@article{kazemi2021finite,
  title={Finite-time secure dynamic state estimation for cyber--physical systems under unknown inputs and sensor attacks},
  author={Kazemi, Zahra and Safavi, Ali Akbar and Arefi, Mohammad Mehdi and Naseri, Farshid},
  journal={IEEE transactions on systems, man, and cybernetics: systems},
  volume={52},
  number={8},
  pages={4950--4959},
  year={2021},
  publisher={IEEE}
}

@article{ma2024ctomp,
  title={CToMP: a cycle-task-oriented memory protection scheme for unmanned systems},
  author={Ma, Chengyan and Xi, Ning and Lu, Di and Feng, Yebo and Ma, Jianfeng},
  journal={Science China Information Sciences},
  volume={67},
  number={6},
  pages={162305},
  year={2024},
  publisher={Springer}
}

@article{Zhang2023,
  author = {Zhang, Wei},
  title = {Various Forms of Environmental Monitoring Data Falsification Severely Punished and Exposed by Ecological Environment Departments},
  journal = {Legal Daily},
  year = {2023},
  volume = {},
  number = {006},
  pages = {},
  doi = {10.28241/n.cnki.nfzrb.2023.004683},
  note = {(In Chinese)}
}

@article{Wang2025,
  author = {Wang, Cheng and Wu, Jianfeng},
  title = {Data Tampering, Substitute Testing, and Other Frequent Fraudulent Practices in Vehicle Emissions Testing},
  journal = {Xinhua Daily Telegraph},
  year = {2025},
  volume = {},
  number = {002},
  pages = {},
  doi = {10.28870/n.cnki.nxhmr.2025.001028},
  note = {(In Chinese)}
}

@inproceedings{vaswani2017attention,
  author = {Vaswani, Ashish and Shazeer, Noam and Parmar, Niki and Uszkoreit, Jakob and Jones, Llion and Gomez, Aidan N. and Kaiser, {\L}ukasz and Polosukhin, Illia},
  title = {Attention is all you need},
  booktitle = {Advances in Neural Information Processing Systems},
  volume = {30},
  pages = {5998--6008},
  year = {2017}
}

@inproceedings{your_conference_paper,
  author       = {Jian Guo and Wenchao Xue and Ji-Feng Zhang},
  title        = {Recursive Identification of FIR Systems under Data Tampering Attacks with Binary-Valued Outputs},
  booktitle    = {Proceedings of the 13th IFAC Symposium on Nonlinear Control Systems (NOLCOS 2025)},
  year         = {2025},
  month        = jul,
  address      = {Reykjavík, Iceland}
}

@article{pelechrinis2010denial,
  title={Denial of service attacks in wireless networks: The case of jammers},
  author={Pelechrinis, Konstantinos and Iliofotou, Marios and Krishnamurthy, Srikanth V},
  journal={IEEE Communications surveys \& tutorials},
  volume={13},
  number={2},
  pages={245--257},
  year={2010},
  publisher={IEEE}
}

@article{jovanov2019relaxing,
  title={Relaxing integrity requirements for attack-resilient cyber-physical systems},
  author={Jovanov, Ilija and Pajic, Miroslav},
  journal={IEEE Transactions on Automatic Control},
  volume={64},
  number={12},
  pages={4843--4858},
  year={2019},
  publisher={IEEE}
}

@article{fawzi2014secure,
  title={Secure estimation and control for cyber-physical systems under adversarial attacks},
  author={Fawzi, Hamza and Tabuada, Paulo and Diggavi, Suhas},
  journal={IEEE Transactions on Automatic control},
  volume={59},
  number={6},
  pages={1454--1467},
  year={2014},
  publisher={IEEE}
}

@article{sasahara2023asymptotic,
  title={Asymptotic Security using Bayesian Defense Mechanism with Application to Cyber Deception},
  author={Sasahara, Hampei and Sandberg, Henrik},
  journal={IEEE Transactions on Automatic Control},
  volume={69},
  number={8},
  pages={5004--5019},
  year={2023},
  publisher={IEEE}
}

@article{umsonst2023bayesian,
  title={A Bayesian Nash equilibrium-based moving target defense against stealthy sensor attacks},
  author={Umsonst, David and Sar{\i}ta{\c{s}}, Serkan and D{\'a}n, Gy{\"o}rgy and Sandberg, Henrik},
  journal={IEEE Transactions on Automatic Control},
  volume={69},
  number={3},
  pages={1659--1674},
  year={2023},
  publisher={IEEE}
}

@article{amini2016dynamic,
  title={Dynamic load altering attacks against power system stability: Attack models and protection schemes},
  author={Amini, Sajjad and Pasqualetti, Fabio and Mohsenian-Rad, Hamed},
  journal={IEEE Transactions on Smart Grid},
  volume={9},
  number={4},
  pages={2862--2872},
  year={2016},
  publisher={IEEE}
}

@article{bai2017data,
  title={Data-injection attacks in stochastic control systems: Detectability and performance tradeoffs},
  author={Bai, Cheng-Zong and Pasqualetti, Fabio and Gupta, Vijay},
  journal={Automatica},
  volume={82},
  pages={251--260},
  year={2017},
  publisher={Elsevier}
}

@article{katewa2021security,
  title={On a security vs privacy trade-off in interconnected dynamical systems},
  author={Katewa, Vaibhav and Anguluri, Rajasekhar and Pasqualetti, Fabio},
  journal={Automatica},
  volume={125},
  pages={109426},
  year={2021},
  publisher={Elsevier}
}

@article{jin2017adaptive,
  title={An adaptive control architecture for mitigating sensor and actuator attacks in cyber-physical systems},
  author={Jin, Xu and Haddad, Wassim M and Yucelen, Tansel},
  journal={IEEE Transactions on Automatic Control},
  volume={62},
  number={11},
  pages={6058--6064},
  year={2017},
  publisher={IEEE}
}

@article{yucelen2016adaptive,
  title={Adaptive control architectures for mitigating sensor attacks in cyber-physical systems},
  author={Yucelen, Tansel and Haddad, Wassim M and Feron, Eric M},
  journal={Cyber-Physical Systems},
  volume={2},
  number={1-4},
  pages={24--52},
  year={2016},
  publisher={Taylor \& Francis}
}

@article{haddad2025adaptive,
  title={Adaptive control for nonlinear cyber--physical systems in the presence of actuator attacks},
  author={Haddad, Wassim M and Venkat, Dhruva and Yucelen, Tansel and Whorton, Mark S},
  journal={Nonlinear Analysis: Real World Applications},
  volume={84},
  pages={104302},
  year={2025},
  publisher={Elsevier}
}

@article{liu2022secure,
  title={Secure-by-construction synthesis of cyber-physical systems},
  author={Liu, Siyuan and Trivedi, Ashutosh and Yin, Xiang and Zamani, Majid},
  journal={Annual Reviews in Control},
  volume={53},
  pages={30--50},
  year={2022},
  publisher={Elsevier}
}

@article{zhong2025secure,
  title={Secure-by-construction synthesis for control systems},
  author={Zhong, Bingzhuo and Liu, Siyuan and Caccamo, Marco and Zamani, Majid},
  journal={IEEE Transactions on Automatic Control},
  year={2025},
  publisher={IEEE}
}

@article{zheng2022survey,
  title={A survey on channel estimation and practical passive beamforming design for intelligent reflecting surface aided wireless communications},
  author={Zheng, Beixiong and You, Changsheng and Mei, Weidong and Zhang, Rui},
  journal={IEEE Communications Surveys \& Tutorials},
  volume={24},
  number={2},
  pages={1035--1071},
  year={2022},
  publisher={IEEE}
}

@article{lee2021downlink,
  title={Downlink channel reconstruction for spatial multiplexing in massive MIMO systems},
  author={Lee, Hyeongtaek and Choi, Hyuckjin and Kim, Hwanjin and Kim, Sucheol and Jang, Chulhee and Choi, Yongyun and Choi, Junil},
  journal={IEEE Transactions on Wireless Communications},
  volume={20},
  number={9},
  pages={6154--6166},
  year={2021},
  publisher={IEEE}
}

@book{ljung1995system,
  title={System identification toolbox: User's guide},
  author={Ljung, Lennart},
  year={1995},
  publisher={Citeseer}
}

\appendix

\section{Proofs of Theorem \ref{thm1} and \ref{thm2} }
Recall the estimate error $\tilde{\theta}_k=\hat{\theta}_k-\theta$. First, we give the following useful lemmas.
\begin{proposition}\label{prop1}
The projection operator given by Definition \ref{def1} follows
$$
\left\|\Pi_{\Omega}(x)-\Pi_{\Omega}(y)\right\| \leq\|x-y\|, \quad \forall x, y \in \mathbb{R}^n.
$$
\end{proposition}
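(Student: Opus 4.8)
The plan is to derive non-expansiveness from the first-order (obtuse-angle) characterization of the Euclidean projection onto a convex set. \textbf{Step 1: well-definedness.} First I would note that $\Pi_\Omega(x)$ is well-defined for every $x$: since $\Omega$ is compact and $\omega \mapsto \|x-\omega\|$ is continuous, a minimizer exists; since $\Omega$ is convex and the squared Euclidean norm is strictly convex, the minimizer is unique (if $\omega_1 \neq \omega_2$ were both minimizers, their midpoint lies in $\Omega$ by convexity and is strictly closer to $x$ by the parallelogram identity, a contradiction).

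\textbf{Step 2: variational inequality.} Next I would prove the optimality condition: $z = \Pi_\Omega(x)$ is equivalent to $z\in\Omega$ together with
\[
\langle x - z,\ \omega - z\rangle \le 0 \qquad \text{for all } \omega \in \Omega.
\]
This follows by fixing $\omega\in\Omega$, using convexity to note $z + t(\omega - z)\in\Omega$ for $t\in[0,1]$, and differentiating $g(t) = \|x - z - t(\omega-z)\|^2$ at $t = 0^+$: minimality of $z$ forces $g'(0^+) = -2\langle x-z,\ \omega-z\rangle \ge 0$, which is the stated inequality. (The converse direction, not needed here, follows from the same expansion.)

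\textbf{Step 3: the inequality manipulation.} Set $u = \Pi_\Omega(x)$ and $v = \Pi_\Omega(y)$. Applying Step 2 at $x$ with test point $v$, and at $y$ with test point $u$, gives
\[
\langle x - u,\ v - u\rangle \le 0, \qquad \langle y - v,\ u - v\rangle \le 0.
\]
Rewriting the second as $\langle v - y,\ v - u\rangle \le 0$ and adding yields $\langle (x - y) - (u - v),\ v - u\rangle \le 0$, i.e.
\[
\|u - v\|^2 \le \langle x - y,\ u - v\rangle.
\]
By the Cauchy–Schwarz inequality, $\|u-v\|^2 \le \|x-y\|\,\|u-v\|$; dividing by $\|u-v\|$ when $u\neq v$ (the case $u=v$ being trivial) gives $\|u-v\|\le\|x-y\|$, which is exactly the claim.

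The only genuine subtlety is the variational characterization in Step 2; once it is established, the remainder is a two-line computation, and the well-definedness in Step 1 is routine given that $\Omega$ is convex and compact. I would therefore expect Step 2 to be the main (and only real) obstacle.
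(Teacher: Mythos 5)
Your proof is correct and complete: the variational inequality $\langle x-\Pi_\Omega(x),\,\omega-\Pi_\Omega(x)\rangle\le 0$ is established properly via the one-sided derivative of $g$, and the summation-plus-Cauchy--Schwarz step is the standard route to non-expansiveness. The paper itself states Proposition~\ref{prop1} without proof, treating it as a classical fact about Euclidean projections onto convex sets, so there is no in-paper argument to compare against; your write-up is exactly the canonical textbook proof and fills that gap correctly, including the well-definedness of the projection under convexity and compactness of $\Omega$.
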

\begin{lemma}\cite{pedregal2004introduction}\label{lem2}
Let \( \{p_k\}, \{q_k\} \) and \( \{\alpha_k\} \) be real sequences satisfying
$
p_{k+1} \leq (1 - q_k) p_k + \alpha_k, \quad \text{where} \quad 0 < q_k \leq 1, \quad \sum_{k=0}^{\infty} q_k = \infty, \quad \alpha_k \geq 0,$ and $\lim_{k \to \infty} \frac{\alpha_k}{q_k} = 0.
$
Then, 
$
\limsup_{k \to \infty} p_k \leq 0.
$
\end{lemma}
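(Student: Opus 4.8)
The plan is to show that the driving term $\alpha_k$ is asymptotically negligible relative to the contraction rate $q_k$, so that the recursion behaves like the homogeneous one $p_{k+1}\le(1-q_k)p_k$, whose solutions are forced to be nonpositive in the limit by the divergence of $\sum q_k$. First I would fix an arbitrary $\varepsilon>0$; since $\alpha_k/q_k\to0$, there is an index $N=N(\varepsilon)$ with $\alpha_k\le\varepsilon q_k$ for all $k\ge N$, and hence
\[
p_{k+1}\le(1-q_k)p_k+\varepsilon q_k,\qquad k\ge N.
\]

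Next I would pass to the shifted sequence $r_k:=p_k-\varepsilon$. Substituting $p_k=r_k+\varepsilon$ and simplifying via $(1-q_k)(r_k+\varepsilon)+\varepsilon q_k-\varepsilon=(1-q_k)r_k$, one obtains the clean homogeneous inequality $r_{k+1}\le(1-q_k)r_k$ for all $k\ge N$, where $1-q_k\in[0,1)$ because $0<q_k\le1$. A short two-case argument then closes the estimate: if $r_{k_0}\le0$ for some $k_0\ge N$, then $r_k\le0$ for all $k\ge k_0$ by induction (using $1-q_k\ge0$); otherwise $r_k>0$ for every $k\ge N$, and iterating gives
\[
0<r_{N+m}\le r_N\prod_{j=N}^{N+m-1}(1-q_j)\le r_N\exp\!\Big(-\!\sum_{j=N}^{N+m-1}q_j\Big)\longrightarrow 0\quad(m\to\infty),
\]
the limit being a consequence of $\sum_j q_j=\infty$. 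In both cases $\limsup_{k\to\infty}r_k\le0$, i.e.\ $\limsup_{k\to\infty}p_k\le\varepsilon$; letting $\varepsilon\downarrow0$ yields the claim.

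I expect the only points needing care to be bookkeeping rather than conceptual: the recursion must be handled without assuming $p_k\ge0$, which is precisely why the shift $r_k=p_k-\varepsilon$ (making the $\varepsilon q_k$ term cancel with no residue) is the right device, and one invokes the elementary fact that $\prod_j(1-q_j)=0$ whenever $q_j\in(0,1]$ and $\sum_j q_j=\infty$, via $1-x\le e^{-x}$. Neither is a genuine obstacle, so the proof should be short; the hard part is merely selecting the auxiliary sequence so that the perturbation term is absorbed exactly.
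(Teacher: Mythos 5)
Your proof is correct. Note that the paper does not actually prove this lemma --- it is imported verbatim from the cited reference \cite{pedregal2004introduction} and used as a black box in the proofs of Theorems \ref{thm1} and \ref{thm3} --- so there is no in-paper argument to compare against; your write-up is a valid, self-contained elementary proof of the cited result. The two key devices both check out: the shift $r_k=p_k-\varepsilon$ makes the perturbation $\varepsilon q_k$ cancel exactly, reducing to the homogeneous inequality $r_{k+1}\le(1-q_k)r_k$ with $1-q_k\in[0,1)$; and the exhaustive two-case analysis (either $r_k$ becomes nonpositive and stays so because the factors are nonnegative, or it stays positive and is killed by $\prod_j(1-q_j)\le\exp(-\sum_j q_j)\to 0$) correctly avoids any hidden assumption that $p_k\ge 0$. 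Letting $\varepsilon\downarrow 0$ then gives $\limsup_{k\to\infty}p_k\le 0$ as claimed.
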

\begin{lemma}\cite{wang2024differentially}\label{lem3}
For $0<b \leq 1, a>$ $0, k_0 \geq 0$ and sufficiently large l, we have
$$
\begin{cases}\begin{aligned}
& \prod_{i=l}^k\left(1-\frac{a}{\left(i+k_0\right)^b}\right) \\
&\leq \begin{cases}\left(\frac{l+k_0}{k+k_0}\right)^a, & b=1 , \\
e^{\frac{a}{1-b}\left(\left(l+k_0\right)^{1-b}-\left(k+k_0+1\right)^{1-b}\right)}, & b \in(0,1) ;\end{cases} \\
& \sum_{l=1}^k \prod_{i=l}^k\left(1-\frac{a}{\left(i+k_0\right)^b}\right) O\left(\frac{1}{\left(l+k_0\right)^{2 b}}\right)=O\left(\frac{1}{k^b}\right), \\
& \hspace{15.2em} b \in(0,1) .
\end{aligned}
\end{cases}
$$
\end{lemma}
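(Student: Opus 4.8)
The plan is to treat the two assertions separately: the product bound is a direct consequence of the elementary inequality $1-x \le e^{-x}$ combined with an integral comparison, while the summation estimate is best handled by turning the sum into a one-step recursion and running an induction.

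For the product bound, first I would note that the hypothesis ``sufficiently large $l$'' guarantees $a/(i+k_0)^b < 1$ for every $i \ge l$, so each factor lies in $(0,1)$ and the product is positive. Applying $1 - x \le e^{-x}$ termwise gives
\[
\prod_{i=l}^k\Big(1 - \frac{a}{(i+k_0)^b}\Big) \le \exp\Big(-a\sum_{i=l}^k \frac{1}{(i+k_0)^b}\Big).
\]
Since $x \mapsto (x+k_0)^{-b}$ is decreasing, the comparison $\frac{1}{(i+k_0)^b} \ge \int_i^{i+1}(x+k_0)^{-b}\,dx$ yields $\sum_{i=l}^k (i+k_0)^{-b} \ge \int_l^{k+1}(x+k_0)^{-b}\,dx$. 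Evaluating this integral reproduces both stated bounds: for $b=1$ it equals $\ln\frac{k+1+k_0}{l+k_0} \ge \ln\frac{k+k_0}{l+k_0}$, giving the factor $((l+k_0)/(k+k_0))^a$; for $b\in(0,1)$ it equals $\frac{1}{1-b}\big((k+1+k_0)^{1-b} - (l+k_0)^{1-b}\big)$, which is exactly the exponent claimed.

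For the summation estimate, write $T_k := \sum_{l=1}^k \prod_{i=l}^k(1 - a/(i+k_0)^b)\, c(l+k_0)^{-2b}$, absorbing the $O(\cdot)$ into a constant $c$. Factoring out the common $i=k$ term $(1-a/(k+k_0)^b)$ from every summand and separating the $l=k$ contribution produces the recursion $T_k \le (1 - a/(k+k_0)^b)\,T_{k-1} + c(k+k_0)^{-2b}$. I would then prove $T_k \le M(k+k_0)^{-b}$ for a suitable constant $M$ and all large $k$ by induction: substituting the hypothesis, expanding $(k-1+k_0)^{-b} = (k+k_0)^{-b}\big(1 + b(k+k_0)^{-1} + O((k+k_0)^{-2})\big)$, and multiplying by $(1 - a(k+k_0)^{-b})$ shows the leading decrement is $-aM(k+k_0)^{-2b}$, against which both the added $c(k+k_0)^{-2b}$ and the index-shift slack must be dominated. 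Choosing $M > c/a$ then closes the induction and gives $T_k = O(k^{-b})$.

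The main obstacle is this induction step, and specifically the role of the restriction $b<1$. The decrement supplied by the factor $1 - a(k+k_0)^{-b}$ is of order $(k+k_0)^{-2b}$, whereas shifting the index from $k-1$ to $k$ introduces an error of order $(k+k_0)^{-1-b}$; the latter is negligible relative to the former precisely because $1+b > 2b$, i.e. $b<1$, which is exactly where the hypothesis is used. I would take care to make the $O$-terms uniform in $k$ and to fix the threshold for ``large $k$'' before choosing $M$, so that the base case and the inductive step remain compatible.
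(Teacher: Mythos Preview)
The paper does not supply its own proof of this lemma: it is stated with the citation \cite{wang2024differentially} and used as a black box in the proof of Theorem~\ref{thm2}, so there is nothing in the paper to compare your argument against.

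That said, your plan is correct and is essentially the standard proof. The product bound via $1-x\le e^{-x}$ and integral comparison is routine and exactly right. For the summation, your recursion $T_k \le (1-a/(k+k_0)^b)T_{k-1} + c(k+k_0)^{-2b}$ is the natural reduction, and your identification of where $b<1$ enters --- namely that the index-shift error $O((k+k_0)^{-1-b})$ must be dominated by the decrement $O((k+k_0)^{-2b})$ --- is the crux of the induction. The only point to be slightly careful about is that you want strict inequality $M>c/a$ (not just $M\ge c/a$) so that the leading decrement has enough room to absorb the lower-order shift term for all $k$ beyond your threshold; you already flag this in your final paragraph, so the argument closes.
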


\begin{lemma}\cite{zhang2019asymptotically}\label{lem4}
For any given positive integer $l$ and $a, b \in \mathbb{R}$, the following results hold
$$
\sum_{l=1}^k \prod_{i=l+1}^k\left(1-\frac{a}{i}\right) \frac{1}{l^{1+b}}=\left\{\begin{array}{l}
O\left(\frac{1}{k^a}\right), a<b \\
O\left(\frac{\ln k}{k^a}\right), a=b \\
O\left(\frac{1}{k^b}\right), a>b.
\end{array}\right.
$$
\end{lemma}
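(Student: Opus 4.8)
\section*{Proof proposal for Lemma~\ref{lem4}}

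The plan is to reduce the weighted sum to an elementary $p$-series estimate by first pinning down the asymptotic size of the product $\prod_{i=l+1}^{k}(1-\tfrac{a}{i})$. Fix an integer $l_0 > |a|$ so that for all $i \ge l_0$ the factors $1-\tfrac{a}{i}$ are positive and $|a/i|<1$. For $l \ge l_0$ and $k \ge l$, I would take logarithms and use the expansion $\ln(1-a/i) = -a/i + O(1/i^2)$ together with the harmonic-sum estimate $\sum_{i=l+1}^{k} \tfrac{1}{i} = \ln(k/l) + O(1/l)$ (from $\sum_{i=1}^{n}\tfrac1i = \ln n + \gamma + O(1/n)$) and $\sum_{i=l+1}^{k}\tfrac{1}{i^2} = O(1/l)$. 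This yields
\[
\sum_{i=l+1}^{k}\ln\!\Big(1-\frac{a}{i}\Big) = -a\,\ln\frac{k}{l} + O\!\Big(\frac1l\Big),
\]
with the $O(1/l)$ term \emph{uniform in $k$} (both error sources are tails bounded independently of $k$). Exponentiating gives the key estimate: there is a constant $C$, independent of $l$ and $k$, such that
\[
\prod_{i=l+1}^{k}\Big(1-\frac{a}{i}\Big) = \Big(\frac{l}{k}\Big)^{a}\,e^{O(1/l)} \le C\,\Big(\frac{l}{k}\Big)^{a}, \qquad l \ge l_0,\ k \ge l.
\]

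Next I would split the sum as $\sum_{l=1}^{k} = \sum_{l=1}^{l_0} + \sum_{l=l_0+1}^{k}$. For each of the finitely many fixed indices $l \le l_0$, the product $\prod_{i=l+1}^{k}(1-\tfrac ai)$ equals a constant (depending on $l$) times $\prod_{i=l_0+1}^{k}(1-\tfrac ai) = \Theta(k^{-a})$, so the head sum is $O(k^{-a})$; one checks this is dominated by the claimed bound in every case, since $k^{-a} = O(k^{-a})$ when $a<b$, $k^{-a} = O(\ln k\cdot k^{-a})$ when $a=b$, and $k^{-a}=O(k^{-b})$ when $a>b$. For the tail, the uniform product bound gives
\[
\sum_{l=l_0+1}^{k}\prod_{i=l+1}^{k}\Big(1-\frac{a}{i}\Big)\frac{1}{l^{1+b}} \le \frac{C}{k^{a}}\sum_{l=l_0+1}^{k} l^{(a-b)-1}.
\]
The remaining sum $\sum_{l} l^{(a-b)-1}$ is then handled by integral comparison according to the sign of $a-b$: it is $O(1)$ when $a-b<0$ (convergent $p$-series), $O(\ln k)$ when $a-b=0$, and $O(k^{a-b})$ when $a-b>0$. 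Multiplying by $C/k^{a}$ reproduces exactly the three stated bounds $O(k^{-a})$, $O(\ln k\,k^{-a})$, and $O(k^{-b})$.

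The main obstacle, and the only nontrivial point, is securing the uniform product estimate: I must verify that the constant $C$ in $\prod_{i=l+1}^{k}(1-\tfrac ai)\le C(l/k)^a$ does not depend on $k$, which rests on the $O(1/l)$ remainder being a genuine tail bound uniform over $k$, so that $e^{O(1/l)}$ stays bounded for all $l\ge l_0$. The secondary technical nuisance is that for $a \ge 1$ the early factors $1-\tfrac ai$ (with $i \le a$) may be negative or vanish; I sidestep this by isolating the fixed head $\sum_{l=1}^{l_0}$, where only finitely many such products appear and each is a bounded constant multiple of $\Theta(k^{-a})$, so no uniform control over the sign pattern is needed. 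Once these two points are in place, the casework is routine $p$-series bookkeeping.
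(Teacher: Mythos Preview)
The paper does not prove this lemma; it is quoted from \cite{zhang2019asymptotically} and invoked as a black box in the proof of Theorem~\ref{thm2}. So there is no in-paper argument to compare against.

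Your proposed argument is correct and is the standard route: the uniform estimate $\prod_{i=l+1}^{k}(1-a/i)=\Theta\big((l/k)^a\big)$ for $l\ge l_0>|a|$, obtained via $\ln(1-a/i)=-a/i+O(1/i^2)$ and the harmonic-sum asymptotic, reduces the tail to $k^{-a}\sum_{l} l^{a-b-1}$, after which the three cases are immediate $p$-series bounds. Your handling of the finite head (where factors may vanish or be negative when $a\ge 1$) is also fine, since each such term is a fixed constant times $\prod_{i=l_0+1}^{k}(1-a/i)=\Theta(k^{-a})$, and $k^{-a}$ is absorbed by the stated bound in every case. The only point worth stressing in a write-up is exactly the one you flag: that the $O(1/l)$ remainder in the log-sum is uniform in $k$, so $e^{O(1/l)}$ is bounded by a constant independent of both $l$ and $k$; you have this right.
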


\begin{lemma}\label{lem1}
If Assumptions A1–A3 hold, then
\[
\|\tilde{\theta}_{k+l} - \tilde{\theta}_k\|=O(b_{k+l}),\ \text{for}\ k,\ l\in \mathbb{N}.
\]

\end{lemma}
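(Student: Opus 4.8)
The plan is to control the increment $\tilde{\theta}_{k+l}-\tilde{\theta}_k$ by a telescoping sum of one-step increments, each of order $b_i$. Since $\tilde{\theta}_j=\hat{\theta}_j-\theta$, the constant $\theta$ cancels, so it suffices to bound $\|\hat{\theta}_{k+l}-\hat{\theta}_k\|$.

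First I would bound a single step. Because $\hat{\theta}_i\in\Theta$ for every $i$ (the projection returns a point of $\Theta$, and $\hat{\theta}_1\in\Theta$ by A1), we may write $\hat{\theta}_i=\Pi_{\Theta}(\hat{\theta}_i)$; then the update rule \eqref{algo2} together with the non-expansiveness of $\Pi_{\Theta}$ (Proposition~\ref{prop1}) gives
\[
\|\hat{\theta}_{i+1}-\hat{\theta}_i\|=\big\|\Pi_{\Theta}(\hat{\theta}_i+b_i\varphi_i\tilde{s}_{i+1})-\Pi_{\Theta}(\hat{\theta}_i)\big\|\le b_i\,\|\varphi_i\|\,|\tilde{s}_{i+1}|.
\]
Next I would show $|\tilde{s}_{i+1}|$ is uniformly bounded: from \eqref{algo1}, $F_i(\cdot)\in[0,1]$ by A2, $p,q\in[0,1)$, $s_{i+1}\in\{0,1\}$, and $|1-(p+q)|\le 1$, so $|\tilde{s}_{i+1}|\le 2\beta$ (a crude bound is enough). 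Combining with $\|\varphi_i\|\le M$ from A3(a) yields $\|\hat{\theta}_{i+1}-\hat{\theta}_i\|\le 2\beta M\, b_i$.

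Then the triangle inequality gives
\[
\|\tilde{\theta}_{k+l}-\tilde{\theta}_k\|=\|\hat{\theta}_{k+l}-\hat{\theta}_k\|\le\sum_{i=k}^{k+l-1}\|\hat{\theta}_{i+1}-\hat{\theta}_i\|\le 2\beta M\sum_{i=k}^{k+l-1}b_i,
\]
and finally, using the step-size regularity $b_i=O(b_{i+1})$ (Assumption A4, and in any case satisfied by the choice $b_k=k^{-\gamma}$ used throughout the paper), each $b_i$ with $k\le i\le k+l-1$ is $O(b_{k+l})$ with a constant depending only on $l$, so the sum is $O(l\,b_{k+l})=O(b_{k+l})$ for fixed $l$, which is the assertion.

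The statement is essentially bookkeeping, so there is no serious obstacle; the only points requiring care are (i) recording that $\hat{\theta}_i\in\Theta$ so that non-expansiveness can be applied against the trivial update $\Pi_{\Theta}(\hat{\theta}_i)=\hat{\theta}_i$, (ii) being honest that the implicit $O(\cdot)$ constant grows with $l$, which is harmless since in every application $l$ ranges over a fixed finite set such as $\{0,\dots,h\}$, and (iii) noting that the bound on $|\tilde{s}_{i+1}|$ needs only that $F_i$ is a genuine distribution function, i.e.\ Assumption~A2, not A5.
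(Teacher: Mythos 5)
Your proposal is correct and follows essentially the same route as the paper's proof: telescope the increment, bound each step by $b_i\|\varphi_i\|\,|\tilde{s}_{i+1}|$ via non-expansiveness of the projection, bound $|\tilde{s}_{i+1}|$ by a constant multiple of $\beta$ using $0\le(1-(p+q))F_i(\cdot)+q\le 1$, and invoke $b_i=O(b_{i+1})$ to absorb the sum into $O(b_{k+l})$. Your side remarks (that A4 is tacitly used even though the statement only lists A1--A3, and that the implied constant grows with $l$) are accurate but do not change the argument.
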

\begin{proof}
First, we have
\bea
\|\tilde{\theta}_{k+l} - \tilde{\theta}_k\| \ts=\ts \|\hat{\theta}_{k+l} - \hat{\theta}_k\| =\left\| \sum_{j=1}^{l} (\hat{\theta}_{k+j} - \hat{\theta}_{k+j-1}) \right\|\nn\\
\ts\leq\ts\sum_{j=1}^{l}\left\|  (\hat{\theta}_{k+j} - \hat{\theta}_{k+j-1}) \right\|\label{1Lem1}.
\eea
Since $0 \leq (1 - (p+q))F_{k}(C - \theta^T\varphi_k) + q \leq 1$, it follows that $|\tilde{s}_k| \leq \beta$. Combining this with Proposition \ref{prop1} and the condition $\|\phi_k\| \leq M$, we obtain $\|\hat{\theta}_{l+1} - \hat{\theta}_l\| \leq b_{l+1}{\|\phi_{l+1} \tilde{s}_{l+1}\|} \leq b_k\beta M$ for $l \geq 1$. This result, together with (\ref{1Lem1}) and Assumption 4, implies the lemma. 
\end{proof}

\noindent{\it Proof of Theorem \ref{thm1}.} By $\tilde{s}_k^2 \leq \beta^2$, Proposition \ref{prop1} and (\ref{algo1}), we have
\bea
\|\tilde{\theta}_{k+1}\|^2 \ts\leq\ts \|\tilde{\theta}_{k}\|^2 + 2b_k\tilde{s}_{k+1}  \phi_k^T \tilde{\theta}_{k} + b_k^2\|\phi_k\|^2 \beta^2\nn\\
\ts\ts=\|\tilde{\theta}_{k}\|^2 + 2b_k\tilde{s}_{k+1}  \phi_k^T \tilde{\theta}_{k} + O(b_k^2).\label{1thm1}
\eea
From (\ref{mean1}) and (\ref{algo2}), it follows that
\bea\label{mean2}
\ts\ts \mathbb{E}[\tilde{s}_{k+1} | \mathcal{F}_{k}]\nn\\
\ts=\ts \beta(1-p-q)^2 \left(F(C - \phi_k^T \hat{\theta}_{k}) - F(C - \phi_k^T \theta)\right).
\eea

This together with Assumption 2 and the differential mean value theorem, it leads to
\bea
\ts\ts\mathbb{E}\left[2b_k\tilde{s}_{k+1}  \phi_k^T \tilde{\theta}_{k}|\mathcal{F}_k\right]=2b_k\phi_k^T \tilde{\theta}_{k}\mathbb{E}[\tilde{s}_{k+1} | \mathcal{F}_{k}]\nn\\
\ts=\ts 2b_k\phi_k^T \tilde{\theta}_{k}\beta(1-p-q)^2 \left(F_k(C - \phi_k^T \hat{\theta}_{k}) - F_k(C - \phi_k^T \theta)\right)\nn\\
\ts=\ts -2b_k\beta(1-p-q)^2 f_k(\xi_k)\tilde{\theta}_{k}^T\phi_k\phi_k^T\tilde{\theta}_{k} \nn\\
\ts\leq\ts -2b_k\beta(1-p-q)^2 \underline{f}\tilde{\theta}_{k}^T\phi_k\phi_k^T\tilde{\theta}_{k}\label{1thm2},
\eea
where $\xi_k$ is in the interval between $C - \phi_k^T \hat{\theta}_{k}$ and $C - \phi_k^T {\theta}$ such that $F_k(C - \phi_k^T \hat{\theta}_{k}) - F_k(C - \phi_k^T \theta)= f_k(\xi_k)\tilde{\theta}_{k}^T\phi_k\phi_k^T\tilde{\theta}_{k}$.
Taking the expectation on both sides of (\ref{1thm1}) and substituting  (\ref{1thm2}) into it, we can obtain
\bea
\ts\ts\mathbb{E}\|\tilde{\theta}_{k+1}\|^2\leq \mathbb{E}\|\tilde{\theta}_{k}\|^2 + 2b_k\mathbb{E}\tilde{s}_{k+1}  \phi_k^T \tilde{\theta}_{k} + O(b_k^2)\nn\\
\ts=\ts\mathbb{E}\|\tilde{\theta}_{k}\|^2 + \mathbb{E}\left[\mathbb{E}\left[2b_k\tilde{s}_{k+1}  \phi_k^T \tilde{\theta}_{k}|\mathcal{F}_k\right]\right] + O(b_k^2).\nn\\
\ts\leq\ts \mathbb{E}\|\tilde{\theta}_{k}\|^2  -2b_k\beta(1-p-q)^2 \underline{f}\mathbb{E}[\tilde{\theta}_{k}^T\phi_k\phi_k^T\tilde{\theta}_{k}]\nn\\
\ts\ts + O(b_k^2)\label{1thm3}.
\eea
By iterating (\ref{1thm3}) \(h\) times and noting \( b_k = O(b_{k+1}) \), we obtain
\bea
\ts\ts\mathbb{E}\|\tilde{\theta}_{k+h}\|^2\nn\\
\ts\leq\ts \mathbb{E}\|\tilde{\theta}_{k}\|^2  -2\beta(1-p-q)^2 \underline{f}\mathbb{E}\left[\sum\limits_{l=k}^{k+h-1}[b_l\tilde{\theta}_{l}^T\phi_l\phi_l^T\tilde{\theta}_{l}]\right]\nn\\
\ts\ts + O(b_{k+h}^2)\nn\\
\ts\leq\ts \mathbb{E}\|\tilde{\theta}_{k}\|^2  -2\beta(1-p-q)^2 \underline{f}\mathbb{E}\left[\sum\limits_{l=k}^{k+h-1}[b_l\tilde{\theta}_{k}^T\phi_l\phi_l^T\tilde{\theta}_{k}]\right]\nn\\
\ts\ts -2\beta(1-p-q)^2 \underline{f}\mathbb{E}\left[\sum\limits_{l=k}^{k+h-1}[b_l(\tilde{\theta}_{l}-\tilde{\theta}_{k})^T\phi_l\phi_l^T(\tilde{\theta}_{l}-\tilde{\theta}_{k})]\right]\nn\\
\ts\ts + O(b_{k+h}^2).\label{1thm4}
\eea
By Lemma \ref{lem1} and (\ref{Asm31}), the last two terms of (\ref{1thm3}) are of order \( O(b_{k+h}^2) \). In addition, 
\bea
\ts\ts \mathbb{E}\left[\sum\limits_{l=k}^{k+h-1}[b_l\tilde{\theta}_{k}^T\phi_l\phi_l^T\tilde{\theta}_{k}]\right]\nn\\
\ts=\ts \mathbb{E}\left[\tilde{\theta}_{k}^T\mathbb{E}\left[\left.\sum\limits_{l=k}^{k+h-1}b_l\phi_l\phi_l^T\right|\mathcal{F}_k\right]\tilde{\theta}_{k}\right].\label{1thm5}
\eea
By Assumption A3, we have
\bea
\ts\ts\mathbb{E}\left[\left.\sum\limits_{l=k}^{k+h-1}b_l\phi_l\phi_l^T\right|\mathcal{F}_k\right]\nn\\
\ts=\ts\sum\limits_{l=k}^{k+h-1}b_l\frac{1}{h}\mathbb{E}\left[\left.\sum\limits_{l=k}^{k+h-1}\phi_l\phi_l^T\right|\mathcal{F}_k\right]+O(b_{k+h}^2)\nn\\
\ts\geq\ts \delta \sum\limits_{l=k}^{k+h-1}b_lI+O(b_{k+h}^2).\label{1thm6}
\eea
Substituting (\ref{1thm5}) and (\ref{1thm6}) into  (\ref{1thm4}) yields
\bea
\ts\ts\mathbb{E}\|\tilde{\theta}_{k+h}\|^2 \label{1thm7}\\
\ts\leq\ts \mathbb{E}\|\tilde{\theta}_{k}\|^2  -2\beta(1-p-q)^2 \underline{f}\delta\sum\limits_{l=k}^{k+h-1}b_l\mathbb{E}\|\tilde{\theta}_{k}\|^2
 + O(b_{k+h}^2)\nn\\
 \ts=\ts \left(1-2\beta(1-p-q)^2 \underline{f}\delta\sum\limits_{l=k}^{k+h-1}b_l\right)\mathbb{E}\|\tilde{\theta}_{k}\|^2
 + O(b_{k+h}^2)\nn.
\eea

Then, based on Lemma \ref{lem2} and Assumption A4, and noting
$
\sum_{k=1}^{\infty} b_k = \infty$ and $\lim_{k \to \infty} \frac{b_k^2}{\sum_{l=k-h}^{k-1} b_l + 1} = 0,
$ it follows that
$
 \lim_{k \to \infty} \mathbb{E}[\|\tilde{\theta}_k\|^2] = 0.
$

On the other hand, by (\ref{1thm3}) we have
$
\mathbb{E}[\|\tilde{\theta}_{k+1}\|^2 | \mathcal{F}_{k}] \leq \|\tilde{\theta}_{k}\|^2 + O(b_k^2),
$
which together with \cite[Lemma 1.2.2]{chen2021stochastic} and $\sum\limits_{k=1}^{\infty}b_k^2<\infty$ implies that $\|\tilde{\theta}_k\|$ converges to a bounded limit a.s. Notice that $\lim_{k \to \infty} \mathbb{E}[\tilde{\theta}_k^T \tilde{\theta}_k] = 0$. Then, there is a subsequence of $\tilde{\theta}_k$ that converges almost surely to 0. Consequently, $\tilde{\theta}_k$ almost surely converges to 0.\hfill$\square$

\noindent{\it Proof of Theorem \ref{thm2}.}
When \( b_k = \frac{1}{k^\gamma} \), \( \gamma \in (1/2, 1) \), letting $\alpha=2\beta(1-p-q)^2 \underline{f}\delta$ and based on (\ref{1thm7}), we have
\bea
\ts\ts \mathbb{E}\|\tilde{\theta}_k\|^2 \leq \left( 1 - \alpha \sum_{l=k-h}^{k-1} \frac{1}{(l + 1)^\gamma}\right) \mathbb{E}\|\tilde{\theta}_k\|^2 + O\left( \frac{1}{k^{2\gamma}} \right),\nn\\
\ts\leq\ts \left( 1 -  \frac{\alpha h}{k^\gamma}\right)\mathbb{E}\|\tilde{\theta}_{k-h}\|^2 + O\left( \frac{1}{k^{2\gamma}} \right)\nn\\
\ts\leq\ts \prod_{l=0}^{\left\lfloor \frac{k-K}{h} \right\rfloor - 1} \left( 1 - \frac{\alpha h}{(k - lh)^\gamma} \right) \mathbb{E}\left\|\tilde{\theta}_{k-\left\lfloor \frac{k-K}{h} \right\rfloor h}\right\|^2 \nn\\
\ts\ts +\sum_{l=1}^{\left\lfloor\frac{k-K}{h}\right\rfloor} \prod_{j=0}^{l-1}\left(1-\frac{\alpha h}{(k-j h)^\gamma}\right) O\left(\frac{1}{(k-l h)^{2 \gamma}}\right) \nn\\
\ts \leq\ts \prod_{l=\left\lceil\frac{K}{h}\right\rceil+\kappa+1}^{\left\lceil\frac{k}{h}\right\rceil}\left(1-\frac{\alpha h}{(l h)^\gamma}\right)\mathbb{E}\left\|\tilde{\theta}_{k-\left\lfloor \frac{k-K}{h} \right\rfloor h}\right\|^2 \nn\\
\ts\ts +\sum_{l=\left\lceil\frac{K}{h}\right\rceil+1}^{\left\lfloor\frac{k}{h}\right\rfloor-1} \prod_{j=\left\lceil\frac{K}{h}\right\rceil+\kappa+l+1}^{\left\lceil\frac{k}{h}\right\rceil}\left(1-\frac{\alpha h}{(j h)^\gamma}\right) O\left(\frac{1}{(l h)^{2 \gamma}}\right) \nn\\
\ts \leq\ts \prod_{l=\left\lceil\frac{K}{h}\right\rceil+\kappa+1}^{\left\lceil\frac{k}{h}\right\rceil}\left(1-\frac{\alpha h^{1-\gamma}}{l^\gamma}\right)\mathbb{E}\left\|\tilde{\theta}_{k-\left\lfloor \frac{k-K}{h} \right\rfloor h}\right\|^2\nn \\
\ts\ts +\sum_{l=\left\lceil\frac{K}{h}\right\rceil+1}^{\left\lfloor\frac{k}{h}\right\rfloor-1} \prod_{q=\left\lceil\frac{K}{h}\right\rceil+\kappa+l+1}^{\left\lceil\frac{k}{h}\right\rceil}\left(1-\frac{\alpha h^{1-\gamma}}{j^\gamma}\right) O\left(\frac{1}{l^{2 \gamma}}\right),\nn
\eea
where $\kappa=\left\lceil\frac{k-K}{h}\right\rceil-\left\lfloor\frac{k-K}{h}\right\rfloor$. This together with Lemma \ref{lem3} yields $\mathbb{E}\|\tilde{\theta}_k\|^2=O\left(\frac{1}{k^\gamma}\right)$.

When $b_k=\frac{1}{k}$, letting $\alpha=2\beta(1-p-q)^2 \underline{f}\delta$ and by (\ref{1thm7}),
$$
\begin{aligned}
&\mathbb{E}\|\tilde{\theta}_k\|^2 \leq  \left(1-\alpha \sum_{l=k-h}^{k-1} \frac{1}{l+1}\right)\mathbb{E}\|\tilde{\theta}_{k-h}\|^2+O\left(\frac{1}{k^2}\right),\\
\leq & \left(1-\frac{\alpha h}{k}\right)\mathbb{E}\|\tilde{\theta}_{k-h}\|^2+O\left(\frac{1}{k^2}\right) \\
\leq & \prod_{l=0}^{\left\lfloor\frac{k-K}{h}\right\rfloor-1}\left(1-\frac{\alpha h}{k-l h}\right)\mathbb{E}\|\tilde{\theta}_{k-\left\lfloor\frac{k-K}{h}\right\rfloor h \|}\|^2\\
& +\sum_{l=1}^{\left\lfloor\frac{k-K}{h}\right\rfloor} \prod_{q=0}^{l-1}\left(1-\frac{\alpha h}{k-q h}\right) O\left(\frac{1}{(k-l h)^2}\right)\\
\leq & \prod_{l=\left\lceil\frac{k}{h}\right\rceil+\kappa+1}^{\left\lceil\frac{k}{h}\right\rceil}\left(1-\frac{\alpha h}{l h}\right)\mathbb{E}\|\tilde{\theta}_{k-\left\lfloor\frac{k-K}{h}\right\rfloor h \|}\|^2 \\
& +\sum_{l=\left\lceil\frac{K}{h}\right\rceil+1}^{\left\lfloor\frac{k}{h}\right\rfloor-1} \prod_{q=\left\lceil\frac{K}{h}\right\rceil+\kappa+l+1}^{\left\lceil\frac{k}{h}\right\rceil}\left(1-\frac{\alpha h}{q h}\right) O\left(\frac{1}{(l h)^2}\right) 
\end{aligned}
$$
$$
\begin{aligned}
\leq & \prod_{l=}^{\left\lceil\frac{K}{h}\right\rceil+\kappa+1}\left(1-\frac{\alpha}{l}\right)\mathbb{E}\|\tilde{\theta}_{k-\left\lfloor\frac{k-K}{h}\right\rfloor h \|}\|^2 \\
& +\sum_{l=\left\lceil\frac{K}{h}\right\rceil+1}^{\left\lfloor\frac{k}{h}\right\rfloor-1} \sum_{q=\left\lceil\frac{K}{h}\right\rceil+\kappa+l+1}^{\left\lceil\frac{k}{h}\right\rceil}\left(1-\frac{\alpha}{q}\right) O\left(\frac{1}{l^2}\right),
\end{aligned}
$$
where $\kappa=\left\lceil\frac{k-K}{h}\right\rceil-\left\lfloor\frac{k-K}{h}\right\rfloor$. Since $\beta>\frac{1}{2(1-p-q)^2\underline{f}\delta}$, i.e. $\alpha>1$. Thus, by Lemma \ref{lem4}, we have  $\mathbb{E}\left\|\tilde{\theta}_k\right\|^2=O\left(\frac{1}{k}\right)$. This completes this part's proof. \hfill$\square$

\section{Proofs of Theorem  \ref{thm:estimation}-\ref{thm:tracking} }
For convenience, we introduce the notation
\bea
    \varepsilon_{k+1} \ts=\ts (1-(p+q))F_k(C - {\theta}^T \varphi_k) + q - s_{k+1},\label{vareps} \\
    \psi_k \ts=\ts (1-(p+q))\nn\\
    \ts\ts\left(F_{k+1}(C - \hat{\theta}_k^T \varphi_k) - F_{k+1}(C - \theta^T \varphi_k)\right).\label{psi}
\eea
We then give the following lemma.
\begin{lemma}\label{lem}
Let Assumptions A1, A2, A3(a), and A5 be satisfied. Then the parameter estimate given by Algorithm \ref{algorithm3} has the following property as \( n \to \infty \):
\begin{equation}\label{error}
    \tilde{\theta}_{n+1}^T P_{n+1}^{-1} \tilde{\theta}_{n+1} + {\beta_n^2} \sum_{k=0}^{n} \left( \tilde{\theta}_k^T \varphi_k \right)^2 = O \left( \log |P_{n+1}^{-1}| \right), \quad \text{a.s.}
\end{equation}
\end{lemma}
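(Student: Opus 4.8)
The plan is to analyze the evolution of the stochastic Lyapunov function $V_k = \tilde{\theta}_k^T P_k^{-1} \tilde{\theta}_k$ along the recursion in Algorithm~\ref{algorithm3}, following the classical least-squares convergence argument of \cite{lai1982least} adapted to the binary-tampering setting. First I would use the matrix inversion identity $P_{k+1}^{-1} = P_k^{-1} + \beta_k^2 \varphi_k \varphi_k^T$ and the non-expansiveness of the weighted projection $\Pi_{P_{k+1}^{-1}}$ (with respect to the $\|\cdot\|_{P_{k+1}^{-1}}$ norm, since the true parameter $\theta$ lies in $\Theta$) to drop the projection and obtain an inequality of the form
\begin{align}
V_{k+1} \leq{}& V_k - 2\beta_k \tilde{\theta}_k^T \varphi_k \psi_k + \beta_k^2 (\tilde{\theta}_k^T \varphi_k)^2 \nonumber\\
&{}- 2\beta_k \varphi_k^T \tilde{\theta}_k \varepsilon_{k+1} + a_k \beta_k^2 \varphi_k^T P_k \varphi_k \varepsilon_{k+1}^2 + a_k \beta_k^2 \varphi_k^T P_k \varphi_k,\nonumber
\end{align}
where $\varepsilon_{k+1}$ and $\psi_k$ are as defined in \eqref{vareps}--\eqref{psi}, and the cross terms are organized by expanding $\tilde{s}_{k+1} = \psi_k + \varepsilon_{k+1}$ together with $\hat\theta_{k+1} - \hat\theta_k = a_k\beta_k P_k\varphi_k\tilde s_{k+1}$ and $\varphi_k^T P_{k+1}\varphi_k = a_k \varphi_k^T P_k \varphi_k$.

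The next key step is to control the "signal" term $-2\beta_k \tilde\theta_k^T\varphi_k\psi_k + \beta_k^2(\tilde\theta_k^T\varphi_k)^2$. By the mean value theorem, $\psi_k = (1-(p+q)) f_{k+1}(\xi_k)(\tilde\theta_k^T\varphi_k)$ for some intermediate point $\xi_k$ with $|\xi_k| \le LM + C$, so $\tilde\theta_k^T\varphi_k\,\psi_k = (1-(p+q))f_{k+1}(\xi_k)(\tilde\theta_k^T\varphi_k)^2$. By the choice of $\beta_k$ in Algorithm~\ref{algorithm3} we have $|\beta_k| \le |1-(p+q)| f_{k+1}(\xi_k)$ and $\operatorname{sign}\beta_k = \operatorname{sign}(1-(p+q))$, hence $2\beta_k\tilde\theta_k^T\varphi_k\psi_k \ge 2\beta_k^2(\tilde\theta_k^T\varphi_k)^2$, which makes $-2\beta_k\tilde\theta_k^T\varphi_k\psi_k + \beta_k^2(\tilde\theta_k^T\varphi_k)^2 \le -\beta_k^2(\tilde\theta_k^T\varphi_k)^2$. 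Summing the telescoped inequality from $0$ to $n$ then gives
\[
V_{n+1} + \sum_{k=0}^n \beta_k^2(\tilde\theta_k^T\varphi_k)^2 \le V_0 - 2\sum_{k=0}^n \beta_k\varphi_k^T\tilde\theta_k\varepsilon_{k+1} + \sum_{k=0}^n a_k\beta_k^2\varphi_k^T P_k\varphi_k(1 + \varepsilon_{k+1}^2).
\]

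For the remaining two sums on the right, I would invoke: (i) from \eqref{mean1} that $\{\varepsilon_k,\mathcal F_k\}$ is a martingale difference sequence with bounded conditional second moment (since $\varepsilon_{k+1}$ is a bounded perturbation of $s_{k+1}$), so by a standard martingale convergence/estimation lemma (cf.\ \cite{lai1982least} or \cite[Lemma 1.2.2]{chen2021stochastic}) the noise sum $\sum \beta_k \varphi_k^T\tilde\theta_k\varepsilon_{k+1}$ is, up to an $O(1)$ term, $o\big(\sum_{k=0}^n\beta_k^2(\tilde\theta_k^T\varphi_k)^2\big)$ almost surely and can be absorbed into the left-hand side; and (ii) the deterministic-analysis bound $\sum_{k=0}^n a_k\beta_k^2\varphi_k^T P_k\varphi_k = \sum_{k=0}^n \beta_k^2\varphi_k^T P_{k+1}\varphi_k = O(\log|P_{n+1}^{-1}|)$, which follows from the standard trace/determinant identity $\operatorname{tr}(P_{k+1}^{-1}(P_k - P_{k+1})) \le \log(|P_k^{-1}|^{-1}|P_{k+1}^{-1}|)$ summed telescopically, together with $\varepsilon_{k+1}^2$ being uniformly bounded. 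Combining these yields \eqref{error}. I expect the main obstacle to be the careful handling of the noise cross term: one must ensure the conditioning is set up so $\varepsilon_{k+1}$ is genuinely a martingale difference relative to $\mathcal{F}_k$ (using that $\varphi_k$, $\hat\theta_k$, $\beta_k$, $a_k$, $P_k$ are all $\mathcal F_k$-measurable), and then apply the stochastic-approximation lemma in a form that gives the $o(\cdot)$ bound relative to the accumulated signal energy rather than a crude $O(\cdot)$ — this is where the argument is most delicate and where I would lean on the technique of \cite{lai1982least} / \cite{guo2020time}.
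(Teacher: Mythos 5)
Your proposal follows essentially the same route as the paper's proof: the same Lyapunov function $V_k=\tilde{\theta}_k^T P_k^{-1}\tilde{\theta}_k$, the same use of non-expansiveness and the rank-one update of $P_{k+1}^{-1}$, the same mean-value-theorem step exploiting the definition of $\beta_k$ to get $2\beta_k\tilde{\theta}_k^T\varphi_k\psi_k\geq 2\beta_k^2(\tilde{\theta}_k^T\varphi_k)^2$, and the same splitting into a martingale part absorbed as $o\bigl(\sum_k\beta_k^2(\tilde{\theta}_k^T\varphi_k)^2\bigr)+O(1)$ and a trace/determinant part of order $O(\log|P_{n+1}^{-1}|)$. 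The only slip is that your displayed one-step inequality omits the cross term $2a_k\beta_k^2\psi_k\varphi_k^T P_k\varphi_k\,\varepsilon_{k+1}$ arising from $\tilde{s}_{k+1}^2=(\psi_k+\varepsilon_{k+1})^2$, but since it is also an $\mathcal{F}_k$-measurable multiple of the martingale difference $\varepsilon_{k+1}$ it is handled exactly like the other noise terms and does not affect the argument.
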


\begin{proof}
Recall \( \tilde{\theta}_k = \theta - \hat{\theta}_k \) and define the stochastic Lyapunov function:
\[
V_k = \tilde{\theta}_k^T P_k^{-1} \tilde{\theta}_k.
\] By (\ref{non_expansive}) and noting $a_k P_{k+1}^{-1} P_k \varphi_k =  \varphi_k$, $P^{-1}_{k+1} = P^{-1}_k + \beta_k^2 \varphi_k \varphi_k^T$ and $|\psi_k|\leq 1$,
we get
\bea
    V_{k+1} \ts\leq\ts  \tilde{\theta}_k^T P_k^{-1} \tilde{\theta}_k {- 2  \beta_k \tilde{\theta}_k^T\varphi_k \psi_k + \beta_k^2 (\tilde{\theta}_k^T \varphi_k)^2 } \nn \\
    \ts\ts + { 2 a_k \beta_k^2 \psi_k \varphi_k^T P_k \varphi_k \varepsilon_{k+1} - 2  \beta_k \tilde{\theta}_k^{uT}\varphi_k  \varepsilon_{k+1} } \nn \\
    \ts\ts + { a_k \beta_k^2 \varphi_k^T P_k \varphi_k + a_k \beta_k^2 \varphi_k^T P_k  \varphi_k \varepsilon_{k+1}^2 }\label{5thm1}.
\eea
By the definition of  $\beta_k$  in Algorithm \ref{algorithm3}, (\ref{psi}), and the Mean Value Theorem, it follows
$$
2 \beta_k \tilde{\theta}_k^T \varphi_k \psi_k = 2 \beta_k (\tilde{\theta}_k^T \varphi_k)^2 f_k(\xi_k)(1 - (p + q)) \geq 2\beta_k^2 (\tilde{\theta}_k^T \varphi_k)^2
$$
where \(\xi_k\) lies between \(C - \theta^T \varphi_k\) and \(C - \hat{\theta}_k^T \varphi_k\). Then by summing up both sides of (\ref{5thm1}), it becomes
\bea
    V_{n+1} \ts\leq\ts  V_0- \sum_{k=0}^{n}\beta_k^2 (\tilde{\theta}_k^T \varphi_k)^2\nn \\
    \ts\ts +  \underbrace{ 2\sum_{k=0}^{n} a_k \beta_k^2 \psi_k \varphi_k^T P_k \varphi_k \varepsilon_{k+1} - 2\sum_{k=0}^{n}  \beta_k   \tilde{\theta}_k^{uT} \varphi_k\varepsilon_{k+1} }_{\text{I}} \nn \\
    \ts\ts + \underbrace{ \sum_{k=0}^{n} a_k \beta_k^2 \varphi_k^T P_k\varphi_k +\sum_{k=0}^{n}  a_k \beta_k^2\varphi_k^T P_k \varphi_k \varepsilon_{k+1}^2 }_{\text{II}}\label{5thm2}.
\eea
From (\ref{vareps}) and (\ref{mean1}), it follows
$
\mathbb{E}(\varepsilon_{k+1} \mid \mathcal{F}_k) = 0,
$
which means $\{\varepsilon_k, \mathcal{F}_k\}$ is a martingale difference sequence.
Similar to the proof in \cite{guo2020time} and noting Assumption A3(a) \( \sup_{k \geq 1} \|\varphi_k\| \leq M < \infty \), we have
\bea
{\rm I}\ts=\ts o\left(\sum_{k=0}^{n}\beta_k^2 (\tilde{\theta}_k^T \varphi_k)^2\right)+O(1)\nn\\
{\rm II}\ts=\ts O \left( \log |P_{n+1}^{-1}| \right).\label{5thm3}
\eea
Combining all terms, we obtain
\[
    \tilde{\theta}_{n+1}^T P_{n+1}^{-1} \tilde{\theta}_{n+1} + \sum_{k=0}^{n} \beta_k^2 (\tilde{\theta}_k^T \varphi_k)^2 = O(\log |P_{n+1}^{-1}|), \quad \text{a.s.}
\]
Finally, since \( \{\beta_k\} \) is a non-increasing sequence, we conclude (\ref{error}).
\end{proof}
\noindent{\it Proof of Theorem \ref{thm:estimation} and \ref{thm:regret}}. Directly obtained from Lemma \ref{lem}.

\noindent{\it Proof of Theorem \ref{thm:tracking}}. By Lemma \ref{lem}, the proof follows similarly to \cite[Theorem 3]{zhang2022identification}.


\end{document}